\newtheorem{theorem}{Theorem} [section]
\newtheorem{lemma}[theorem]{Lemma} 
\newtheorem{corollary}[theorem]{Corollary} 
\newtheorem{proposition}[theorem]{Proposition} 
\theoremstyle{definition}
\newtheorem{example}[theorem]{Example} 
\newtheorem{definition}[theorem]{Definition} 
\theoremstyle{remark}
\newtheorem{remark}[theorem]{Remark} 
\newcommand{\sub}[2]{#1[#2]}
\newcommand{\arrow}{\rightarrow}        
\newcommand{\trarrow}{\stackrel{*}{\rightarrow}}        
\newcommand{\limp}{\multimap} 
\newcommand{\bang}{\oc} 
\newcommand{\hyp}[3]{#1:(#2, #3)}
\newcommand{\letm}[3]{{\sf let} \ ! #1 = #2 \ {\sf in} \ #3}    
\newcommand{\tertype}{{\bf 1}}
\newcommand{\behtype}{{\bf B}}
\newcommand{\pair}[2]{\langle #1 , #2 \rangle} 
\newcommand{\s}[1]{{\sf #1}}    
\newcommand{\w}[1]{{\it #1}}    
\newcommand{\qqs}[2]{\forall\, #1\;\: #2} 
\newcommand{\vc}[1]{{\bf #1}} 
\newcommand{\set}[1]{\{#1\}}
\newcommand{\st}[2]{{\sf set}(#1,#2)}
\newcommand{\rgtype}[2]{{\it {\sf Reg}_{#1} #2}}
\newcommand{\get}[1]{{\sf get}(#1)}
\newcommand{\store}[2]{(#1 \leftarrow #2)}
\newcommand{\regtype}[2]{{\sf Reg}_{#1} #2}
\newcommand{\letb}[3]{\mathsf{let}\;\oc #1 = #2\;\mathsf{in}\;#3}
\newcommand{\lambdab}[2]{\lambda^{\oc} #1.#2}
\newcommand{\ftrad}[1]{(#1)^{-}}
\newcommand{\fst}[1]{\textsf{fst}\;#1}
\newcommand{\snd}[1]{\textsf{snd}\;#1}
\newcommand{\lambdal}[2]{\lambda #1^{#2}}
\newcommand{\coerc}[1]{\mathcal{C}(#1)}
\newcommand{\nat}{\mathsf{N}}
\newcommand{\reduc}{\rightarrow}
\newcommand{\treduc}{\stackrel{*}{\rightarrow}}
\newcommand{\para}{\mid}
\newcommand{\churchn}[1]{\overline{#1}}
\newcommand{\letbang}{\mathsf{let}\,\oc}
\newcommand{\occ}{\omega}
\newcommand{\listtype}[1]{\mathsf{List}\:#1}
\newcommand{\lit}{\mathsf{list\_it}}
\newcommand{\churchl}[1]{[#1]}
\newcommand{\add}{\mathsf{add}}
\newcommand{\mult}{\mathsf{mult}}
\newcommand{\iit}{\mathsf{int\_it}}
\newcommand{\git}{\mathsf{int\_git}}
\newcommand{\size}[1]{|#1|}
\newcommand{\zero}{\mathsf{zero}}
\newcommand{\succe}{\mathsf{succ}}
\newcommand{\lambdaf}{\lambda^{\oc}}
\newcommand{\lambdar}{\lambda^{\oc\mathsf{R}}}   
\newcommand{\lambdafd}{\lambda^{\oc}_{\delta}}
\newcommand{\lambdard}{\lambda^{\oc\mathsf{R}}_{\delta}}   
\newcommand{\lambdaft}{\lambda^{\oc}_{EA}}
\newcommand{\lambdart}{\lambda^{\oc\mathsf{R}}_{EA}}   
\newcommand{\fv}[1]{\mathsf{FV}(#1)}
\newcommand{\fo}[2]{\mathsf{FO}(#1,#2)}
\newcommand{\ie}{\emph{i.e.}\;}
\begin{document} 
 
\title{An Elementary Affine $\lambda$-calculus\\with Multithreading and
  Side Effects\thanks{Work partially supported by project
  ANR-08-BLANC-0211-01 ``COMPLICE'' and the Future and Emerging Technologies (FET)
programme within the Seventh Framework Programme for Research of the
European Commission, under FET-Open grant number: 243881 (project
CerCo).}}

\author{\textsc{Antoine Madet} \and \textsc{Roberto M. Amadio} \and \\
    Laboratoire PPS, Universit\'e Paris Diderot\and
  {\small \url{{madet,amadio}@pps.jussieu.fr}}}

\date{}
\maketitle 

\vspace{7cm}
\begin{abstract}
  Linear logic provides a framework to control the complexity of
higher-order functional programs. We present an extension of this
framework to programs with multithreading and side effects focusing on
the case of elementary time.  Our main contributions are as follows.
First, we provide a new combinatorial proof of termination in
elementary time for the functional case.  Second, we develop an
extension of the approach to a call-by-value $\lambda$-calculus with
multithreading and side effects. Third, we introduce an elementary affine type system
  that guarantees the standard subject reduction and progress
  properties. Finally, we illustrate the
programming of iterative functions with side effects in the
presented formalism.
\end{abstract}

\newpage
\tableofcontents
\newpage

\section{Introduction}
There is a well explored framework based on Linear Logic to control
the complexity of higher-order functional programs. In particular,
\emph{light logics}~\cite{LLL,Danos,Asperti02} have led to a
polynomial light affine
$\lambda$-calculus~\cite{DBLP:journals/aml/Terui07} and to various
type systems for the standard $\lambda$-calculus guaranteeing that a
well-typed term has a bounded
complexity~\cite{CoppolaMartini2006,LMCS2008,Baillot05}.  Recently,
this framework has been extended to a higher-order
process calculus~\cite{LagoExpress} and a functional language with
recursive definitions~\cite{BaillotGM10}. In another direction,
the notion of \emph{stratified region}~\cite{Boudol10,Amadio09} has
been used to prove the termination of higher-order multithreaded
programs with side effects.

Our general goal is to extend the framework of light logics to a
higher-order functional language with multithreading and side effects
by focusing on the case of elementary
time~\cite{Danos}. The key point is that termination does not rely
anymore on stratification but on the notion of depth which is standard in light
logics.  Indeed, light logics suggest that complexity can be tamed through a
fine analysis of the way the depth of the occurrences of a
$\lambda$-term can vary during reduction.

Our core functional calculus is a $\lambda$-calculus extended
with a constructor $`\oc$' (the modal operator of linear logic)
marking duplicable terms and a related $\letbang$ destructor.  The
depth of an occurrence in a $\lambda$-term is the number of $\oc's$
that must be crossed to reach the occurrence. Our contribution
can be described as follows.

\begin{enumerate}
\item In Section~\ref{well-formed-pure-sec} we propose a formal system
  called {\em depth system} that controls the depth of the occurrences
  and which is a variant of a system proposed
  in~\cite{DBLP:journals/aml/Terui07}. We show that terms
  well-formed in the depth system are guaranteed to terminate in
  elementary time under an arbitrary reduction strategy.  The proof is
  based on an original combinatorial analysis of the depth system
  (\cite{Danos} assumes a specific reduction strategy
  while~\cite{DBLP:journals/aml/Terui07} relies on a standardization
  theorem).

\item In Section~\ref{modal-sec}, following previous work on an
  affine-intuitionistic system~\cite{air-hal}, we extend the
  functional core with parallel composition and operations producing
  side effects on an `abstract' notion of state. We analyse the impact
  of side-effects operations on the depth of the occurrences and
  deduce an extended depth system. We show
  that it still guarantees termination of programs in elementary time under a
  natural call-by-value evaluation strategy.

\item In Section~\ref{type-system-sec}, we refine the depth system
  with a second order (polymorphic) elementary affine type system and
  show that the resulting system enjoys subject reduction and progress
  (besides termination in elementary time).

\item Finally, in
Section~\ref{sec-expressivity}, we discuss the expressivity of the
resulting type system. On the one hand we check that the usual
encoding of elementary functions goes through.  On the other hand, and
more interestingly, we provide examples of iterative (multithreaded)
programs with side effects.
\end{enumerate}

  The $\lambda$-calculi introduced are
summarized in Table \ref{summary-table}.  For each concurrent language
there is a corresponding functional fragment and each language
(functional or concurrent) refines the one on its left hand side.  The
elementary complexity bounds are obtained for the $\lambdafd$ and
$\lambdard$ calculi while the progress property and the expressivity
results refer to their typed refinements $\lambdaft$ and $\lambdart$,
respectively.  Proofs are available in Appendix \ref{proofs-sec}.
\begin{table}[h]
\[
\begin{array}{|c|lllll|}

\hline
\ \mbox{Functional} & \lambdaf &\supset &\lambdafd &\supset &\lambdaft   \ \\
      \cap      &&&&& \\
\ \mbox{Concurrent} &\lambdar  &\supset &\lambdard &\supset &\lambdart   \ \\\hline

\end{array}
\]
\caption{Overview of the $\lambda$-calculi considered}\label{summary-table}
\end{table}


\section{Elementary Time in a Modal $\lambda$-calculus}
\label{well-formed-pure-sec}
In this section, we present our core functional calculus, a related
depth system, and show that every term which is well-formed in the
depth system terminates in elementary time under an arbitrary reduction
strategy.

\subsection{A Modal $\lambda$-calculus}
We introduce a modal $\lambda$-calculus called $\lambdaf$. It is very
close to the \emph{light affine $\lambda$-calculus} of
Terui~\cite{DBLP:journals/aml/Terui07} where the paragraph modality
`$\mathsection$' used for polynomial time is dropped and where the
`$\oc$' modality is relaxed as in elementary linear
logic~\cite{Danos}.

\subsubsection{Syntax}
Terms are described by the grammar in Table~\ref{grammar}:
\begin{table}[h]
$$
\begin{array}{c}
M, N ::= x, y, z\ldots \mid \lambda x.M \mid MN \mid \bang M \mid
\letb{x}{N}{M}
\end{array}
$$
\caption{Syntax of $\lambdaf$}
\label{grammar}
\end{table}
We find the usual set of variables, $\lambda$-abstraction and
application, plus a modal operator `$\oc$' (read \emph{bang}) and a
$\letbang$ operator. 
We define $\bang^0 M = M$ and $\bang^{n+1} M = \bang (\bang^{n}M)$. In
the terms $\lambda x.M$ and $\letb{x}{N}{M}$ the occurrences of $x$ in
$M$ are bound. The set of free variables of $M$ is denoted by
$\fv{M}$. The number of free occurrences of $x$ in $M$ is denoted by
$\fo{x}{M}$. $\sub{M}{N/x}$ denotes the term $M$ in which each free
occurrence of $x$ has been substituted by the term $N$.

Each term has an \emph{abstract syntax tree} 
as exemplified in Figure~2.\ref{fsyntax-tree}.
\begin{figure}[h]
  \centering
    \subfigure[]{\label{fsyntax-tree}
    \begin{tikzpicture}[font=\normalsize,level distance=0.75cm]
      \node {$\lambda x$}
      child {node {$\letbang y$}
        child {node {$x$}}
        child {node {$\oc$}
          child {node {$@$}
            child {node {$y$}}
            child {node {$y$}}
          }
        }
      };
    \end{tikzpicture}}
  \qquad
    \subfigure[]{\label{faddress-tree}
    \begin{tikzpicture}[font=\normalsize,level distance=0.75cm]
      \node {$\epsilon$}
      child {node {$0$}
        child {node {$00$}}
        child {node {$01$}
          child {node {$010$}
            child {node {$0100$}}
            child {node {$0101$}}
          }
        }
      };
    \end{tikzpicture}}
  \qquad
    \subfigure[]{\label{fdepth-tree}
    \begin{tikzpicture}[font=\normalsize,level distance=0.75cm]
      \node {$0$}
      child {node {$0$}
        child {node {$0$}}
        child {node {$0$}
          child {node {$1$}
            child {node {$1$}}
            child {node {$1$}}
          }
        }
      };
    \end{tikzpicture}}
  \caption{Syntax tree of the term $\lambda x.\letb{y}{x}{\oc
      (yy)}$, addresses and depths}
  \label{ftrees}
\end{figure}
A path starting from the root to a node of the tree denotes an
\emph{occurrence} of the program that is denoted by
 a word $w \in \set{0,1}^{*}$ (see Figure~2.\ref{faddress-tree}).

We define the notion of $\emph{depth}$:
\begin{definition}[depth]\label{depth-functional}
  The \emph{depth} $d(w)$ of an occurrence $w$ is the number of $\oc$'s
  that the path leading to $w$ crosses.  The depth $d(M)$ of
  a term $M$ is the maximum depth of its occurrences.
\end{definition}
In Figure~2.\ref{fdepth-tree}, each occurrence is labelled with its
depth. Thus $d(\lambda x.\letb{y}{x}{\oc (yy)})=1$.  In
particular, the occurrence $01$ is at depth $0$; what
matters in computing the depth of an occurrence is the number of $\oc$
that precedes strictly the occurrence.

\subsubsection{Operational Semantics}
We consider an arbitrary reduction strategy. Hence, an evaluation
context $E$ can be any term with exactly one occurrence of a special
variable $[~]$, the `hole'. $E[M]$ denotes $E$ where the hole has been
substituted by $M$. The reduction rules are given in
Table~\ref{semantics-pure}.
\begin{table}[h]
  $$
  \begin{array}{r c l}
    E[(\lambda x.M)N] &\arrow& E[\sub{M}{N/x}]\\
    E[\letm{x}{\oc N}{M}] &\arrow& E[\sub{M}{N/x}]
  \end{array}
  $$
  \caption{Operational semantics of $\lambdaf$}
\label{semantics-pure}
\end{table}
The $\letbang$ is `filtering' modal terms and `destructs' the bang of
the term $\oc
N$  after substitution. In the sequel, $\treduc$ denotes the
reflexive and transitive closure of $\reduc$.

\subsection{Depth System}
\label{subsec-depth-sys}
By considering that deeper occurrences have less weight than shallow
ones, the proof of termination in elementary time~\cite{Danos} relies
on the observation that when reducing a redex at depth $i$ the
following holds:
\begin{enumerate}[\quad (1)]
\item \label{fact1} the depth of the term does not increase,
\item \label{fact2} the number of occurrences at depth $j < i$ does not increase,
\item \label{fact3} the number of occurrences at depth $i$ strictly decreases,
\item \label{fact4} the number of occurrences at depth $j > i$ may be
  increased by a multiplicative factor $k$ bounded by the number of
  occurrences at depth $i+1$.
\end{enumerate}

Theses properties can be guaranteed by the following requirements:
\begin{enumerate}[(i)]
\item in $\lambda x.M$, $x$ may occur at
most once in $M$ and at depth $0$, 
\item in $\letb{x}{M}{N}$, $x$ may
occur arbitrarily many times in $N$ and at depth $1$.
\end{enumerate}

Hence, the rest of this section is devoted to the introduction of a
set of inferences rules called depth system. Every term which is valid
in the depth system will terminate in elementary time. First, we
introduce the judgement:
$$
\Gamma \vdash^\delta M
$$
where $\delta$ is a natural number and the context $\Gamma$ is of the
form $x_1:\delta_1,\ldots,x_n:\delta_n$.  We write $dom(\Gamma)$ for
the set $\set{x_1,\ldots,x_n}$. It should be interpreted as
follows:
\begin{quotation}
  The free variables of $\oc^{\delta} M$ may only occur at the depth
  specified by the context $\Gamma$.
\end{quotation}

The inference rules of the depth system are presented in Table~\ref{depth-system}.
\begin{table}[h]
\[
\begin{array}{c}

\inference
{}
{\Gamma,x:\delta \vdash^\delta x} \\ \\

\inference
{
\Gamma,x:\delta \vdash^\delta M & \fo{x}{M} \leq 1}
{\Gamma \vdash^\delta \lambda x.M}

\qquad 
\inference{\Gamma\vdash^\delta M & \Gamma\vdash^\delta N}
{\Gamma \vdash^\delta MN} \\ \\

\inference
{\Gamma \vdash^\delta N & \Gamma,x:(\delta+1) \vdash^\delta M}
{\Gamma \vdash^\delta \letm{x}{N}{M}}

\qquad

\inference
{\Gamma \vdash^{\delta+1} M}
{\Gamma \vdash^\delta \bang M}

\end{array}
\]
\caption{Depth system: $\lambdafd$}
\label{depth-system}
\end{table}

We comment on the rules.
The variable rule says that the current depth
of a free variable is specified by the context. The $\lambda$-abstraction rule
requires that the occurrence of $x$ in $M$ is at the same depth as the
formal parameter; moreover it occurs at most once so  that no duplication
is possible at the current depth (Property~\eqref{fact3}). The application
rule says that we may only apply two terms if they are at the same
depth. The $\letbang$ rule requires that the bound occurrences of $x$
are one level deeper than the current depth; note that there is no restriction
on the number of occurrences of $x$ since duplication would
happen one level deeper than the current depth. Finally, the bang rule is
better explained in a bottom-up way: crossing a modal occurrence
increases the current depth by one.

\begin{definition}[well-formedness]
\label{def-well-formed}
A term $M$ is \emph{well-formed} if for some $\Gamma$ and $\delta$ a
judgement $\Gamma \vdash^\delta M$ can be derived.
\end{definition}

\begin{example}
The term of Figure~\ref{fsyntax-tree} is well-formed according to our depth system:
\begin{prooftree}
  \AxiomC{$x:\delta \vdash^{\delta} x$}
  \AxiomC{$x:\delta, y:\delta+1 \vdash^{\delta + 1} y$}
  \AxiomC{$x:\delta, y:\delta+1 \vdash^{\delta + 1} y$}
  \BinaryInfC{$x:\delta, y:\delta+1 \vdash^{\delta + 1} yy$}
  \UnaryInfC{$x:\delta, y:\delta+1 \vdash^\delta \oc (yy)$}
  \BinaryInfC{$x:\delta \vdash^\delta \letb{y}{x}{\oc (yy)}$}
  \UnaryInfC{$\vdash^\delta \lambda x.\letb{y}{x}{\oc (yy)}$}
\end{prooftree}  
On the other hand, the followings term is not valid:
$$P = \lambda x.\letb{y}{x}{\oc(y\oc(yz))}$$
Indeed, the second occurrence of $y$ in $\oc (y \oc
(yz))$ is too deep of one level, hence reduction may increase the
depth by one. For example, $P \oc\oc N$ of depth $2$ reduces to $\oc(\oc N \oc(\oc
N)z)$ of depth $3$.
\end{example}

\begin{proposition}[properties on the depth system]
\label{prop-depth}
The depth system satisfies the following properties:
\begin{enumerate}
\item If $\Gamma \vdash^\delta M$ and $x$ occurs free in $M$ then 
$x:\delta'$ belongs to $\Gamma$ and all occurrences of $x$ in 
$\bang^\delta M$ are at depth $\delta'$.

\item If $\Gamma \vdash^\delta M$ then $\Gamma,\Gamma'\vdash^\delta M$.

\item If $\Gamma,x:\delta' \vdash^\delta M$ and $\Gamma \vdash^{\delta'} N$ 
then $d(\bang^\delta \sub{M}{N/x}) \leq \mathit{max}(d(\bang^\delta M),d(\bang^{\delta'} N))$ and $\Gamma \vdash^\delta \sub{M}{N/x}$.

\item If $\Gamma \vdash^0 M$ and $M\arrow N$ then $\Gamma \vdash^0 N$ and 
$d(M) \geq d(N)$.

\end{enumerate}
\end{proposition}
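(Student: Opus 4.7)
I prove the four parts in order, each by structural induction on the given typing derivation; (4) additionally relies on an evaluation-context replacement lemma derived from~(3).

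For~(1) I induct on $\Gamma\vdash^\delta M$. The axiom is direct. In the $\lambda$-abstraction, application, and $\letbang$ cases, the term constructors add no $\bang$, so the depth of a free variable in $\bang^\delta M$ coincides with what the premise records, and the IH closes each case (remembering that in $\letb{x}{N}{M'}$ the bound $x$ is not free). In the $\bang M'$ case the premise concludes at depth $\delta+1$, and a free variable $x:\delta'$ there lies at depth $\delta'$ inside $\bang^{\delta+1} M' = \bang^\delta(\bang M')$, as required. Part~(2) is routine: extending every judgement of the derivation with $\Gamma'$ breaks no rule.

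For~(3) I induct on $\Gamma,x:\delta'\vdash^\delta M$. If $M=x$ then $\delta=\delta'$ and $\sub{M}{N/x}=N$, so the typing conclusion is exactly the hypothesis on $N$ and the depth bound is immediate. For a variable other than $x$, substitution is a no-op. In each compound case, substitution commutes with the constructor, so I apply the IH to every premise; part~(1) guarantees that each free occurrence of $x$ sits at depth exactly $\delta'$ in $\bang^\delta M$, hence each inserted copy of $N$ has its root at depth $\delta'$ and contributes occurrences up to $d(\bang^{\delta'} N)$, giving the claimed $\max$ bound. The $\bang$ case needs an ambient-depth shift: from $\Gamma,x:\delta'\vdash^{\delta+1} M'$ the IH bounds $d(\bang^{\delta+1}\sub{M'}{N/x})$, which equals $d(\bang^\delta\sub{\bang M'}{N/x})$ because substitution distributes over $\bang$.

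For~(4) I first prove, by induction on $E$, a replacement lemma: any derivation of $\Gamma\vdash^0 E[M]$ decomposes through a subderivation $\Gamma'\vdash^{\delta''} M$ where $\delta''$ is the depth of the hole, and plugging any $M'$ with $\Gamma'\vdash^{\delta''} M'$ yields $\Gamma\vdash^0 E[M']$ with $d(E[M'])\leq\max(d(E[M]),d(\bang^{\delta''} M'))$. Then for each reduction rule I invert the typing of the redex to extract typings of its components at depth $\delta''$, apply~(3) to type the contractum at depth $\delta''$ together with the needed depth bound, and conclude via replacement. The main obstacle is the depth accounting in~(3) under the $\bang$ rule, where the IH must be invoked at a shifted ambient depth and one must identify the bound on $\bang^{\delta+1}\sub{M'}{N/x}$ with the claimed bound on $\bang^\delta\sub{\bang M'}{N/x}$; once~(3) is settled, (4) is essentially mechanical.
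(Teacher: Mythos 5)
Your proposal is correct and follows essentially the same route as the paper: (1) and (2) by induction on the derivation, (3) by combining the typing induction with the depth accounting that every free occurrence of $x$ sits at depth $\delta'$ in $\bang^\delta M$ so each inserted copy of $N$ contributes occurrences bounded by $d(\bang^{\delta'}N)$, and (4) by inverting the typing of the redex inside the evaluation context and applying (3). The only cosmetic difference is that you package the context-replacement step of (4) as an explicit lemma and thread the depth bound of (3) through the induction (with the ambient shift at $\bang$), whereas the paper argues the depth bound once and for all before its induction; both are sound.
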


\subsection{Elementary Bound}\label{termination-pure-sec}
In this section, we prove that well-formed terms terminate in
elementary time under an arbitrary reduction strategy.  To this end,
we define a measure on terms based on the number of occurrences at
each depth.

\begin{definition}[measure]
\label{def-measure}
Given a term $M$  and $0\leq i \leq d(M)$, 
let $\occ_i(M)$ be the number of occurrences in $M$ of
depth $i$ increased by $2$ (so $\occ_i(M)\geq 2$).
We define $\mu_n^i(M)$ for  $n \geq i \geq  0$ as follows:
\[
\mu_n^i(M)=(\occ_{n}(M),\ldots,\occ_{i+1}(M),\occ_{i}(M) )
\] 
We write $\mu_n(M)$ for $\mu_n^0(M)$.
We order the vectors of $n+1$ natural number with the (well-founded) 
lexicographic order $>$ from right to left.
\end{definition}

We derive a termination property by observing that the
measure strictly decreases during reduction. 
\begin{proposition}[termination]
\label{prop-termination}
If $M$ is well-formed, $M \arrow M'$ and $n\geq d(M)$ then $\mu_n(M)>\mu_n(M')$.
\end{proposition}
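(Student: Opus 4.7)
The plan is to write $M = E[R] \to E[R'] = M'$, identify the integer $i$ equal to the depth of the hole of $E$ (equivalently, the depth at which the redex is contracted), and then compare $\mu_n(M)$ and $\mu_n(M')$ position by position. Since the lexicographic order is read from right to left, the shallowest position has the highest priority, so it will suffice to establish that $\occ_j(M) = \occ_j(M')$ for all $j < i$ and that $\occ_i(M) > \occ_i(M')$.

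The first claim is immediate: the context $E$ is untouched by the reduction, and the redex sits at depth $i$, so every occurrence of $M$ or $M'$ at depth strictly less than $i$ lies strictly above the hole of $E$ and is common to both terms. For the second claim I would do a case analysis on the shape of $R$, applying Proposition~\ref{prop-depth}(1) to the subderivation of well-formedness for $R$. For a $\beta$-redex $R = (\lambda x.P)\,Q$ at depth $i$, the abstraction rule forces $\fo{x}{P} \leq 1$, with the (at most one) free occurrence of $x$ sitting at relative depth $0$ in $P$. Writing $p_0$ and $q_0$ for the standalone depth-$0$ occurrence counts of $P$ and $Q$, the redex contributes $2 + p_0 + q_0$ occurrences at depth $i$ (the $@$ and $\lambda$ nodes, plus the depth-$0$ positions of $P$ and $Q$), while $R' = \sub{P}{Q/x}$ contributes either $p_0$ (if $x$ is absent from $P$) or $p_0 - 1 + q_0$ (if $x$ occurs once); since $q_0 \geq 1$, a strict decrease of at least $2$ results at depth $i$. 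For a $\letbang$-redex $R = \letb{x}{\oc Q}{P}$, Proposition~\ref{prop-depth}(1) pins every occurrence of $x$ in $P$ at relative depth $1$, so substitution leaves the depth-$0$ positions of $P$ untouched; only the $\letbang$ and $\oc$ nodes at depth $i$ disappear, for a net decrease of exactly $2$.

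Combining the two claims, the right-to-left lexicographic comparison finds its first disagreement at position $i$ in $M'$'s favour, so $\mu_n(M) > \mu_n(M')$ regardless of what happens at deeper positions. The main subtlety lies in the depth-$i$ count in the $\beta$-case: without the depth-system restrictions, multiple copies of $x$ at relative depth $0$ in $P$ could duplicate the depth-$0$ structure of $Q$ and thereby increase $\occ_i$. The whole argument hinges on the abstraction rule's $\fo{x}{P} \leq 1$ requirement together with the $\letbang$/$\oc$ depth constraints, which force all duplicated material to land strictly below depth $i$, where it cannot interfere with the strict decrease at depth $i$.
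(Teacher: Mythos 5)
Your proposal is correct and follows essentially the same route as the paper: a case analysis on the two redex shapes, using Proposition~\ref{prop-depth}(1) to locate the bound occurrences of $x$, showing a strict drop in $\occ_i$ at the redex depth $i$ while depths $j<i$ are unchanged, and concluding by the right-to-left lexicographic order. The only difference is that the paper additionally records the multiplicative bound $\occ_j(M')\leq k\cdot\occ_j(M)$ for $j>i$ (inequality~\eqref{na-letbang-decrease}), which is irrelevant for this proposition, as you correctly observe, but is reused later in the proof of Theorem~\ref{elementary-bound-pure}.
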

\begin{proof} 
  We do this by case analysis on the
  reduction rules:
\begin{itemize}
\item $M=E[(\lambda x.M_1)M_2] \reduc M' = E[\sub{M_1}{M_2/x}]$\\
Let the occurrence of the redex $(\lambda x.M_1)M_2$ be
at depth $i$.
The restrictions on the formation of terms require that
$x$ occurs at most once in $M_1$ at depth $0$.
Then $\occ_i(M) -3 \geq \occ_i(M')$ because we remove
the nodes for application and $\lambda$-abstraction and either 
$M_2$ disappears or the occurrence of the variable $x$ in $M_1$ 
disappears (both being at the same depth as the redex).
Clearly $\occ_j(M) = \occ_j(M')$ if $j \neq i$, hence 
\begin{equation}
\label{lambda-decrease}
\mu_n(M') \leq \\(\occ_n(M), \ldots, \occ_{i+1}(M),\occ_i(M) - 3,\mu_{i-1}(M))
\end{equation}
and
$\mu_n(M)>\mu_n(M')$.

\item $M=E[\letm{x}{!M_2}{M_1}] \reduc M' = E[\sub{M_1}{M_2/x}]$\\
Let the occurrence of the redex 
$\letm{x}{!M_2}{M_1}$ be at depth $i$.
The restrictions on the formation of terms require that
$x$ may only occur in $M_1$ at depth $1$ and hence in $M$
at depth $i+1$. 
We have that $\occ_i(M) = \occ_i(P) -2$ 
 because the $\letbang$ node disappear.
Clearly,  $\occ_j(M) = \occ_j(M')$ if $j<i$.
The number of occurrences of $x$ in $M_1$ is bounded by 
$k=\occ_{i+1}(M) \geq 2$. 
Thus if  $j>i$  then $\occ_{j}(M') \leq k \cdot
\occ_j(M)$. Let's write, for $ 0 \leq i \leq n$: 
$$\mu_n^i(M) \cdot k = (\occ_n(M) \cdot k,
\occ_{n-1}(M) \cdot k, \ldots, \occ_{i}(M) \cdot k)$$
Then we have
\begin{equation}
\label{na-letbang-decrease}
  \mu_n(M') \leq (\mu_n^{i+1}(M) \cdot k, \occ_i(M) -2, \mu_{i-1}(M))
\end{equation}
and finally 
$\mu_n(M)>\mu_n(M')$.
\end{itemize}
\end{proof}

We now want to show that termination is actually in elementary time.
We recall that a function $f$ on integers is elementary if there exists a $k$ such
that for any $n$, $f(n)$ can be computed in time $\mathcal{O}(t(n,k))$ where:
\begin{align*}
  t(n,0) = 2^n, \qquad 
  t(n,k+1) = 2^{t(n,k)}~.
\end{align*}
\begin{definition}[tower functions]\label{tower-def}
We define a family of tower functions\\\noindent
$t_\alpha(x_1,\ldots,x_n)$ by induction on $n$ 
where we assume $\alpha\geq 1$ and  $x_i\geq 2$:
\[
\begin{array}{rcl}
t_\alpha()    &=&0 \\
t_\alpha(x_1,x_2,\ldots,x_n) &=& (\alpha \cdot x_{1})^{2^{t_{\alpha}(x_{2},\ldots,x_{n})}} \quad n\geq 1
\end{array}
\]
\end{definition}

Then we need to prove the following crucial lemma.
\begin{lemma}[shift]\label{shift-lemma}
Assuming $\alpha \geq 1$ and $\beta \geq 2$, 
the following property holds for the tower functions 
with $x,\vc{x}$ ranging over numbers greater or equal to $2$:
$$t_\alpha(\beta \cdot x, x',\vc{x}) \leq t_\alpha(x,\beta \cdot x',\vc{x})$$
\end{lemma}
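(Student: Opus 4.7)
I would unfold the definition of $t_\alpha$ one level on each side, take $\log_2$, and then reduce the goal to comparing $t_\alpha(\beta x', \vc{x})$ with $t_\alpha(x', \vc{x})$, whose difference can be computed explicitly with no further induction. Writing
\[
t_\alpha(\beta x, x', \vc{x}) = (\alpha\beta x)^{2^{A}}, \qquad t_\alpha(x, \beta x', \vc{x}) = (\alpha x)^{2^{B}},
\]
with $A = t_\alpha(x', \vc{x})$ and $B = t_\alpha(\beta x', \vc{x})$, the inequality becomes, after applying $\log_2$, the requirement $2^{A}\log_2(\alpha\beta x) \leq 2^{B}\log_2(\alpha x)$.

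The first key step is to absorb the extra factor $\beta$ that sits in the base on the left-hand side into the logarithm. Since $\alpha \geq 1$ and $x \geq 2$, one has $\log_2(\alpha x) \geq 1$, hence
\[
\log_2(\alpha\beta x) = \log_2(\alpha x) + \log_2\beta \leq (1 + \log_2\beta)\log_2(\alpha x) \leq \beta \log_2(\alpha x),
\]
using the elementary bound $1 + \log_2\beta \leq \beta$, valid for $\beta \geq 2$. The target then reduces to $\log_2\beta \leq B - A$.

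The second step is a direct calculation of $B - A$. Setting $E = 2^{t_\alpha(\vc{x})} \geq 1$, the two quantities share this exponent and differ only in their base: $B = (\alpha\beta x')^{E}$ and $A = (\alpha x')^{E}$. Therefore $B - A = (\alpha x')^{E}(\beta^{E}-1) \geq 2(\beta - 1) \geq \beta \geq \log_2\beta$, where the last bounds again use $\alpha x' \geq 2$, $E \geq 1$ and $\beta \geq 2$. The main obstacle is picking the right intermediate bound: finer estimates than $\log_2(\alpha\beta x) \leq \beta\log_2(\alpha x)$ complicate the bookkeeping, whereas this one cleanly isolates a single factor $\beta$ and lets the small gap in the exponent $B-A$ pay for the extra base factor on the left, making the whole argument a short elementary calculation that avoids any induction on the length of $\vc{x}$.
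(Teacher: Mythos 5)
Your proof is correct, but it takes a genuinely different route from the paper's. The paper stays entirely in integer arithmetic: it rewrites $t_\alpha(\beta x, x',\vc{x}) = (\alpha\beta x)^{2^{k}}$ with $k = t_\alpha(x',\vc{x})$, uses the auxiliary inequality $(uv)^{w}\leq u^{vw}$ (itself proved by induction) twice to push the factor $\beta$ up into the exponent, obtaining $(\alpha x)^{2^{\beta k}}$, and then invokes a separate ``pre-shift'' lemma $\beta\cdot t_\alpha(x',\vc{x})\leq t_\alpha(\beta x',\vc{x})$ to absorb the multiplicative $\beta$ into the second argument. You instead take $\log_2$, pay for the extra base factor via $\log_2(\alpha\beta x)\leq\beta\log_2(\alpha x)$, and then compute the exponent gap explicitly: $B-A=(\alpha x')^{E}(\beta^{E}-1)\geq 2(\beta-1)\geq\beta\geq\log_2\beta$, which plays the role of the pre-shift lemma in additive rather than multiplicative form. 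Your version is shorter and self-contained, needing no auxiliary lemmas and no induction; the paper's version avoids real-valued logarithms (everything is an inequality between natural numbers, which is slightly more robust for this combinatorial setting and lets the inequalities lemma be reused elsewhere in the elementary-bound proof). Both arguments correctly exploit the same underlying phenomenon, namely that moving a factor $\beta$ one level deeper into the tower produces far more than enough growth to compensate for removing it from the outer base.
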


Now, by a closer look at the shape of the lexicographic ordering
during reduction, we are able to compose the decreasing measure with a
tower function.
\begin{theorem}[elementary bound]
\label{elementary-bound-pure}
Let $M$ be a well-formed term with $\alpha =d(M)$ and let 
$t_\alpha$ denote the tower function with $\alpha+1$ arguments.
If $M\arrow M'$ then $t_\alpha(\mu_\alpha(M)) > t_\alpha(\mu_\alpha(M'))$.
\end{theorem}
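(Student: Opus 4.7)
The plan is to case-analyse on the reduction rule, in each case combining the componentwise bounds on $\mu_\alpha(M')$ already established in the proof of Proposition~\ref{prop-termination} with Lemma~\ref{shift-lemma} and the strict monotonicity of $t_\alpha$ in each argument. Strict monotonicity is immediate from the recursive definition: since $\alpha \geq 1$ and all arguments are $\geq 2$, $\alpha x_1 \geq 2$, so $(\alpha x_1)^{2^{(\cdot)}}$ is strictly increasing in its exponent and $2^{(\cdot)}$ is itself strictly increasing, so the claim propagates by induction. By Proposition~\ref{prop-depth}(4) we have $d(M') \leq \alpha$, so $t_\alpha(\mu_\alpha(M'))$ is well-defined.

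For a $\beta$-redex at depth $i$, equation~(\ref{lambda-decrease}) shows that $\mu_\alpha(M')$ and $\mu_\alpha(M)$ agree in every slot except the one for depth $i$, where $\occ_i$ strictly decreases; strict monotonicity of $t_\alpha$ delivers the conclusion immediately.

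For a $\letbang$-redex at depth $i$, equation~(\ref{na-letbang-decrease}) gives $\mu_\alpha(M') \leq (\mu_\alpha^{i+1}(M)\cdot k,\; \occ_i(M)-2,\; \mu_{i-1}(M))$ with $k = \occ_{i+1}(M) \geq 2$, and note that the restriction $\occ_i(M) \geq \occ_i(M') + 2 \geq 4$ makes every value $\geq 2$, so Lemma~\ref{shift-lemma} is applicable. I then invoke the shift lemma successively $\alpha-i-1$ times, pushing the leftmost surviving $k$-multiplier one slot rightward at each step, thereby concentrating all $\alpha-i$ copies of $k$ into the slot for depth $i+1$. Combined with monotonicity this yields
\[
t_\alpha(\mu_\alpha(M')) \;\leq\; t_\alpha\bigl(\occ_\alpha,\ldots,\occ_{i+2},\; k^{\alpha-i+1},\; \occ_i-2,\; \occ_{i-1},\ldots,\occ_0\bigr).
\]
Since the slots for depths outside $\{i,\,i+1\}$ now coincide with those of $\mu_\alpha(M)$, monotonicity reduces the goal to the focused two-slot inequality
\[
t_\alpha\bigl(k^{\alpha-i+1},\; \occ_i-2,\; \occ_{i-1},\ldots,\occ_0\bigr) \;<\; t_\alpha\bigl(k,\; \occ_i,\; \occ_{i-1},\ldots,\occ_0\bigr).
\]

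The crux is this final inequality, where we trade an \emph{increase} at the less significant (leftmost) slot for a \emph{strict decrease} at the more significant (second) slot, and must show the latter dominates. Unfolding one layer of $t_\alpha$ and taking $\log_2$, the claim becomes $2^{T-T'} > \bigl(\log_2 \alpha + (\alpha-i+1)\log_2 k\bigr)/\log_2(\alpha k)$, where $T = (\alpha\occ_i)^{2^{T_0}}$, $T' = (\alpha(\occ_i-2))^{2^{T_0}}$, and $T_0 = t_\alpha(\occ_{i-1},\ldots,\occ_0)$. The right-hand side is bounded by $\alpha-i+1 \leq \alpha+1$, while a routine estimate gives $T - T' \geq 2\alpha^{2^{T_0}} \geq 2\alpha$, so $2^{T-T'} \geq 2^{2\alpha} > \alpha + 1$ for all $\alpha \geq 1$. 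This is precisely where the doubly-exponential growth of $t_\alpha$ absorbs the multiplicative $k$-blow-up created by duplications at deeper levels, and it is the main obstacle in the proof.
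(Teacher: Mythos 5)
Your proposal is correct and follows essentially the same route as the paper's proof: reuse the componentwise bounds~\eqref{lambda-decrease} and~\eqref{na-letbang-decrease}, iterate Lemma~\ref{shift-lemma} to concentrate the $k$-multipliers into the single slot below the redex, and then settle a two-slot inequality in which the $-2$ decrease at the more significant argument absorbs the accumulated power of $k$. Your final arithmetic, phrased via $\log_2$ and the bound $2^{T-T'}\geq 2^{2\alpha}>\alpha+1$, is just a repackaging of the paper's computation (which uses $(x-y)^k\leq x^k-y^k$ and $\alpha\cdot 2^{-(2\alpha)^k}<1$), so no genuinely new idea is involved.
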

\begin{proof}
  We illustrate the proof for $\alpha = 2$ and the crucial case where $$M
  = \letb{x}{\oc M_1}{M_2} \arrow M' = \sub{M_1}{M_2/x}$$
  Let $\mu_2(M) = (x, y, z)$ such that $x = \occ_2(M)$, $y = \occ_1(M)$
  and $z = \occ_0(M)$. 
  We want to show that:
  $$
  t_2(\mu_2(M')) < t_2(\mu_2(M))
  $$
  We have:
  $$
  \begin{array}{rcll}
  t_2(\mu_2(M')) &\leq& t_2(x \cdot y, y \cdot y, z -2) & \text{by inequality~\eqref{na-letbang-decrease}}\\
  &\leq& t_2(x,y^3,z-2) & \text{by Lemma~\ref{shift-lemma}}
  \end{array}
  $$
  Hence we are left to show that:
  $$t_2(y^3,z-2) < t_2(y,z) \text{\quad\ie\quad} (2y^3)^{2^{2(z-2)}} <
  (2y)^{2^{2z}}$$
  We have:
  $$
  \begin{array}{rcll}
    (2y^3)^{2^{2(z-2)}} &\leq& (2y)^{3\cdot2^{2(z-2)}}
  \end{array}
  $$
  Thus we need to show:
  $$
  3\cdot2^{2(z-2)} < 2^{2z}
  $$
  Dividing by $2^{2z}$ we get:
  $$
  3 \cdot 2^{-4} < 1
  $$
  which is obviously true. Hence $  t_2(\mu_2(M')) < t_2(\mu_2(M))$.
\end{proof}

This shows that the number of reduction steps of a term $M$ is bound
by an elementary function where the height of the tower
depends on $d(M)$. We also note that if $M\trarrow M'$ then
$t_\alpha(\mu_\alpha(M))$ bounds the size of $M'$. Thus we can
conclude with the following corollary.
\begin{corollary}[elementary time normalisation]
The normalisation of terms of bounded depth can be performed
in time elementary in the size of the terms.
\end{corollary}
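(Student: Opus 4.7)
The plan is to combine Theorem~\ref{elementary-bound-pure} with the two observations already noted by the author: $t_\alpha(\mu_\alpha(M))$ bounds both the number of reduction steps starting from $M$ and the size of every reduct of $M$. Fix a depth bound $D$ and a well-formed term $M$ with $d(M) = \alpha \leq D$. By Definition~\ref{def-measure} each component $\occ_i(M)$ of $\mu_\alpha(M)$ is at most $|M|+2$, and the vector has $\alpha+1 \leq D+1$ components; hence $t_\alpha(\mu_\alpha(M))$ is dominated by $t_D(|M|+2, \ldots, |M|+2)$, a tower of exponentials of fixed height depending only on $D$, and therefore an elementary function of $|M|$.

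I would then iterate Theorem~\ref{elementary-bound-pure} along any reduction sequence $M = M_0 \arrow M_1 \arrow \cdots \arrow M_k$: the values $t_\alpha(\mu_\alpha(M_j))$ form a strictly decreasing sequence of natural numbers, so $k \leq t_\alpha(\mu_\alpha(M))$. In particular every reduction sequence terminates after at most elementarily many steps, and each intermediate $M_j$ has size bounded by the same elementary function of $|M|$ (using the author's remark that $\mu_\alpha(M_j)$ is bounded componentwise and dominates the size).

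It remains to bound the cost of one reduction step on a term of size $s$. Locating a redex and performing the substitution $\sub{M_1}{M_2/x}$ for either a $\beta$-reduction or a $\letbang$-reduction is polynomial in $s$ on a standard RAM model. Composing elementarily many such polynomial-time steps, each applied to a term of elementary size, gives a total running time that is elementary in $|M|$. The only delicate point is the bookkeeping: one must make explicit that the height of the tower is fixed by the depth bound $D$ and is independent of $|M|$, so that for every fixed $D$ the construction yields a single elementary function of the input size, uniform over all well-formed inputs of depth at most $D$.
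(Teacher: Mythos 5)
Your proposal is correct and follows essentially the same route as the paper, which derives the corollary directly from Theorem~\ref{elementary-bound-pure} together with the two remarks that $t_\alpha(\mu_\alpha(M))$ bounds both the length of any reduction sequence and the size of any reduct. You merely make explicit the bookkeeping the paper leaves implicit (componentwise bound $\occ_i(M)\leq |M|+2$, fixed tower height for a fixed depth bound, and the polynomial cost of an individual reduction step), all of which is consistent with the paper's intended argument.
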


\section{Elementary Time in a Modal $\lambda$-calculus with Side
  Effects}
\label{modal-sec}
In this section, we extend our functional language with side effects
operations. By analysing the way side effects act on the depth of
occurrences, we extend our depth system to the obtained language. We
can then lift the proof of termination in elementary time to programs
with side effects that run with a call-by-value reduction strategy.

\subsection{A Modal $\lambda$-calculus with Multithreading and Regions}
We introduce a call-by-value modal $\lambda$-calculus
endowed with parallel composition and operations to read and write
\emph{regions}. We call it $\lambdar$.  A region is an {\em
  abstraction} of a set of dynamically generated values such as
imperative references or communication channels.  We regard
$\lambdar$ as an abstract, highly non-deterministic language which
entails complexity bounds for more concrete languages featuring
references or channels (we will give an example of such a language in Section
\ref{sec-expressivity}). To this end, it is enough to map the
dynamically generated values to their respective regions and observe
that the reductions in the concrete languages are simulated in
$\lambdar$ (see, {\em e.g.}, \cite{air-hal}). 

\subsubsection{Syntax}
The syntax of the language is described in Table~\ref{syntax}.
\begin{table}[h]
\begin{displaymath}
\begin{array}{rcll}
\multicolumn{3}{l}{x,y,\ldots}                                           &\mbox{(Variables)} \\
\multicolumn{3}{l}{r,r',\ldots}                                          &\mbox{(Regions)} \\
V&::=& * \mid r \mid x \mid \lambda x.M \mid \bang V                &\mbox{(Values)}\\
M&::=& V \mid MM \mid \bang M \mid \letm{x}{M}{M}            \\
&&  \st{r}{V} \mid \get{r} \mid (M\para M) &\mbox{(Terms)}\\ 
S&::=& \store{r}{V} \mid (S\para S)   &\mbox{(Stores)} \\
P&::=& M \mid S \mid (P\para P)           &\mbox{(Programs)}  \\
E&::=&[~]\mid EM \mid VE \mid \bang E \mid \letm{x}{E}{M}      &\mbox{(Evaluation Contexts)} \\
C&::=& [~] \mid (C\para P) \mid (P\para C)   &\mbox{(Static Contexts)}
\end{array}
\end{displaymath}
\caption{Syntax of programs: $\lambdar$}
\label{syntax}
\end{table}
We describe the new operators.
We have the usual set of variable $x,y,\ldots$ and a set of regions
$r,r',\ldots$. The set of values $V$ contains the unit constant $*$,
variables, regions, $\lambda$-abstraction and modal values $\oc V$
which are marked with the \emph{bang} operator `$\oc$'. The set of
terms $M$ contains values, application, modal terms $\oc M$, a
$\letbang$ operator, $\st{r}{V}$ to write the value $V$ at region $r$,
$\get{r}$ to fetch a value from region $r$ and $(M \para N)$ to
evaluate $M$ and $N$ in parallel. A store $S$ is the composition of
several stores $\store{r}{V}$ in parallel.  A program $P$ is a
combination of terms and stores.  Evaluation contexts follow a
call-by-value discipline. Static contexts $C$ are composed of parallel
compositions. Note that stores can only appear in a static context,
thus $M(M' \para \store{r}{V})$ is not a legal term.
We define $\oc^n( P \para P) = (\oc^n P \para \oc^n P)$, and  $\oc^n\store{r}{V} =
\store{r}{V}$. As usual, we
abbreviate $(\lambda z.N)M$ with $M;N$, where $z$ is not free in $N$.

Each program has an \emph{abstract syntax tree} 
as exemplified in Figure~3.\ref{esyntax-tree}.
\begin{figure}[h]
  \centering
    \subfigure[]{\label{esyntax-tree}
    \begin{tikzpicture}[font=\normalsize,level distance=0.75cm]
      \node {$\para$}[sibling distance=1.7cm]
      child {node {$\letbang x$}[sibling distance=1cm]
        child {node {$\get{r}$}}
        child {node {$\mathsf{set}(r)$}
          child {node {$\oc$}
            child {node {$x$}}}}}
      child {node {$r \leftarrow$}
        child {node {$ \oc$}
          child {node {$ \lambda x$}
            child {node {$@$}[sibling distance=1cm]
              child {node {$x$}}
              child {node {$*$}}}}}};
    \end{tikzpicture}}
  \qquad\qquad\qquad
    \subfigure[]{\label{eaddress-tree}
    \begin{tikzpicture}[font=\normalsize,level distance=0.75cm]
      \node {$\epsilon$}[sibling distance=1.7cm] 
      child {node {$0$}[sibling distance=1cm]
        child {node {$00$}}
        child {node {$01$}
          child {node {$010$}
            child {node {$0100$}}}}}
      child {node {$1$}
        child {node {$10$}
          child {node {$100$}
            child {node {$1000$}[sibling distance=1.3cm]
              child {node {$10000$}}
              child {node {$10001$}}}}}};
    \end{tikzpicture}}
\caption{Syntax tree  and addresses of $P = \letb{x}{\get{r}}{\st{r}{\oc x}} \para
        \store{r}{\oc(\lambda x.x*)}$}
  \label{etrees}
\end{figure}

\subsubsection{Operational Semantics}
The operational semantics of the language is described in
Table~\ref{semantics}.
 \begin{table}[h]
\begin{displaymath}
\begin{array}{cccc}
P\mid P' &\equiv &P'\mid P                                &\mbox{(Commutativity)} \\
(P\mid P')\mid P'' &\equiv &P \mid (P' \mid P'')          &\mbox{(Associativity)} \\
\end{array}
\end{displaymath}
\begin{displaymath}
\begin{array}{lclclcl}
E[(\lambda x.M)V] && &\arrow &E[\sub{M}{V/x}] \\
E[\letm{x}{\bang V}{M}] && &\arrow &E[\sub{M}{V/x}] \\
E[\st{r}{V}]  &&     &\arrow &E[*] &\mid& \store{r}{V} \\
E[\get{r}] &\mid& \store{r}{V} &\arrow &E[V] \\
E[\letm{x}{\get{r}}{M}] &\mid& \store{r}{\bang V} &\arrow &E[\sub{M}{V/x}] &\mid& \store{r}{\bang V}
\end{array}
\end{displaymath}
\caption{Semantics of $\lambdar$ programs}
\label{semantics}
\end{table}
Programs are considered up to a
structural equivalence $\equiv$ which is the least equivalence
relation preserved by static contexts, and which contains the
equations for $\alpha$-renaming and for the commutativity and
associativity of parallel composition. The reduction rules apply
modulo structural equivalence and in a static context $C$.

When writing to a region, values are accumulated rather than
overwritten (remember that $\lambdar$ is an abstract language that can
simulate more concrete ones where values relating to the same region
are associated with distinct addresses).  On the other hand, reading a
region amounts to select non-deterministically one of the values
associated with the region.  We distinguish two rules to read a
region. The first \emph{consumes} the value from the store, like when
reading a communication channel. The second \emph{copies} the value
from the store, like when reading a reference. Note that in this case
the value read must be duplicable (of the shape $\oc V$).

\begin{example}
  \label{program-thread}
  Program $P$ of Figure~\ref{etrees} reduces as follows:
  $$
  \begin{array}{rcl}
    &&\letb{x}{\get{r}}{\st{r}{\oc x}} \para
        \store{r}{\oc(\lambda x.x*)}\\
     &\reduc& \st{r}{\oc(\lambda x.x*)} \para  \store{r}{\oc(\lambda x.x*)}\\
    &\reduc& * \para  \store{r}{\oc(\lambda x.x*)} \para \store{r}{\oc(\lambda x.x*)}
  \end{array}
  $$
\end{example}


\subsection{Extended Depth System}\label{depth-sys-sec}
We start by analysing the interaction between the depth 
of the occurrences and side effects.
We observe that side effects may increase the depth or generate
occurrences at lower depth than the current redex, which violates
Property~\eqref{fact1} and~\eqref{fact2} (see
Section~\ref{subsec-depth-sys}) respectively. Then to find a suitable
notion of depth, it is instructive to consider the following program
examples where $M_r = \letm{z}{\get{r}}{\bang(z*)}$.
$$
  \begin{array}{l}
    (A)\quad E[\st{r}{\bang V}]\\
    (B)\quad\lambda x.\st{r}{x}; \bang \get{r}\\
    (C)\quad \bang(M_r) \para \store{r}{\bang (\lambda y.M_{r'})} \para \store{r'}{\bang (\lambda y.*)}\\
    (D)\quad \bang(M_r) \para \store{r}{\bang (\lambda y.M_r)}
  \end{array}
$$

\begin{description}
\item[$(A)$] Suppose the occurrence
  $\st{r}{\oc V}$ is at depth $\delta > 0$ in $E$. Then
  when evaluating such a term we always end up in a program of the
  shape $E[*] \mid \store{r}{\bang V}$ where the occurrence $\oc V$,
  previously at depth $\delta$, now appears at depth $0$. This
  contradicts Property~\eqref{fact2}.
\item[$(B)$] If we apply
  this program to $\oc V$ we obtain $\bang\bang V$, hence
  Property~\eqref{fact1} is violated because from a program of depth
  $1$, we reduce to a program of depth $2$. We remark that this is
  because the read and write operations do not execute at the same
  depth.
\item[$(C)$] According to our definition, this program has depth $2$,
  however when we reduce it we obtain a term $\bang^3 *$ which has
  depth $3$, hence Property~\eqref{fact1} is violated. This is because
  the occurrence $\lambda y.M_{r'}$ originally at depth $1$ in the
  store, ends up at depth $2$ in the place of $z$ applied to $*$.
\item[$(D)$] If we accept circular stores, we can even write diverging
  programs whose depth is increased by $1$ every two reduction steps.
\end{description}

Given these remarks, the rest of this section is devoted to a \emph{revised
notion of depth} and to depth system extended with side effects.
First, we introduce the following contexts:
$$
\begin{array}{cc}
  \Gamma = x_1:\delta_1, \ldots, x_n:\delta_n &\quad\qquad
  R = r_1:\delta_1, \ldots, r_n:\delta_n
\end{array}
$$
where $\delta_i$ is a natural number. We write
$dom(R)$ for the set
$\set{r_1,\ldots,r_n}$. We write $R(r_i)$ for the depth
$\delta_i$ associated with $r_i$ in the context $R$.

In the sequel, we shall call the notion of depth introduced in
Definition~\ref{depth-functional} \emph{naive depth}.
We revisit the notion of naive depth as follows. 
\begin{definition}[revised depth]
\label{depth-side effect}
Let $P$ be a program, $R$ a region context where $dom(R)$ contains all
the regions of $P$ and $d_n(w)$ the naive depth of an occurrence $w$
of $P$.  If $w$ does not appear under an occurrence $r \leftarrow$ (a
store), then the revised depth $d_r(w)$ of $w$ is $d_n(w)$.
Otherwise, $d_r(w)$ is $R(r) + d_n(w)$.  The revised depth $d_r(P)$ of
the program is the maximum revised depth of its occurrences.
\end{definition}
Note that the revised depth is relative to a fixed region context.  In
the sequel we write $d(\_)$ for $d_r(\_)$.  On functional terms, this
notion of depth is equivalent to the one given in
Definition~\ref{depth-functional}. However, if we consider the program
of Figure~\ref{etrees}, we now have $d(10) = R(r)$ and $d(100) =
d(1000) = d(10000) = d(10001) = R(r) + 1$.

A judgement in the depth system has the shape
$$R;\Gamma \vdash^\delta P$$
and it should be interpreted as follows:
\begin{quotation}
  The free variables of $\oc^{\delta} P$ may only occur at the depth
  specified by the context $\Gamma$, where depths are computed
  according to $R$.
\end{quotation}
The inference rules of the extended depth system are presented in
Table~\ref{depth-system-full}.
\begin{table}[h]
\[
\begin{array}{c}

\inference
{}
{R;\Gamma,x:\delta \vdash^{\delta} x}

\qquad
\inference
{}
{R;\Gamma \vdash^\delta r}

\qquad 

\inference
{}
{R;\Gamma \vdash^\delta *}

\\ \\

\inference
{\fo{x}{M}\leq 1&
 R;\Gamma,x:\delta \vdash^\delta M}
{R;\Gamma \vdash^\delta \lambda x.M}

\qquad 

\inference
{R;\Gamma\vdash^\delta M_i & i=1,2}
{R;\Gamma \vdash^\delta M_1M_2} 

\\ \\

\inference
{R;\Gamma \vdash^{\delta+1} M}
{R;\Gamma \vdash^\delta \oc M}

\qquad 

\inference
{R;\Gamma \vdash^\delta M_1 & R;\Gamma,x:(\delta+1) \vdash^\delta M_2}
{R;\Gamma \vdash^\delta \letb{x}{M_1}{M_2}}

\\\\

\inference
{}
{R,r:\delta;\Gamma \vdash^\delta \get{r}}



\qquad

\inference
{R,r:\delta;\Gamma \vdash^{\delta} V}
{R,r:\delta;\Gamma \vdash^{\delta} \st{r}{V}}

\\\\

\inference
{R,r:\delta;\Gamma \vdash^{\delta} V}
{R,r:\delta;\Gamma \vdash^0 \store{r}{V}}

\qquad

\inference
{R;\Gamma \vdash^\delta P_i & i=1,2}
{R;\Gamma \vdash^\delta (P_1\mid P_2)}

\end{array}
\]
\caption{Depth system for programs: $\lambdard$}
\label{depth-system-full}
\end{table}
We comment on the new rules. A region and the constant $*$ may appear
at any depth. The key cases are those of read and write: the depth of
these two operations is specified by the region context. The current
depth of a store is always $0$, however, the depth of the value in the
store is specified by $R$ (note that it corresponds to the revised
definition of depth). We remark that $R$ is constant in a judgement
derivation.

\begin{definition}[well-formedness]
\label{def-well-formed}
A program $P$ is \emph{well-formed} if for some $R$, $\Gamma$, $\delta$ 
a judgement $R;\Gamma \vdash^\delta P$ can be derived.
\end{definition}

\begin{example}
The program of Figure~\ref{etrees} is well-formed with the following
derivation where  $R(r) = 0$:
\begin{prooftree}
  \AxiomC{$R;\Gamma \vdash^0 \get{r}$}
  \AxiomC{$R;\Gamma, x:1 \vdash^1 x$}
  \UnaryInfC{$R;\Gamma, x:1 \vdash^0 \oc x$}
  \UnaryInfC{$R;\Gamma, x:1 \vdash^0 \st{r}{\oc x}$}
  \BinaryInfC{$R;\Gamma \vdash^0 \letb{x}{\get{r}}{\st{r}{\oc x}}$}
  \AxiomC{$\vdots$}
  \UnaryInfC{$R;\Gamma \vdash^0 \store{r}{\oc (\lambda x.x*)}$}
  \BinaryInfC{$R;\Gamma \vdash^0 \letb{x}{\get{r}}{\st{r}{\oc
        x}} \para \store{r}{\oc (\lambda x.x*)}$}
\end{prooftree}
\end{example}

We reconsider the troublesome programs with side effects.
Program~$(A)$ is well-formed with judgement~$(i)$:
$$
\begin{array}{l l @{\quad} r}
  \label{judgement1}
  R;\Gamma \vdash^0 E[\st{r}{\oc V}] &\textrm{with } R=r:\delta & (i)\\
  \label{judgement2}
  R;\Gamma \vdash^0 \oc M_r \para  \store{r}{\oc(\lambda
    y.M_{r'})} \para \store{r'}{\oc(\lambda y.*)} &\textrm{with }
  R=r:1, r':2 & (ii)
\end{array}
$$
Indeed, the occurrence $\oc V$ is now preserved at depth $\delta$ in
the store. Program~$(B)$ is not well-formed since the read
operation requires $R(r) = 1$ and the write operations require $R(r) =
0$.  Program~$(C)$ is well-formed with judgement~$(ii)$; indeed its
depth does not increase anymore because $\oc M_r$ has depth $2$ but
since $R(r) = 1$ and $R(r') = 2$, $\store{r}{\oc(\lambda y.M_{r'})}$
has depth $3$ and $\store{r'}{\oc(\lambda y.*)}$ has depth $2$. Hence
program~$(C)$ has already depth $3$. Finally, it is worth noticing
that the diverging program~$(D)$ is not well-formed
 since $\get{r}$ appears at depth $1$ in $\oc M_r$ and at
depth $2$ in the store.

\begin{theorem}[properties on the extended depth system]\label{prop-depth-effect}
The following properties hold:
\begin{enumerate}

\item If $R;\Gamma \vdash^\delta M$ and $x$ occurs free in $M$ then 
$x:\delta'$ belongs to $\Gamma$ and all occurrences of $x$ in 
$\bang^\delta M$ are at depth $\delta'$.

\item If $R;\Gamma \vdash^\delta P$ then $R;\Gamma,\Gamma'\vdash^\delta P$.

\item If $R;\Gamma,x:\delta' \vdash^\delta M$ and $R;\Gamma \vdash^{\delta'} V$ 
then $R;\Gamma \vdash^\delta \sub{M}{V/x}$ and \\
$d(\bang^\delta \sub{M}{V/x}) \leq \w{max}(d(\bang^\delta M),d(\bang^{\delta'} V))$.

\item If $R;\Gamma \vdash^0 P$ and $P\arrow P'$ then $R;\Gamma \vdash^0 P'$ and 
$d(P) \geq d(P')$.

\end{enumerate}
\end{theorem}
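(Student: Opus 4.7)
The plan is to lift Proposition~\ref{prop-depth} from the pure modal calculus to the language with side effects, measuring depth with the revised Definition~\ref{depth-side effect} wherever a value migrates between a term position and a store position. Parts (1)--(3) proceed by induction on the derivation of the typing judgement, while part (4) is a case analysis on the reduction rules (handled modulo structural congruence).

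For part (1), routinely inspect the rules of Table~\ref{depth-system-full}: each inference either introduces no free term variable, or introduces it in $\Gamma$ exactly at the current depth; the bang rule shifts the depth index in lockstep with the outer $\bang^\delta$ prefix, so the free occurrences of $x$ in $\bang^\delta M$ indeed all lie at the recorded depth $\delta'$. Part (2) is admissible weakening, by a direct induction on the derivation. Part (3), the substitution lemma, is by induction on the derivation of $R;\Gamma,x{:}\delta' \vdash^\delta M$. The only non-trivial base case is $M = x$, where $\delta = \delta'$ forces the substitution to yield $V$ with the type given by the second hypothesis. Under $\oc$ the induction hypothesis applies at depth $\delta+1$ (and the hypothesis $R;\Gamma \vdash^{\delta'} V$ is unchanged); under $\letbang$, $\lambda$, application, parallel composition, set, get, regions, $*$, and store, the induction is immediate. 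The depth bound follows from part (1): every free occurrence of $x$ in $\bang^\delta M$ sits at depth $\delta'$, so its replacement by a copy of $V$ contributes depth at most $d(\bang^{\delta'} V)$.

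Part (4), subject reduction together with depth preservation, is a case analysis on the reduction rules of Table~\ref{semantics}. The two functional rules ($\beta$ and $\letbang$ on $\oc V$) are handled exactly as in Proposition~\ref{prop-depth}, using part (3) at depth $i$ and $i+1$ respectively, where $i$ is the depth at which the redex sits inside the evaluation context. For the set rule $E[\st{r}{V}] \arrow E[*] \mid \store{r}{V}$, the typing rule for $\st{r}{V}$ forces $R(r)$ to equal the depth $i$ of the redex in $E$; the store $\store{r}{V}$ is then typed at depth $0$ with $V$ typed at depth $R(r)$, so the revised depth of $V$ inside the store is $R(r) + d_n(V) = i + d_n(V)$, which is exactly the revised depth $V$ had when it sat at naive depth $i$ inside $E$. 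A symmetric calculation covers both get rules, and the $\letbang$-style copying rule additionally invokes part (3) to substitute $V$ for the bound variable at depth $R(r)+1$. In each case the eliminated node (application, $\letbang$, set, or get) disappears while every other occurrence retains its revised depth, giving $d(P) \geq d(P')$.

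The main obstacle is exactly this revised-depth bookkeeping in part~(4): the set and get reductions shuttle a value between positions where the naive and revised depths disagree, and one has to verify that the depth offsets imposed by the region context $R$ cancel precisely so that no occurrence ends up strictly deeper than before. This is what motivates the specific shape of the rules for $\st{r}{V}$, $\get{r}$, and $\store{r}{V}$ in Table~\ref{depth-system-full}; once those cancellations are verified, the rest of the argument reduces to the pure functional case treated in Proposition~\ref{prop-depth}.
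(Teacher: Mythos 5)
Your proposal is correct and follows essentially the same route as the paper's proof: induction on the typing derivation for parts (1)--(3), with the depth bound in (3) deduced from (1), and a case analysis on the reduction rules for part (4) in which the functional redexes reuse the substitution lemma and the set/get cases rest on the observation that the region context's depth offset makes the revised depth of a value invariant as it moves between an evaluation context and a store. The only difference is presentational: you state the revised-depth cancellation $R(r)+d_n(V)=i+d_n(V)$ explicitly as the crux, whereas the paper distributes the same bookkeeping across the individual reduction cases.
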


\subsection{Elementary Bound}\label{elem-bound-sec}
In this section, we prove that well-formed programs terminate in
elementary time.  
The measure of Definition~\ref{def-measure} extends trivially to programs
except that to simplify the proofs of the following properties, we assume the
occurrences labelled with $\para$ and $r \leftarrow$ do not count in the
measure and that $\mathsf{set}(r)$ counts for two occurrences such
that the measure strictly decreases on the rule $E[\st{r}{V}] \reduc
E[*] \para \store{r}{V}$.

We derive a similar termination property:
\begin{proposition}[termination]
\label{prop-termination}
If $P$ is well-formed, $P \arrow P'$ and $n\geq d(P)$ then $\mu_n(P)>\mu_n(P')$.
\end{proposition}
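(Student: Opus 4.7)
The plan is a case analysis on the five reduction rules of $\lambdard$. The key observation is that the revised depth is engineered so that a value sitting in a store carries the same revised depth as it would at its point of use; this invariance lets each of the three new rules be accounted for by a strictly local argument at the depth of the redex, so the lexicographic bookkeeping of the purely functional case transports essentially unchanged.

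For the two functional reductions --- beta and pure let-bang --- the argument of the functional termination proposition applies without modification, since being inside a parallel composition or a store context only adds occurrences at depths that are untouched by the local rewriting. At the depth $i$ of the redex one recovers the familiar estimates $\mu_n(P')\leq(\occ_n(P),\dots,\occ_{i+1}(P),\occ_i(P)-3,\mu_{i-1}(P))$ for beta and $\mu_n(P')\leq(\mu_n^{i+1}(P)\cdot k,\occ_i(P)-2,\mu_{i-1}(P))$ for let-bang, with $k=\fo{x}{M}\leq\occ_{i+1}(P)$ and $\occ_{i+1}(P)\geq 2$. Both are strictly smaller than $\mu_n(P)$ in the lexicographic order, using Theorem~\ref{prop-depth-effect}(3) to know that substitution preserves well-formedness and the depth of the substituted value.

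For the write rule $E[\st{r}{V}]\arrow E[*]\para\store{r}{V}$, well-formedness forces the redex depth $i$ to equal $R(r)$. Under our conventions $\mathsf{set}(r)$ contributes two nodes at depth $i$, $*$ contributes one, and parallel and store nodes do not count. The value $V$ has revised depth $i$ both as a child of $\st{r}{V}$ (naive depth $i$) and as the content of $\store{r}{V}$ (naive depth $0$, shifted by $R(r)=i$), so every occurrence inside $V$ keeps its revised depth. Hence $\occ_i(P')=\occ_i(P)-1$ with all other counts preserved. For the consuming read $E[\get{r}]\para\store{r}{V}\arrow E[V]$, $\get{r}$ sits at depth $R(r)$ and $V$ has revised depth $R(r)$ both in the store and at the redex site; the lone $\get$ node disappears, so $\occ_{R(r)}$ strictly decreases and no other count changes.

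The delicate case, and the one most sensitive to the revised-depth definition, is the reference read $E[\letb{x}{\get{r}}{M}]\para\store{r}{\oc V}\arrow E[\sub{M}{V/x}]\para\store{r}{\oc V}$. Set $\delta=R(r)$. Well-formedness places the let-bang and the $\get{r}$ at depth $\delta$ and all free occurrences of $x$ in $M$ at depth $\delta+1$; by definition of the revised depth the value $V$ inside $\oc V$ in the store also sits at depth $\delta+1$. Hence substitution lands each copy of $V$ exactly at the revised depth it had in the store, and the accounting reduces to the functional let-bang case: with $k=\fo{x}{M}\leq\occ_{\delta+1}(P)$, the two nodes for let-bang and get vanish at depth $\delta$, the store is preserved, and counts at depths $\geq\delta+1$ grow by a factor of at most $k\geq 2$. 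This yields $\mu_n(P')\leq(\mu_n^{\delta+1}(P)\cdot k,\occ_\delta(P)-2,\mu_{\delta-1}(P))<\mu_n(P)$. The main obstacle is to verify cleanly the depth coincidence between stored and substituted occurrences; once it is in hand, the functional argument carries over without further change.
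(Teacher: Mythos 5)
Your proposal is correct and follows essentially the same route as the paper's proof: a case analysis on the five reduction rules, with the functional cases unchanged, the write and consuming read handled by the occurrence-counting conventions for $\mathsf{set}(r)$, $*$, $\para$ and $r\leftarrow$, and the copying read reduced to the let-bang accounting via the observation that the revised depth makes stored values land at exactly the depth of the substituted occurrences. The estimates you derive (including $\occ_i(P')=\occ_i(P)-1$ for the write and the inequality $\mu_n(P')\leq(\mu_n^{\delta+1}(P)\cdot k,\occ_\delta(P)-2,\mu_{\delta-1}(P))$ for the copying read) match the paper's.
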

\begin{proof} 
  By a case analysis on the new reduction rules.

\begin{itemize}
\item $P\equiv E[\st{r}{V}]  \arrow P' \equiv E[*] \mid
  \store{r}{V}$\\
If $R;\Gamma \vdash^{\delta}  \st{r}{V}$ then by~\ref{prop-depth-effect}(4) we have $R;\Gamma
\vdash^0 \store{r}{V}$ with $R(r) =
\delta$. Hence, by definition of the depth, the occurrences
in $V$ stay at depth $\delta$ in $\store{r}{V}$. However, the node
$\st{r}{V}$ disappears, and both $*$ and $\store{r}{V}$ are null
occurrences, thus $\occ_{\delta}(P') = \occ_{\delta}(P) -
1$. The number of occurrences at other depths stay unchanged, hence
$\mu_n(P) > \mu_n(P')$.

\item $P \equiv E[\get{r}] \mid \store{r}{V} \arrow P'
  \equiv E[V]$ \\
  If $R;\Gamma \vdash^0 \store{r}{V}$ with $R(r) = \delta$, then
  $\get{r}$ must be at depth $\delta$ in $E[~]$. Hence, by definition
  of the depth, the occurrences in $V$ stay at depth $\delta$, while
  the node $\get{r}$ and $\para$
  disappear. Thus $\occ_{\delta}(P') = \occ_{\delta}(P) - 1$ and the
  number of occurrences at other depths stay unchanged, hence
  $\mu_n(P) > \mu_n(P')$.

\item $P \equiv E[\letm{x}{\get{r}}{M}] \mid \store{r}{\bang V} \arrow
P' \equiv E[\sub{M}{V/x}] \mid \store{r}{\bang V}$\\
This case is the only source of duplication with the reduction rule
on $\letbang$.
Suppose $R;\Gamma \vdash^{\delta} \letm{x}{\get{r}}{M}$. Then
we must have $R;\Gamma \vdash^{\delta+1} V$.
The restrictions on the formation of terms require that
$x$ may only occur in $M$ at depth $1$ and hence in $P$
at depth $\delta+1$. Hence the occurrences in $V$ stay at the same
depth in $\sub{M}{V/x}$, while
the $\mathsf{let}$, $\get{r}$ and some $x$ nodes disappear, hence
$\occ_{\delta}(P) \leq \occ_{\delta}(P') -2$. The number of occurrences of $x$ in $M$ is bound by 
$k=\occ_{\delta+1}(P) \geq 2$. 
Thus if  $j>\delta$  then $\occ_{j}(P') \leq k \cdot \occ_j(P)$.
Clearly,  $\occ_j(M) = \occ_j(M')$ if $j<i$. Hence, we have
\begin{equation}
  \label{na-letget-decrease}
  \mu_n(P') \leq (\mu_n^{i+1}(P) \cdot k, \occ_i(P)-2, \mu_{i-1}(P))
\end{equation}
and $\mu_n(P) > \mu_n(P')$. 
\end{itemize}
\end{proof}

Then we have the following theorem.
\begin{theorem}[elementary bound]
\label{elementary-bound-effect}
Let $P$ be a well-formed program with $\alpha =d(P)$ and let
$t_\alpha$ denote the tower function with $\alpha+1$ arguments.  Then
if $P\arrow P'$ then $t_\alpha(\mu_\alpha(P)) >
t_\alpha(\mu_\alpha(P'))$.
\end{theorem}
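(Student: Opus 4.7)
The plan is to mirror the strategy used for Theorem~\ref{elementary-bound-pure}, proceeding by case analysis on the reduction rules of Table~\ref{semantics}. Proposition~\ref{prop-termination} already supplies, for each case, a sharp description of how the vector $\mu_n(P)$ decreases. What remains is to check that feeding this decrease into $t_\alpha$ still yields a strict decrease. Since $t_\alpha$ is strictly monotone in each of its arguments with respect to the right-to-left lexicographic order (increasing an earlier coordinate is dwarfed by any strict increase of a later one because of the tower of exponentials), this is automatic whenever the vector decreases \emph{strictly in a single coordinate without increasing any later coordinate}.

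First I would dispose of the non-duplicating cases: $\beta$-reduction $E[(\lambda x.M)V] \arrow E[\sub{M}{V/x}]$, write $E[\st{r}{V}] \arrow E[*] \mid \store{r}{V}$, and the two read variants $E[\get{r}] \mid \store{r}{V} \arrow E[V]$. In each situation, Proposition~\ref{prop-termination} (and the convention that $\para$ and $r\leftarrow$ nodes are not counted while $\mathsf{set}(r)$ counts for two) shows that there exists a depth $i$ with $\occ_i(P') < \occ_i(P)$ and $\occ_j(P') = \occ_j(P)$ for all $j\neq i$. Hence $\mu_\alpha(P') < \mu_\alpha(P)$ coordinatewise (with equalities elsewhere), and strict monotonicity of $t_\alpha$ gives $t_\alpha(\mu_\alpha(P')) < t_\alpha(\mu_\alpha(P))$.

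The only interesting cases are the two that admit duplication, namely $E[\letm{x}{\oc V}{M}] \arrow E[\sub{M}{V/x}]$ and $E[\letm{x}{\get{r}}{M}] \mid \store{r}{\oc V} \arrow E[\sub{M}{V/x}] \mid \store{r}{\oc V}$. For both, Proposition~\ref{prop-termination} yields a bound of the same shape, namely
\[
\mu_n(P') \leq (\mu_n^{i+1}(P)\cdot k,\ \occ_i(P) - 2,\ \mu_{i-1}(P)),
\]
where $i$ is the depth of the redex and $k = \occ_{i+1}(P) \geq 2$. From here I repeat the calculation done for $\alpha = 2$ in the proof of Theorem~\ref{elementary-bound-pure}, but in general: apply Lemma~\ref{shift-lemma} iteratively from the highest coordinate downwards to sweep the multiplicative factor $k$ into the single coordinate $\occ_{i+1}(P)$, obtaining a vector bounded by $(\occ_\alpha(P),\ldots,\occ_{i+2}(P),\,\occ_{i+1}(P)^{\alpha-i+1},\,\occ_i(P)-2,\mu_{i-1}(P))$. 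It then suffices to show that shifting the $(i+1)$-th coordinate from $y$ to $y^c$ (for some fixed exponent $c$) while simultaneously decreasing the $i$-th coordinate from $z$ to $z-2$ produces a strictly smaller value under $t_\alpha$; this reduces, after dividing through by the appropriate power of $2$, to an inequality of the form $c\cdot 2^{-4} < 1$, exactly as in the $\alpha=2$ calculation.

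The main obstacle, and the only step requiring care, is the bookkeeping in this iterated application of Lemma~\ref{shift-lemma}: one must verify that all intermediate coordinates remain $\geq 2$ (which follows from the convention $\occ_j(P) \geq 2$ built into Definition~\ref{def-measure}) and that the boundary case $i = 0$ is handled correctly (in which case $\mu_{i-1}(P)$ is empty and the argument simplifies). A minor additional subtlety is that the \emph{letget} rule affects two disjoint subtrees of $P$ (the evaluation context and the store), but since the revised depth of Definition~\ref{depth-side effect} already places $\oc V$ inside the store at depth $R(r)+1 = \delta+1$, the counting of occurrences of $V$ after substitution agrees with the functional case, so no new combinatorial input is needed.
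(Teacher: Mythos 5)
Your proposal is correct and follows essentially the same route as the paper: the paper's proof likewise observes that the only new duplicating rule is the copying read, that inequality~\eqref{na-letget-decrease} has exactly the shape of~\eqref{na-letbang-decrease}, and then reuses the functional-case arithmetic (iterated applications of Lemma~\ref{shift-lemma} to sweep the factor $k$ into the coordinate one level below the redex, followed by the same final numeric check), while the non-duplicating rules are handled by a single-coordinate strict decrease and monotonicity of $t_\alpha$. The only cosmetic difference is that the paper's general calculation ends with the inequality $\alpha\cdot 2^{-(\alpha\cdot 2)^{k}}<1$ rather than the literal $3\cdot 2^{-4}<1$ of the $\alpha=2$ illustration, but this is exactly the generalisation you describe.
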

\begin{proof}
  From the proof of termination, we remark that the only new rule that
  duplicates occurrences is the one that copies from the
  store. Moreover, the derived inequality~\eqref{na-letget-decrease} is exactly
  the same as the inequality~\eqref{na-letbang-decrease}. Hence the
  arithmetic of the proof is exactly the same as in the proof of elementary bound
  for the functional case.
\end{proof}

\begin{corollary}
The normalisation of programs of bounded depth can be performed
in time elementary in the size of the terms.
\end{corollary}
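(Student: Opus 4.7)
The plan is to reduce the corollary to Theorem~\ref{elementary-bound-effect} by bounding three quantities separately: the number of reduction steps, the size of any intermediate reduct, and the cost of performing a single step. Fix a well-formed program $P$ with $\alpha = d(P)$ bounded by some constant.

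First, since Theorem~\ref{elementary-bound-effect} shows that $t_\alpha(\mu_\alpha(\cdot))$ is a natural number that strictly decreases along $\arrow$, the length of any reduction sequence from $P$ is bounded by $t_\alpha(\mu_\alpha(P))$. Each component $\occ_i(P)$ of the measure vector is at most $|P| + 2$, so $t_\alpha(\mu_\alpha(P))$ is a tower of height $\alpha+1$ whose arguments are linear in $|P|$; with $\alpha$ treated as a constant this is elementary in $|P|$ (matching Definition of the tower function and the definition of elementary functions via $t(n,k)$).

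Second, I would argue that for any reduct $P'$ of $P$, we also have $|P'| \leq t_\alpha(\mu_\alpha(P))$. Here is the idea: $|P'|$ is essentially $\sum_{i \leq \alpha} \occ_i(P')$, and the tower function dominates each component of $\mu_\alpha(P')$. By Theorem~\ref{elementary-bound-effect}, $t_\alpha(\mu_\alpha(P')) \leq t_\alpha(\mu_\alpha(P))$, and unfolding the definition of $t_\alpha$ shows that $t_\alpha(y_\alpha,\ldots,y_0) \geq \max_i y_i$ when all arguments are at least $2$ and $\alpha \geq 1$, so each $\occ_i(P')$ is bounded by $t_\alpha(\mu_\alpha(P))$. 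Summing over $i \leq \alpha$ multiplies by $\alpha+1$, still elementary.

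Third, a single reduction step (a $\beta$-step, a $\letbang$-step, or one of the three store-related steps) can be implemented in time polynomial in the size of the current program: locating the redex, performing the required substitution, and updating the parallel composition and store structure are all polynomial operations on the syntax tree. Composing the three bounds, the total cost of normalising $P$ is at most the number of steps times a polynomial in the maximal intermediate size, which is the product of two elementary functions of $|P|$, hence elementary. The main subtlety is the second step: one must check that the tower function really dominates every coordinate of $\mu_\alpha(P')$ and not just the lexicographically leading one, which is immediate from the nested-exponential shape of $t_\alpha$ together with the lower bound $\occ_i \geq 2$ built into Definition~\ref{def-measure}.
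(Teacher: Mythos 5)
Your proposal is correct and follows essentially the same route as the paper: the paper bounds the number of reduction steps by $t_\alpha(\mu_\alpha(P))$ via the strictly decreasing tower measure of Theorem~\ref{elementary-bound-effect}, and bounds the size of every reduct by the same quantity using the remark that $\size{P'} \leq 2\cdot\sum_{i}\occ_i(P')$ together with the fact that the tower dominates each coordinate of the measure. Your additional explicit observation that each individual step costs only polynomial time in the current program size is left implicit in the paper but is the intended reading.
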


\section{An Elementary Affine Type System}\label{type-system-sec}
The depth system  entails termination in elementary
time but does {\em not} guarantee that programs `do not go wrong'. 
In particular, the introduction and elimination of bangs during evaluation may
generate programs that deadlock, \emph{e.g.},
\begin{equation}
\label{deadlock}
\letb{y}{(\lambda x.x)}{\oc(yy)}
\end{equation}
is well-formed but the evaluation is stuck.  
In this section we introduce an elementary affine type system ($\lambdart$) that guarantees
that programs cannot deadlock (except when trying to read an empty store).

The upper part of Table~\ref{types-contexts} introduces the syntax
of types and contexts.
\begin{table}[h]
  \centering
\[
\begin{array}{rcll}
\multicolumn{3}{l}{t,t',\ldots} & \mbox{(Type variables)}\\
\alpha&::=& \behtype \mid A                             &\mbox{(Types)} \\
A&::=& t \mid \tertype \mid A\multimap \alpha \mid \oc A \mid \forall t.A
\mid \rgtype{r}{A}    &\mbox{(Value-types)} \\
\Gamma &::=&
\hyp{x_{1}}{\delta_{1}}{A_{1}},\ldots,\hyp{x_{n}}{\delta_{n}}{A_{n}}
&\mbox{(Variable contexts)} \\
R&::=& r_1:(\delta_1,A_1),\ldots,r_n:(\delta_n,A_n)  &\mbox{(Region contexts)} \\
\end{array}
\]
\\
\[
\begin{array}{c}
\inference
{}
{R \downarrow t}

\qquad

\inference
{}
{R \downarrow \tertype}
\qquad

\inference
{}
{R \downarrow \behtype}
\qquad

\inference
{R\downarrow A & R\downarrow \alpha}
{R\downarrow (A\limp \alpha)}

\\\\
\inference
{R\downarrow A}
{R\downarrow \oc A}

\qquad

\inference
{r:(\delta,A) \in R}
{R\downarrow \rgtype{r}{A}}

\qquad

\inference
{R \downarrow A & t \notin R}
{R \downarrow \forall t.A}

\\\\
\inference
{\forall r:(\delta,A) \in R & R\downarrow A}
{R\vdash}

\qquad

\inference
{R\vdash & R\downarrow \alpha}
{R\vdash \alpha} 

\\\\
\inference
{\forall \hyp{x}{\delta}{A}\in \Gamma & R\vdash A }
{R\vdash \Gamma}

\end{array}
\]
\caption{Types and contexts}\label{types-contexts}
\end{table}
Types are denoted with $\alpha,\alpha',\ldots$. Note that we
distinguish a special behaviour type $\behtype$ which is given to the
entities of the language which are not supposed to return a value
(such as a store or several terms in parallel) while types of entities
that may return a value are denoted with $A$.  Among the types $A$, we
distinguish type variables $t, t', \ldots$, a terminal type
$\tertype$, an affine functional type $A \limp \alpha$, the type
$\bang A$ of terms of type $A$ that can be duplicated, the type
$\forall t.A$ of polymorphic terms and the type $\rgtype{r}{A}$ of the
region $r$ containing values of type $A$. Hereby types may depend on
regions.

In contexts, natural numbers $\delta_i$ play the same role as in the
depth system. Writing $x : (\delta, A)$ means that the variable $x$
ranges on values of type $A$ and may occur at depth $\delta$. Writing
$r:(\delta,A)$ means that addresses related to region $r$ contain
values of type $A$ and that read and writes on $r$ may only happen at
depth $\delta$. The typing system will additionally guarantee that
whenever we use a type $\rgtype{r}{A}$ the region context contains an
hypothesis $r:(\delta,A)$.

Because types depend on regions, we have to be careful in stating in
Table~\ref{types-contexts} when a region-context and a type are
compatible ($R\downarrow \alpha$), when a region context is
well-formed ($R\vdash$), when a type is well-formed in a region
context (\mbox{$R\vdash \alpha$}) and when a context is well-formed in
a region context (\mbox{$R\vdash \Gamma$}).  A more informal way to
express the condition is to say that a judgement
$r_{1}:(\delta_1,A_{1}),\ldots,r_{n}:(\delta_n,A_{n}) \vdash \alpha$ is well formed provided
that:
$(1)$ all the region names
occurring in the types $A_1,\ldots,A_n,\alpha$ belong to the set
$\set{r_1,\ldots,r_n}$, 
$(2)$ all types of the shape
$\rgtype{r_{i}}{B}$ with $i\in \set{1,\ldots,n}$ and occurring in the
types $A_1,\ldots,A_n,\alpha$ are such that \mbox{$B=A_i$}.
We notice the following substitution property on types.

\begin{proposition}
\label{type-substitution}
If $R\vdash \forall t.A$ and $R\vdash B$ then 
$R\vdash \sub{A}{B/t}$.
\end{proposition}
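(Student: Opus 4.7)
The plan is to unfold the judgments and then reduce the statement to an induction on the structure of the type $A$. First, $R\vdash \forall t.A$ unfolds to $R\vdash$ together with $R\downarrow \forall t.A$, and the latter gives $R\downarrow A$ with the side condition $t\notin R$; similarly, $R\vdash B$ gives $R\vdash$ and $R\downarrow B$. Since $R\vdash$ is already available, it suffices to establish the following compatibility lemma: if $R\downarrow A$, $R\downarrow B$, and $t\notin R$, then $R\downarrow \sub{A}{B/t}$.

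I would prove this compatibility statement by induction on the derivation of $R\downarrow A$ (equivalently, on the structure of $A$). The base cases are immediate: for $A=t$, the substitution returns $B$ and the hypothesis $R\downarrow B$ closes the goal; for a different variable $t'$, or for $\tertype$ and $\behtype$, the type is syntactically unchanged by the substitution and the rule re-applies trivially. The inductive cases for $A_1\limp \alpha$ and $\oc A_1$ are direct applications of the induction hypothesis to each subterm, followed by the corresponding rule.

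The two cases that need a little more care are $\rgtype{r}{A'}$ and $\forall t'.A'$. For a region type $\rgtype{r}{A'}$, the premise of the rule forces $r:(\delta,A')\in R$, so $A'$ occurs inside $R$; since the side condition $t\notin R$ is meant to forbid $t$ from appearing free anywhere in $R$, the substitution $\sub{A'}{B/t}$ coincides with $A'$, and the rule applies unchanged. For $\forall t'.A'$, I would first use $\alpha$-conversion to rename $t'$ so that $t'\neq t$ and $t'$ does not occur free in $B$; the substitution then commutes with the binder, and the induction hypothesis applied to $A'$ (together with the preserved side condition $t\notin R$) yields $R\downarrow \sub{A'}{B/t}$, from which the $\forall$ rule concludes.

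The main obstacle, if any, is simply clerical: making explicit the reading of $t\notin R$ as meaning $t$ does not appear free anywhere within $R$ (not merely that $t$ is not one of the region keys). This is what legitimises the $\rgtype{r}{A'}$ case, and without it the statement would fail. Once this reading is fixed, the proof is a standard type-substitution lemma and no further arithmetic or combinatorial argument is required.
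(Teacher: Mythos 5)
Your proof is correct and follows essentially the same route as the paper's: an induction on the structure of $A$, with the decisive observation in the $\rgtype{r}{A'}$ case that $t\notin R$ must be read as ``$t$ does not occur free anywhere in $R$,'' so the substitution leaves $A'$ untouched. Your explicit factoring through the $R\downarrow$ judgement and the $\alpha$-renaming in the $\forall t'.A'$ case are slightly more careful than the paper's presentation, but they do not change the argument.
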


\begin{example}
One may verify that $r:(\delta,\tertype \limp \tertype)
\vdash \rgtype{r}{(\tertype \limp \tertype)}$ can be derived while
the following judgements cannot:
$r:(\delta,\tertype) \vdash \rgtype{r}{(\tertype \limp \tertype)}$, 
$r:(\delta,\rgtype{r}{\tertype}) \vdash \tertype$.
\end{example}

A typing judgement takes the form:
$$
R;\Gamma \vdash^{\delta} P:\alpha
$$
It attributes a type $\alpha$ to the program $P$ at depth $\delta$, in
the region context $R$ and the context $\Gamma$.
Table~\ref{eal-type-system} introduces an elementary affine type system 
{\em with regions}.
\begin{table}[h]
$$
\begin{array}{c}
  
\inference
{R\vdash \Gamma \\  \hyp{x}{\delta}{A}\in \Gamma}
{R;\Gamma \vdash^\delta x:A}

\qquad

\inference
{R\vdash \Gamma}
{R;\Gamma \vdash^\delta *:\tertype} 

\qquad

\inference
{R\vdash \Gamma \\  r:(\delta',A) \in R}
{R;\Gamma \vdash^\delta r:\rgtype{r}{A}}\\ \\

\inference
{\fo{x}{M} \leq 1 \\ R;\Gamma,\hyp{x}{\delta}{A} \vdash^\delta M:\alpha}
{R;\Gamma \vdash^\delta \lambda x.M:A \multimap \alpha}

\qquad

\inference{
R;\Gamma\vdash^\delta M:A \multimap \alpha &
R;\Gamma\vdash^\delta N:A}
{R;\Gamma \vdash^\delta MN:\alpha} \\ \\ 

\inference
{R;\Gamma \vdash^{\delta+1} M:A}
{R;\Gamma \vdash^\delta \oc M:\oc A}

\quad

\inference{
R;\Gamma\vdash^\delta M:\oc A& 
R;\Gamma,\hyp{x}{\delta+1}{A} \vdash^\delta N:B}
{R;\Gamma \vdash^\delta \letb{x}{M}{N}:B}

\\ \\

\inference
{R;\Gamma \vdash^\delta M : A & t \notin (R;\Gamma)}
{R;\Gamma \vdash^\delta M: \forall t.A}

\qquad

\inference
{R;\Gamma \vdash^\delta M : \forall t.A & R \vdash B}
{R;\Gamma \vdash^\delta M : \sub{A}{B/t}}

\\\\

\inference
{r:(\delta,A) \in R \\ R\vdash\Gamma}
{R;\Gamma \vdash^\delta \get{r}:A}
\quad

\inference
{r:(\delta,A) \in R \\ R;\Gamma \vdash^\delta
V : A}
{R;\Gamma \vdash^\delta \st{r}{V}:\tertype} 

\quad

\inference
{r:(\delta,A) \in R \\ R;\Gamma \vdash^\delta V:A}
{R;\Gamma \vdash^0  \store{r}{V} : \behtype}

\\\\

\inference
{R;\Gamma \vdash^\delta P:\alpha &
R;\Gamma \vdash^\delta S:\behtype}
{R;\Gamma \vdash^\delta (P\mid S):\alpha}  

\qquad

\inference
{ P_i \mbox{ not a store} \; i=1,2\\
  R;\Gamma \vdash^\delta P_i:\alpha_i}
{R;\Gamma \vdash^\delta (P_1\mid P_2):\behtype}  
\end{array}
$$
\caption{An elementary affine type system: $\lambdart$}
\label{eal-type-system}
\end{table}
One can see that the $\delta$'s are treated as in the depth
system.  Note that a region $r$ may occur at any depth. In the
$\letbang$ rule, $M$ should be of type $\oc A$ since $x$ of type $A$
appears one level deeper. A program in parallel with a store should
have the type of the program since we might be interested in the value
the program reduces to; however, two programs in parallel cannot
reduce to a single value, hence we give them a behaviour type. The
polymorphic rules are straightforward where $t \notin (R;\Gamma)$
means $t$ does not occur free in a type of $R$ or $\Gamma$.

\begin{example}
  The well-formed program~$(C)$ can be given the following typing
  judgement:
$
R; \_ \vdash^0 \bang(M_r) \mid \store{r}{\bang (\lambda y.M_{r'})} \mid
\store{r'}{\bang (\lambda y.*)} : \oc\oc\tertype
$
where: $R = r:(1,\oc(\tertype \multimap \tertype)), r':(2,\oc(\tertype \multimap
\tertype))$.
Also, we remark that the deadlocking program~\eqref{deadlock}
admits no typing derivation.
\end{example}

\begin{theorem}[subject reduction and progress] \label{progress-thm}
The following properties hold.
\begin{enumerate}
\item\emph{(Well-formedness)} Well-typed
  programs are well-formed. 
\item\emph{(Weakening)} \label{weakening} If $R;\Gamma \vdash^\delta
  P:\alpha$ and $R\vdash \Gamma, \Gamma'$ then $R;\Gamma,
  \Gamma'\vdash^\delta P:\alpha$.
\item\emph{(Substitution)} \label{sub-lemma} If
  $R;\Gamma,\hyp{x}{\delta'}{A} \vdash^\delta M:\alpha$ and
  $R;\Gamma'\vdash^{\delta'} V: A$ and $R \vdash \Gamma, \Gamma'$ then
  $R;\Gamma, \Gamma'\vdash^\delta \sub{M}{V/x}:\alpha$.
\item\emph{(Subject Reduction)} \label{sub-red-thm} If $R;\Gamma
  \vdash^\delta P:\alpha$ and $P\arrow P'$ then $R;\Gamma
  \vdash^\delta P':\alpha$.
\item\emph{(Progress)} Suppose $P$ is a closed typable program which cannot reduce. Then
$P$ is structurally equivalent to a program
\[
M_1 \para \cdots \para M_m \para S_1 \mid \cdots
\para S_n\quad m,n\geq 0
\]
where $M_i$ is either a value or can be decomposed as a term
$E[\get{r}]$ such that no value is
associated with the region $r$ in the stores $S_1,\ldots,S_n$.
\end{enumerate}
\end{theorem}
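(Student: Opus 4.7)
The plan is to prove the five parts in the stated order, since later parts depend on earlier ones. For Well-formedness, I would proceed by a straightforward induction on the typing derivation $R;\Gamma \vdash^\delta P:\alpha$, erasing all type information and checking that each typing rule has a corresponding depth rule. The only subtlety is that the side conditions on $\fo{x}{M}$ in the $\lambda$-abstraction rule match exactly between the two systems, and that region contexts $R$ in both systems carry matching depth annotations; the well-formedness premises $R \vdash \Gamma$ play no role once types are erased.

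For Weakening, I would again induct on the typing derivation, with the only nontrivial cases being the leaves (variable, $*$, region, $\get{r}$) where one needs the premise $R \vdash \Gamma, \Gamma'$ to conclude $R \vdash \Gamma, \Gamma'$ in the extended context; in the binding cases ($\lambda x.M$, $\letbang$) one must weaken on the inside context appropriately. For Substitution, I would induct on the derivation of $M$, with the base case being $x$ itself (where one invokes the weakening lemma on $V$ from $\Gamma'$ to $\Gamma, \Gamma'$), and the other variable cases producing $x$ unchanged. In the polymorphic rules, one needs the standard observation that substitution and type-generalization commute because $t$ does not occur in $\Gamma'$ or $R$.

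For Subject Reduction, I would perform a case analysis on the five reduction rules, using the substitution lemma for the two $\beta$-style rules and the $\letbang$-read rule. A small preliminary lemma, proved by induction on the shape of $E$, is needed: if $R;\Gamma \vdash^\delta E[M] : \alpha$ then there is some $\delta' \geq \delta$ and $\beta$ with $R;\Gamma \vdash^{\delta'} M : \beta$, and substituting any $M'$ of the same type at the same depth yields a derivation of $E[M']:\alpha$. For the rule $E[\st{r}{V}] \to E[*] \mid \store{r}{V}$, one uses the store typing rule together with the parallel rule combining a program and a $\behtype$-typed store. Structural congruence preservation is immediate from symmetry of the parallel rules. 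For the two reading rules, typing of the store gives the type $A$ needed to retype $V$ in the hole of $E$.

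For Progress, the plan is: assume $P$ closed, typable, and irreducible, and show $P \equiv M_1 \mid \cdots \mid M_m \mid S_1 \mid \cdots \mid S_n$ with each $M_i$ either a value or blocked on an empty $\get{r}$. Using structural equivalence, decompose $P$ into its parallel components; the store components form the $S_j$'s. For each non-store component $M_i$, I would prove by induction on the typing derivation a standard canonical-forms/progress claim: a closed well-typed term is either a value, reducible by a head rule, or decomposable as $E[\get{r}]$ with no matching store. The canonical forms lemma (closed value of type $A \multimap \alpha$ is $\lambda x.M$; of type $\oc A$ is $\oc V$; of type $\rgtype{r}{A}$ is $r$) handles the $\beta$ and $\letbang$ cases, and $\st{r}{V}$ always reduces. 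The only genuinely stuck form is $E[\get{r}]$ with no matching store $\store{r}{V}$ (or no duplicable store for the $\letbang\text{-}\get{}$ variant). The main obstacle is keeping this case analysis complete: one must be careful that a $\get{r}$ under a $\letbang$ binder does not get missed, and that the hypothesis ``no matching store'' is correctly understood relative to the two reading rules (consuming vs.\ copying), so the statement genuinely captures all stuck configurations.
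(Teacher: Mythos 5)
Your proposal is correct and follows essentially the same route as the paper: the same sequence of auxiliary lemmas (weakening, substitution by induction on the typing derivation, an evaluation-context replacement lemma proved by induction on $E$, preservation of typing under structural equivalence, and a classification/canonical-forms lemma feeding an induction on the structure of the threads for progress). The one point where you go slightly beyond the paper is in explicitly noting the polymorphic-rule case of the substitution lemma, which the paper's written-out case analysis omits.
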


\section{Expressivity} \label{sec-expressivity} In this section, we
consider two results that illustrate the expressivity of the
elementary affine type system.  First we show that all elementary
functions can be represented and second we develop an example of
iterative program with side effects.

\subsection{Completeness}
The representation result just relies on the functional core of the
language $\lambdaft$.  Building on the standard concept of Church
numeral, Table~\ref{church-encodings} provides a representation for
natural numbers and the multiplication function.
\begin{table}[h]
\[
\begin{array}{r c l l}
\nat &=& \forall t.\bang (t\limp t) \limp \bang (t\limp t)  &\mbox{(type of numerals)}\\\\
\churchn{n}&:& \nat         &\mbox{(numerals)}\\
\churchn{n}&=& \lambda f.\letm{f}{f}{\bang(\lambda x.f(\cdots (fx) \cdots))}\\\\


\mult &:& \nat \limp (\nat\limp \nat) &\mbox{(multiplication)}\\ 
\mult&=& \lambda n.\lambda m. \lambda f. \s{let} \ \bang f = f \
\s{in}\ 
    n(m \bang f)



\end{array}
\]
\caption{Representation of natural numbers and the multiplication function}\label{church-encodings}
\end{table}
We denote with $\mathbb{N}$ the set of natural numbers. The
precise notion of representation is spelled out in the following
definitions where by strong $\beta$-reduction we mean that reduction
under $\lambda$'s is allowed.

\begin{definition}[number representation]
  Let $\emptyset \vdash^\delta M : \nat$. We say $M$ \emph{represents} $n \in \mathbb{N}$, written $M
  \Vdash n$, if, by using a strong $\beta$-reduction relation, $M \treduc  \churchn{n}$.
\end{definition}
\begin{definition}[function representation]
  \label{def-function-rep}
  Let $\emptyset \vdash^\delta F: (\nat_1 \multimap \ldots \multimap \nat_k)
  \multimap \oc^p\nat$ where $p
\geq 0$ and $f:\mathbb{N}^k \rightarrow \mathbb{N}$. We say
$F$ \emph{represents} $f$, written $F \Vdash f$,
if for all $M_i$ and $n_i \in \mathbb{N}$  where $1 \leq i \leq
k$ such that $\emptyset \vdash^\delta M_i:N$ and $M_i \Vdash n_i$,
$FM_1\ldots M_k \Vdash f(n_1,\ldots,n_k)$. 
\end{definition}

Elementary functions are also characterized as the smallest class of
functions containing zero, successor, projection, subtraction and
which is closed by composition and bounded summation/product.  These
functions can be represented in the sense of
Definition~\ref{def-function-rep} by adapting the proofs from Danos
and Joinet~\cite{Danos}.
\begin{theorem}[completeness]\label{complete-thm}
  Every elementary function is representable in $\lambdaft$.
\end{theorem}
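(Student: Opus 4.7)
The plan is to proceed by structural induction on the inductive definition of the elementary functions, using the standard Danos--Joinet encoding adapted to our notation. First I would fix the representation of Church numerals at type $\nat$ as in Table~\ref{church-encodings}, check that the encoding $\overline{n}$ is indeed derivable with $\emptyset \vdash^{\delta} \overline{n} : \nat$ at any depth $\delta$, and verify the key iteration scheme: if $\emptyset \vdash^{\delta+1} h : A \multimap A$ and $\emptyset \vdash^{\delta+1} a : A$ then applying $\overline{n}\,\bang h$ yields (after a $\letbang$) a term of type $\bang(A \multimap A)$ that, applied to $a$, produces $h^n(a)$. The hypothesis $x{:}(\delta,A)$ may occur at most once at depth $\delta$ in a $\lambda$-body, but the $\bang$ on the functional argument of $\overline{n}$ precisely allows the controlled duplication that iteration requires.

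Next I would represent the basic functions. Zero is just $\overline{0}$. Successor is $\lambda n.\lambda f.\letbang\,g = n f\;\mathsf{in}\;\bang(\lambda x.g(\cdots))$, built so that the bound variable of the outer $\lambda$ keeps linear usage and the duplication is hidden behind $\bang$. Projection $\pi_i^k$ is delicate because affinity forbids discarding arguments silently at the same depth; I would therefore pass each argument through a $\letbang$ that feeds the unused coordinates to the identity at type $\oc\nat$, or equivalently type the projection as $\oc\nat_1 \multimap \cdots \multimap \oc\nat_k \multimap \oc\nat_i$ and use $\letbang$ to consume each bang. Modified subtraction is obtained via the usual pair-based predecessor, encoded with the second-order pair type $\forall t.(A \multimap B \multimap t) \multimap t$; iterating the successor-pair shift $n$ times gives the predecessor, and subtraction is iterated predecessor, both living at some fixed depth.

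For the two closure operations, composition $f(g_1,\ldots,g_m)$ is handled by applying the term $F$ for $f$ to the outputs of the $G_j$ for $g_j$; because the output type of $G_j$ is $\oc^{p_j}\nat$ while $F$ expects $\nat$, we strip the bangs with a cascade of $\letbang$ operators and, if necessary, rebox the final result to match the target type $\oc^{p}\nat$. Bounded summation and bounded product are encoded by iterating at type $\oc\nat$ the corresponding binary operations (addition, multiplication), using the iteration scheme above; each such bounded scheme adds one $\oc$ to the output type, which is exactly why the total iterated depth remains finite and matches the stratified elementary hierarchy.

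The main obstacle is bookkeeping the depths. Every step must produce a term typable in $\lambdaft$, which means the $\delta$-annotations in the context and the number of $\oc$'s on the output type must be consistent with the depth system of Table~\ref{depth-system}; in particular, iteration through $\overline{n}$ forces the iterated endofunction to live under a $\bang$, so each nesting of bounded sum/product increases $p$ by one. Verifying that this stratification respects the affinity restriction $\fo{x}{M}\le 1$ at each depth --- so that the encoded function never duplicates a linear variable at the wrong level --- is the delicate point; once this invariant is maintained through the induction, the representability of all elementary functions follows exactly as in \cite{Danos}.
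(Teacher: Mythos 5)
Your overall strategy coincides with the paper's: induction over the clauses generating the elementary functions (zero, successor, projection, subtraction, composition, bounded sum/product), with Church numerals at type $\nat$, the pair-based predecessor, and iteration through $\churchn{n}$, exactly as in the Danos--Joinet encoding. The genuine gap is in the composition case. You propose to reconcile the mismatch between the outputs $\oc^{p_j}\nat$ of the inner functions and the $\nat$ inputs expected by the outer one by ``stripping the bangs with a cascade of $\letbang$ operators''. In elementary affine logic there is no dereliction: $\letb{x}{M}{N}$ does not turn an $\oc A$ into an $A$ at the same depth; it binds $x$ of type $A$ one level \emph{deeper}, so the subsequent application of the outer function takes place at depth $\delta+1$ and its result reappears under an extra $\oc$. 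Since the $p_j$ are in general different, repeating this leaves the $m$ intermediate results at different bang-levels and the outer application cannot be typed. The missing ingredient is the family of coercions $C_i : \nat \multimap \oc^i\nat$, definable by iterating a suitably lifted successor via $\iit$, together with the lemmas that coercion preserves the representation of integers and of functions. The paper uses these to raise all inner outputs to the common level $q = \max_j p_j$, to lift the outer term $G$ to a version $G^{q+1}$ accepting $\oc^{q+1}\nat$ arguments (this lifting is the correct formal content of your ``strip and rebox''), and finally to restore plain $\nat$ inputs on the assembled term so that it represents the composite in the sense of Definition~\ref{def-function-rep}. The same device is needed for subtraction, whose natural typing is $\oc\nat\multimap\nat\multimap\oc\nat$ and must be pre-composed with a coercion before it meets the definition of representation. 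Without the coercions the induction does not close.

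A minor remark: your detour for projections is unnecessary. The system is affine, not linear --- weakening holds (Theorem~\ref{prop-depth-effect}(2) and Theorem~\ref{progress-thm}(2)) --- so unused arguments may simply be dropped and $\pi^k_i$ is typable at $\nat_1\multimap\cdots\multimap\nat_k\multimap\nat_i$ directly.
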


\subsection{Iteration with Side Effects}
We rely on a slightly modified language where 
reads, writes and stores
relate to concrete addresses rather than to abstract regions.
In particular, we introduce terms of the form $\nu x \ M$ to generate a fresh
address name $x$ whose scope is $M$. One can then write the following
program:
$$
\nu x \ ((\lambda y.\st{y}{V})x) \treduc \nu x \ * \mid \store{x}{V}
$$
where $x$ and $y$ relate to a region $r$, \ie they
are of type $\regtype{r}{A}$. Our type system can be easily 
adapted by associating region types with the address names.
%
Next we show that it is possible to
program the iteration of operations producing a side effect on an
inductive data structure. Specifically, in the following we show how
to iterate, possibly in parallel, an update operation on a list of
addresses of the store. The examples have been tested on a 
running implementation of the language.

Following Church encodings, we define the representation of lists and
the associated iterator in Table~\ref{list-representation}.
\begin{table}[h]
    $$
    \begin{array}{r c l l}
    \listtype{A} &=& \forall t.\oc(A \multimap t \multimap t) \multimap \oc(t
    \multimap t) & \text{(type of lists)}\\\\

        \churchl{u_1,\ldots, u_n} & : & \listtype{A} &
    \text{(list represent.)}\\
    \churchl{u_1,\ldots, u_n} &=& \lambda f.\letb{f}{f}{}
     \oc (\lambda x.fu_1(fu_2\ldots(fu_nx)) \\\\

     \lit &:& \forall
     u. \forall t. \oc(u
     \multimap t \multimap t) \multimap \listtype{u} \multimap \oc t
     \multimap \oc t & \text{(iterator)}\\
    \lit & = & \lambda f.\lambda l.\lambda z.\letb{z}{z}{}
     \letb{y}{lf}{\oc(yz)}


  \end{array}
  $$
  \caption{Representation of lists}
  \label{list-representation}
\end{table}
Here is the function multiplying the numeral pointed by an address
  at region $r$:
$$
\begin{array}{r c l}
  \mathsf{update} &:& \oc\regtype{r}{\nat} \multimap \oc \tertype \multimap \oc \tertype\\
  \mathsf{update} &=& \lambda x.\letb{x}{x}{\lambda
    z. {\oc ((\lambda y.\st{x}{y})(\mult \; \churchn{2}\; \get{x})}}
\end{array}
$$
Consider the following list of addresses and stores:
 $$
  \churchl{\oc x,\oc y,\oc z} \para \store{x}{ \churchn{m}} \para
  \store{y}{ \churchn{n}} \para \store{z}{ \churchn{p}}
  $$
  Note that the bang constructors are needed to match the type
  $\oc\regtype{r}{ \nat}$ of the
  argument of $\mathsf{update}$.
  Then we define the iteration as:
  $$
\begin{array}{c}
  \mathsf{run} : \oc\oc \tertype\qquad
  \mathsf{run} = \lit \; \oc \mathsf{update} \;
  \churchl{\oc x,\oc y,\oc z} \; \oc\oc * 
\end{array}
  $$
  Notice that it is well-typed with $R = r:(2,\nat)$ since both the
  read and the write appear at depth $2$. Finally, the program reduces
  by updating the store as expected:
  $$
  \begin{array}{rcl}
  &&\mathsf{run} \para  \store{x}{ \churchn{m}} \para
  \store{y}{\churchn{n}} \para \store{z}{ \churchn{p}}\\
  &\treduc& \oc\oc \tertype \para \store{x}{ \churchn{2m}} \para
  \store{y}{ \churchn{2n}} \para \store{z}{ \churchn{2p}}
  \end{array}
  $$
Building on this  example, suppose we want to write a program
with three concurrent threads where each thread 
multiplies by $2$ the memory cells pointed by a list.
Here is a function waiting to apply a functional $f$ to a value $x$ in
  three concurrent threads:
  $$
  \begin{array}{r c l}
    \mathsf{gen\_threads} &:& \forall t.\forall t'.\oc(t \multimap t')
    \multimap \oc t \multimap \behtype\\
    \mathsf{gen\_threads} &=& \lambda f.\letb{f}{f}{\lambda
      x.\letb{x}{x}{\oc(fx) \para \oc (fx) \para \oc (fx)}}
  \end{array}
  $$
  We define the functional $F$ as $\mathsf{run}$ but parametric in the
  list:
  $$
  \begin{array}{l}
    F : \listtype{\oc\regtype{r}{\nat}} \multimap \oc\oc \tertype\qquad
    F = \lambda l.\lit \; \oc \mathsf{update} \; l \; \oc\oc *
  \end{array}
  $$
  And the final term is simply:
  $$
  \begin{array}{c}
    \mathsf{run\_threads} : \behtype\qquad
    \mathsf{run\_threads} = \mathsf{gen\_threads} \; \oc F \;
    \oc\churchl{\oc x,\oc y,\oc z}
  \end{array}
  $$
  where $R=r:(3,\oc\nat)$. Our program then reduces as follows:
  $$
  \begin{array}{c l c l c l c l}
  &\mathsf{run\_threads} &\para&  \store{x}{ \churchn{m}} &\para&
  \store{y}{ \churchn{n}} &\para& \store{z}{ \churchn{p}}\\
  \treduc& \oc\oc\oc \tertype \para \oc\oc\oc \tertype \para \oc\oc\oc \tertype &\para& \store{x}{ \churchn{8m}} &\para&
  \store{y}{ \churchn{8n}} &\para& \store{z}{ \churchn{8p}}
  \end{array}
  $$
Note that different thread interleavings are possible but in this
particular case the reduction is confluent.

\section{Conclusion}
We have introduced a type system for a higher-order functional language with
multithreading and side effects that guarantees termination in elementary time
thus providing a significant extension of previous work that had focused 
on purely functional programs.
In the proposed approach, the depth system plays a key role and allows for
a relatively simple presentation. In particular we notice that we can dispense
both with the notion of {\em stratified region} that arises in recent work on the
termination of higher-order programs with side effects \cite{Amadio09,Boudol10} and with
the distinction between affine and intuitionistic hypotheses
\cite{Barber96,air-hal}.

As a future work, we would like to adapt our approach to polynomial
time. In another direction, one could ask if it is possible to program
in a simplified language without bangs and then try to infer types or
depths.



\paragraph{Acknowledgements}
We would like to thank Patrick Baillot for numerous helpful
discussions and a careful reading on a draft version of this
report.

\bibliographystyle{abbrv} 
\bibliography{madet-elementary}

\newpage
\appendix

\section{Proofs}\label{proofs-sec}

\subsection{Proof of theorem \ref{prop-depth-effect}}

  \begin{enumerate}
\item 
We consider the last rule applied in the typing of $M$.

\begin{itemize}

\item $\Gamma,x:\delta\vdash^\delta x$. The only free variable is $x$ and indeed
it is at depth $\delta$ in $\bang^\delta x$.

\item $\Gamma \vdash^\delta \lambda y.M$ is derived from $\Gamma,y:\delta \vdash^\delta M$.
If $x$ is free in $\lambda y.M$ then $x\neq y$ and $x$ is free in $M$. 
By inductive hypothesis, $x:\delta' \in \Gamma,y:\delta$ and all occurrences
of $x$ in $\bang^\delta M$ are at depth $\delta'$. By definition of depth, the same is true
for $\bang^\delta (\lambda y.M)$.

\item $\Gamma \vdash^\delta (M_1M_2)$  is derived from $\Gamma \vdash^\delta M_i$ for $i=1,2$.
By inductive hypothesis, $x:\delta' \in \Gamma$ and all occurrences of $x$ in $\bang^\delta M_i$,
$i=1,2$ are at depth $\delta'$. By definition of depth, the same is true for
$\bang^\delta (M_1M_2)$.

\item $\Gamma \vdash^\delta \bang M$ is derived from $\Gamma \vdash^{\delta+1} M$.
By inductive hypothesis, $x:\delta'\in \Gamma$ and all occurrences of $x$ in 
$\bang^{\delta+1} M$ are at depth $\delta'$ and notice that 
$\bang^{\delta+1}M = \bang^\delta(\bang M)$.

\item $\Gamma \vdash^\delta \letm{y}{M_{1}}{M_{2}}$ is derived from
$\Gamma \vdash^\delta M_1$ and $\Gamma,y:(\delta+1) \vdash^\delta M_2$.
Without loss of generality, assume $x\neq y$. 
By inductive hypothesis, $x:\delta' \in \Gamma$ and all occurrences of $x$ in 
$\bang^\delta M_i$, $i=1,2$ are at depth $\delta'$. By definition of depth,
the same is true for $\bang^\delta (\letm{y}{M_{1}}{M_{2}})$.

\item $M \equiv *$ or $M \equiv r$ or $M \equiv \get{r}$.
There is no free variable in these terms.

\item $M \equiv \letb{y}{\get{r}}{N}$.
  We have 
  $$
  \inference
  {R,r:\delta;\Gamma,y:(\delta + 1) \vdash^\delta N}
  {R,r:\delta;\Gamma \vdash^\delta \letb{y}{\get{r}}{N}}
  $$
  If $x$ occurs free in $M$ then $x$ occurs free in $N$. By
  induction hypothesis, $x:\delta' \in \Gamma$ and all occurrences of
  $x$ in $\oc^{\delta}N$ are at depth  $\delta'$. By
  definition of the  depth, this is also true for $\oc^{\delta}( \letb{y}{\get{r}}{N})$.

\item $M \equiv \st{r}{V}$.
  We have 
  $$
  \inference
  {R,r:\delta;\Gamma \vdash^{\delta} V}
  {R,r:\delta;\Gamma \vdash^{\delta} \st{r}{V}}
  $$
  If $x$ occurs free in $\st{r}{V}$ then $x$ occurs free in $V$. By
  induction hypothesis, $x:\delta' \in \Gamma$ and all occurrences of
  $x$ in $\oc^{\delta}V$ are at depth $\delta'$. By definition of the
  depth, this is also true for $\oc^{\delta}(\st{r}{V})$.

\item $M \equiv (M_1 \mid M_2)$.
  We have
  $$
  \inference
  {R;\Gamma \vdash^\delta M_i & i=1,2}
  {R;\Gamma \vdash^\delta (M_1\mid M_2)}
  $$
  If $x$ occurs free in $M$ then $x$ occurs free in $M_i$, $i=1,2$.
  By induction hypothesis, $x:\delta' \in \Gamma$ and all occurrences
  of $x$ in $\oc^\delta M_i$, $i=1,2$, are at depth $\delta'$.
  By definition of depth, the same is true of $\oc^\delta(M_1\mid M_2)$.

\end{itemize}

\item All the rules can be weakened by adding a context $\Gamma'$.

\item If $x$ is not free in $M$, we just have to check that
any proof of $\Gamma,x:\delta' \vdash^\delta M$ can be transformed 
into a proof of $\Gamma \vdash^\delta M$. 

So let us assume $x$ is free in $M$. 

We consider first the bound on the depth. By (1), we know that all
occurrences of $x$ in $\bang^\delta M$ are at depth $\delta'$.  By
definition of depth, it follows that $\delta'\geq \delta$ and the
occurrences of $x$ in $M$ are at depth $(\delta'-\delta)$.  An
occurrence in $\bang^{\delta'} V$ at depth $\delta'+\delta''$ will
generate an occurrence in $\bang^\delta \sub{M}{V/x}$ at the same depth
$\delta+ (\delta'-\delta)+\delta''$.

Next, we proceed by induction on the derivation of $\Gamma,x:\delta'\vdash^\delta M$.

\begin{itemize}

\item $\Gamma,x:\delta \vdash^\delta x$. Then $\delta=\delta'$, $\sub{x}{V/x}=V$,
and by hypothesis $\Gamma \vdash^{\delta'} V$.

\item $\Gamma,x:\delta' \vdash^{\delta} \lambda y.M$ 
is derived from $\Gamma,x:\delta',y:\delta \vdash^\delta M$,  with $x\neq y$ 
and $y$ not occurring in $N$.
By (2), $\Gamma,y:\delta\vdash^{\delta'} V$. 
By inductive hypothesis, $\Gamma,y:\delta \vdash^\delta \sub{M}{V/x}$, 
and then we conclude $\Gamma \vdash^\delta \sub{(\lambda y.M)}{V/x}$.

\item $\Gamma,x:\delta'\vdash^\delta (M_1M_2)$ is derived from
$\Gamma,x:\delta'\vdash^\delta M_i$, for $i=1,2$.
By inductive hypothesis, $\Gamma \vdash^\delta \sub{M_i}{V/x}$, for $i=1,2$ 
and then we conclude $\Gamma \vdash^\delta \sub{(M_1M_2)}{V/x}$.

\item $\Gamma,x:\delta' \vdash^\delta \bang M$ is derived from
$\Gamma,x:\delta' \vdash^{\delta+1} M$. 
By inductive hypothesis, $\Gamma \vdash^{\delta+1} \sub{M}{V/x}$, and then we conclude
$\Gamma \vdash^{\delta} \sub{\bang M}{V/x}$.

\item $\Gamma,x:\delta' \vdash^\delta \letm{y}{M_1}{M_2}$, with 
$x\neq y$ and $y$ not free in $V$ is derived from
$\Gamma,x:\delta' \vdash^\delta M_1$ and $\Gamma,x:\delta',y:(\delta+1) \vdash^\delta M_2$.
By inductive hypothesis, $\Gamma \vdash^\delta \sub{M_1}{V/x}$ 
$\Gamma,y:(\delta+1) \vdash^\delta \sub{M_2}{V/x}$, and then we conclude
$\Gamma \vdash^\delta \sub{(\letm{y}{M_1}{M_2})}{V/x}$.

\item $M \equiv  \letb{y}{\get{r}}{M_1}$.
  We have 
  $$
  \inference
  {R,r:\delta;\Gamma,x:\delta',y:(\delta + 1) \vdash^\delta M_1}
  {R,r:\delta;\Gamma,x:\delta' \vdash^\delta \letb{y}{\get{r}}{M_1}}
  $$
  By induction hypothesis we get 
  $$
  R,r:\delta;\Gamma,y:(\delta + 1) \vdash^\delta \sub{M_1}{V/x}
  $$
  and hence we derive 
  $$
  R,r:\delta;\Gamma \vdash^\delta \sub{(\letb{y}{\get{r}}{M_1})}{V/x}
  $$

\item $M \equiv \st{r}{V'}$.
  We have
  $$
  \inference
  {R,r:\delta;\Gamma,x:\delta' \vdash^{\delta} V'}
  {R,r:\delta;\Gamma,x:\delta'  \vdash^{\delta} \st{r}{V'}}
  $$
  By induction hypothesis we get
  $$
  R,r:\delta;\Gamma \vdash^{\delta} \sub{V'}{V/x}
  $$
  and hence we derive
  $$
  R,r:\delta;\Gamma  \vdash^{\delta} \sub{(\st{r}{V'})}{V/x}
  $$

\item $M \equiv (M_1 \mid M_2)$.
  We have 
  $$
  \inference
  {R;\Gamma,x:\delta' \vdash^\delta M_i & i=1,2}
  {R;\Gamma,x:\delta' \vdash^\delta (M_1\mid M_2)}
  $$
  
  By induction hypothesis we derive 
  $$
  R;\Gamma \vdash^\delta \sub{M_i}{V/x}
  $$
  and hence we derive 
  $$
  R;\Gamma \vdash^\delta \sub{(M_1\mid M_2)}{V/x}
  $$
 
\end{itemize}

\item
We proceed by case analysis on the reduction rules.
\begin{itemize}

\item Suppose $\Gamma \vdash^0 E[(\lambda x.M)V]$. 
Then for some $\Gamma'$ extending $\Gamma$ and $\delta\geq 0$ we must have
$\Gamma' \vdash^\delta (\lambda x.M)V$.
This must be derived from
$\Gamma',x:\delta \vdash^\delta M$ and 
$\Gamma' \vdash^\delta V$.
By (3), with $\delta=\delta'$, it follows that 
$\Gamma'\vdash^\delta \sub{M}{V/x}$ and that 
the depth of an occurrence in $E[\sub{M}{V/x}]$ 
is bounded by the depth of an occurrence which is already
in $E[(\lambda x.M)V]$.
Moreover, we  can derive $\Gamma \vdash^0 E[\sub{M}{V/x}]$.

\item  Suppose $\Gamma \vdash^0 E[\letm{x}{\bang V}{M}]$. 
Then for some $\Gamma'$ extending $\Gamma$ and $\delta\geq 0$ we must have
$\Gamma' \vdash^\delta \letm{x}{\bang V}{M}$.
This must be derived from
$\Gamma',x:(\delta+1) \vdash^\delta M$ and 
$\Gamma' \vdash^{(\delta+1)} V$.
By (3), with $(\delta+1)=\delta'$, it follows that 
$\Gamma'\vdash^\delta \sub{M}{V/x}$ and that 
the depth of an occurrence in $E[\sub{M}{V/x}]$ 
is bounded by the depth of an occurrence which is already
in $E[\letm{x}{\bang V}{M}]$.
Moreover, we  can derive $\Gamma \vdash^0 E[\sub{M}{V/x}]$.

\item $E[\st{r}{V}]  \arrow  E[*] \mid \store{r}{V}$\\
We have 
$R;\Gamma \vdash^0 E[\st{r}{V}]$
from which we derive 
$$
\inference
{R;\Gamma \vdash^\delta V}
{R;\Gamma \vdash^\delta \st{r}{V}}
$$
for some $\delta \geq 0$, with $r:\delta \in R$. Hence we can derive
$$
\inference
{R;\Gamma \vdash^{\delta} V}
{R;\Gamma \vdash^0 \store{r}{V}}
$$
Moreover, we have 
as an axiom $R;\Gamma \vdash^{\delta} *$ 
thus we can derive
$R;\Gamma \vdash^0 E[*]$.
Applying the parallel rule we finally get
$$R;\Gamma \vdash^0 E[*] \mid \store{r}{V}$$

Concerning the depth bound, clearly we have $d(E[*] \mid \store{r}{V})
= d(E[\st{r}{V}])$.

\item $E[\get{r}] \mid \store{r}{V} \rightarrow E[\sub{M}{V/x}]$\\
We have $R;\Gamma \vdash^0 E[\get{r}] \mid \store{r}{V}$ from which we
derive 
$$
\inference
{}
{R;\Gamma \vdash^{\delta} \get{r}}
$$
and 
$$
\inference
{}
{R;\Gamma,x:\delta \vdash^{\delta} M}
$$
for some $\delta \geq 0$, with $r:\delta \in R$, and 
$$
\inference
{R;\Gamma \vdash^{\delta} V}
{R;\Gamma \vdash^0 \store{r}{V}}
$$
Hence we can derive
$$
R;\Gamma \vdash^0 E[V]
$$

Concerning the depth bound, clearly we have $d(E[V])
= d(E[\get{r}] \mid \store{r}{V})$.

\item $E[\letb{x}{\get{r}}{M}] \mid \store{r}{\oc V} \rightarrow
  E[\sub{M}{V/x}] \mid \store{r}{\oc V}$\\
  We have  $R;\Gamma \vdash^0 E[\letb{x}{\get{r}}{M}] \mid
  \s{r}{\oc V}$ from which we derive
  $$
  \inference
  {R;\Gamma',x:(\delta + 1) \vdash^\delta M}
  {R;\Gamma' \vdash^\delta \letb{x}{\get{r}}{M}}
  $$
  for some $\delta \geq 0$ with $r:\delta \in R$, and some $\Gamma'$
  extending $\Gamma$. We also derive
  $$
  \inference
  {R;\Gamma \vdash^{\delta+1} V}
  {
    \inference
    {R;\Gamma \vdash^{\delta} \oc V}
    {R;\Gamma \vdash^0 \store{r}{\oc V}}
  }
  $$
  By (2) we get $R;\Gamma' \vdash^{\delta+1} V$.
  By (3) we derive $$R;\Gamma' \vdash^\delta \sub{M}{V/x}$$
  hence
  $$R;\Gamma \vdash^0 E[\sub{M}{V/x}]$$
  and finally
  $$R;\Gamma \vdash^0 E[\sub{M}{V/x}] \mid \store{r}{\oc V}$$

Concerning the depth bound, by (3), the depth of an occurrence in $
E[\sub{M}{V/x}] \mid \store{r}{\oc V}$ is bounded by the depth of an
occurrence which is already in $E[\letb{x}{\get{r}}{M}] \mid
\store{r}{\oc V}$, hence $d(E[\sub{M}{V/x}] \mid \store{r}{\oc V})
\leq d(E[\letb{x}{\get{r}}{M}] \mid \store{r}{\oc V})$.

\end{itemize}
\end{enumerate}

\subsection{Proof of proposition \ref{prop-termination}}
 We do this by case analysis on the
  reduction rules. 

\begin{itemize}
\item $P=E[(\lambda x.M)V] \reduc P' = E[\sub{M}{V/x}]$\\
Let the occurrence of the redex $(\lambda x.M)V$ be
at depth $i$.
The restrictions on the formation of terms require that
$x$ occurs at most once in $M$ at depth $0$.
Then $\occ_i(P) -3 \geq \occ_i(P')$ because we remove
the nodes for application and $\lambda$-abstraction and either 
$V$ disappears or the occurrence of the variable $x$ in $M$ 
disappears (both being at the same depth as the redex).
Clearly $\occ_j(P) = \occ_j(P')$ if $j \neq i$, hence 
\begin{equation}
\label{lambda-decrease}
\mu_n(P') \leq \\(\occ_n(P), \ldots, \occ_{i+1}(P),\occ_i(P) - 3,\mu_{i-1}(P))
\end{equation}
and
$\mu_n(P)>\mu_n(P')$.

\item $P=E[\letm{x}{!V}{M}] \reduc P' = E[\sub{M}{V/x}]$\\
Let the occurrence of the redex 
$\letm{x}{!V}{M}$ be at depth $i$.
The restrictions on the formation of terms require that
$x$ may only occur in $M$ at depth $1$ and hence in $P$
at depth $i+1$. 
We have that $\occ_i(P') = \occ_i(P) -2$ 
 because the $\letbang$ node disappear.
Clearly,  $\occ_j(P) = \occ_j(P')$ if $j<i$.
The number of occurrences of $x$ in $M$ is bounded by 
$k=\occ_{i+1}(P) \geq 2$. 
Thus if  $j>i$  then $\occ_{j}(P') \leq k \cdot
\occ_j(P)$. Let's write, for $ 0 \leq i \leq n$: 
$$\mu_n^i(P) \cdot k = (\occ_n(P) \cdot k,
\occ_{n-1}(P) \cdot k, \ldots, \occ_{i}(P) \cdot k)$$
Then we have
\begin{equation}
\label{letbang-decrease}
  \mu_n(P') \leq (\mu_n^{i+1}(P) \cdot k, \occ_i(P) -2, \mu_{i-1}(P))
\end{equation}
and finally 
$\mu_n(P)>\mu_n(P')$.

\item $P\equiv E[\st{r}{V}]  \arrow P' \equiv E[*] \mid
  \store{r}{V}$\\
If $R;\Gamma \vdash^{\delta}  \st{r}{V}$ then by~\ref{prop-depth-effect}(4) we have $R;\Gamma
\vdash^0 \store{r}{V}$ with $R(r) =
\delta$. Hence, by definition of the depth, the occurrences
in $V$ stay at depth $\delta$ in $\store{r}{V}$. Moreover, the node
$\st{r}{V}$ disappears and the nodes $*$, $\para$, and $r
\leftarrow$ appear. Recall that we assume the
occurrences $\para$ and $r \leftarrow$ do not count in the
measure and that $\mathsf{set}(r)$ counts for two occurrences.
Thus $\occ_{\delta}(P') = \occ_{\delta}(P) -
2 + 1 + 0 + 0$. The number of occurrences at other depths stay unchanged, hence
$\mu_n(P) > \mu_n(P')$.

\item $P \equiv E[\get{r}] \mid \store{r}{V} \arrow P'
  \equiv E[V]$ \\
  If $R;\Gamma \vdash^0 \store{r}{V}$ with $R(r) = \delta$, then
  $\get{r}$ must be at depth $\delta$ in $E[~]$. Hence, by definition
  of the depth, the occurrences in $V$ stay at depth $\delta$, while
  the node $\get{r}$ and $\para$
  disappear. Thus $\occ_{\delta}(P') = \occ_{\delta}(P) - 1$ and the
  number of occurrences at other depths stay unchanged, hence
  $\mu_n(P) > \mu_n(P')$.

\item $P \equiv E[\letm{x}{\get{r}}{M}] \mid \store{r}{\bang V} \arrow
P' \equiv E[\sub{M}{V/x}] \mid \store{r}{\bang V}$\\
This case is the only source of duplication with the reduction rule
on $\letbang$.
Suppose $R;\Gamma \vdash^{\delta} \letm{x}{\get{r}}{M}$. Then
we must have $R;\Gamma \vdash^{\delta+1} V$.
The restrictions on the formation of terms require that
$x$ may only occur in $M$ at depth $1$ and hence in $P$
at depth $\delta+1$. Hence the occurrences in $V$ stay at the same
depth in $\sub{M}{V/x}$, while
the $\mathsf{let}$, $\get{r}$ and some $x$ nodes disappear, hence
$\occ_{\delta}(P) \leq \occ_{\delta}(P') -2$. The number of occurrences of $x$ in $M$ is bounded by 
$k=\occ_{\delta+1}(P) \geq 2$. 
Thus if  $j>\delta$  then $\occ_{j}(P') \leq k \cdot \occ_j(P)$.
Clearly,  $\occ_j(M) = \occ_j(M')$ if $j<i$. Hence, we have
\begin{equation}
  \label{letget-decrease}
  \mu_n(P') \leq (\mu_n^{i+1}(P) \cdot k, \occ_i(P)-2, \mu_{i-1}(P))
\end{equation}
and $\mu_n(P) > \mu_n(P')$. 
\end{itemize}

\subsection{Proof of lemma \ref{shift-lemma}}
We start by remarking some basic inequalities.

\begin{lemma}[some inequalities]\label{inequalities-lemma}
The following properties hold on natural numbers.

\begin{enumerate}

\item $\qqs{x\geq 2, y\geq 0}{(y+1) \leq x^y}$

\item $\qqs{x\geq 2, y\geq 0}{(x\cdot y) \leq x^y}$

\item $\qqs{x\geq 2, y,z \geq 0}{(x\cdot y)^z \leq x^{(y\cdot z)}}$

\item $\qqs{x\geq 2, y\geq 0, z\geq 1}{x^z\cdot y \leq x^{(y\cdot z)}}$

\item If $x\geq y\geq 0$ then $(x-y)^k \leq (x^k - y^k)$

\end{enumerate}
\end{lemma}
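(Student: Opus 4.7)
The plan is to prove each of the five inequalities in turn; parts (1)--(4) form a short ladder where (1), proved by induction, makes (2) and (4) easy and (2) immediately implies (3), so the real content is in (1) and (5).

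For (1), I would induct on $y$. The base case $y = 0$ reads $1 \leq x^{0} = 1$. For the step, using $x \geq 2$ and $x^{y} \geq 1$, one has $x^{y+1} = x \cdot x^{y} \geq 2 x^{y} = x^{y} + x^{y} \geq (y+1) + 1 = y+2$, where the final inequality is the induction hypothesis plus $x^{y} \geq 1$. Part (2) follows by splitting on $y$: for $y=0$ it is $0 \leq 1$, and for $y \geq 1$ applying (1) with $y-1$ in place of $y$ gives $y \leq x^{y-1}$, whence $x \cdot y \leq x^{y}$. Part (3) is a one-liner from (2): $(x \cdot y)^{z} \leq (x^{y})^{z} = x^{yz}$. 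For (4), split again on $y$: if $y=0$ the two sides are $0$ and $1$; if $y \geq 1$ and $z \geq 1$, chain $y \cdot x^{z} \leq x^{y-1} \cdot x^{z} \leq x^{z(y-1)} \cdot x^{z} = x^{yz}$, where the first step is (1) and the middle step uses $z \geq 1$ to raise the exponent from $y-1$ to $z(y-1)$.

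For (5), I would argue under the (evidently intended) assumption $k \geq 1$, since for $k = 0$ the stated inequality reads $1 \leq 0$. Since $x - y \leq x$ we get $(x-y)^{k-1} \leq x^{k-1}$, hence $(x-y)^{k} = (x-y)(x-y)^{k-1} \leq (x-y)\, x^{k-1} = x^{k} - y \cdot x^{k-1}$. Now $x \geq y \geq 0$ gives $x^{k-1} \geq y^{k-1}$, so $y \cdot x^{k-1} \geq y^{k}$, and combining these two bounds yields $(x-y)^{k} \leq x^{k} - y^{k}$.

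There is no serious obstacle; all five inequalities are elementary. The only watch-outs are (i) the small-$y$ corner cases in (2) and (4), where the inequality must be checked directly rather than via (1), and (ii) the corner case $k = 0$ in (5), which is outside the intended scope and should be flagged. I would structure the writeup by stating (1) and (2) first as the reusable building blocks, deriving (3) and (4) from them in a sentence each, and treating (5) as an independent short calculation.
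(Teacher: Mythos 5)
Your proof is correct, and parts (1) and (2) match the paper's argument essentially verbatim. The remaining parts take slightly different (equally elementary) routes. For (3) the paper inducts on $z$, using (2) in the inductive step, whereas you observe directly that $(x\cdot y)^{z}\leq (x^{y})^{z}=x^{y\cdot z}$ by monotonicity of $z$-th powers; your version is shorter and the induction buys nothing here. For (4) the paper goes through $y\leq y^{z}$ and then invokes (3), while you chain $y\cdot x^{z}\leq x^{y-1}\cdot x^{z}\leq x^{z(y-1)}\cdot x^{z}=x^{yz}$ using only (1); both are fine, though the paper's route reuses (3) and yours reuses (1), so neither is structurally cleaner. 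For (5) the paper expands $x^{k}=((x-y)+y)^{k}=(x-y)^{k}+y^{k}+p$ with $p\geq 0$ by the binomial theorem, whereas you factor out one power and bound $(x-y)^{k}\leq (x-y)x^{k-1}=x^{k}-y\cdot x^{k-1}\leq x^{k}-y^{k}$; the two arguments are of comparable length. Your remark that (5) requires $k\geq 1$ is a genuine catch: the paper's own binomial argument also silently needs $k\geq 1$ (at $k=0$ the claimed decomposition would force $p=-1$), and indeed the lemma is only ever applied in the paper with $k\geq 1$, so flagging the corner case is appropriate rather than pedantic.
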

\begin{proof}
\begin{enumerate}
\item By induction on $y$. The case for $y=0$ is clear.
For the inductive case, we notice:
\[
(y+1)+1 \leq 2^y + 2^y = 2^{y+1} \leq x^{y+1}~.
\]
\item By induction on $y$. The case $y=0$ is clear.
For the inductive case, we notice:
\[
\begin{array}{llll}
x \cdot (y+1) &\leq  &x\cdot  (x^y)  &\mbox{(by (1))} \\
       &=  &x^{(y+1)}
\end{array}
\]
\item
By induction on $z$. The case $z=0$ is clear.
For the inductive case, we notice:
\[
\begin{array}{llll}
(x\cdot y)^{z+1}  &=      &(x\cdot y)^{z} (x\cdot y)  \\
            &\leq   &x^{y \cdot z} (x\cdot y)     &\mbox{(by inductive hypothesis)} \\
            &\leq   &x^{y\cdot z} (x^y)    &\mbox{(by (2))} \\
            &=      &x^{y\cdot (z+1)}
\end{array}
\]
\item
From $z\geq 1$ we derive $y\leq y^z$. Then:
\[
\begin{array}{llll}
x^z \cdot y  &\leq &x^z \cdot y^z  \\
       &=    &(x\cdot y)^z   \\
       &\leq &x^{y\cdot z}   &\mbox{(by (3))}
\end{array}
\]
\item
By the binomial law, we have $x^k= ((x-y) + y)^k = (x-y)^k+y^k+ p$  
with $p \geq 0$. Thus $(x-y)^k = x^k-y^k- p$ which implies
$(x-y)^k \leq x^k-y^k$. 
\end{enumerate}
\end{proof}

We also need the following property.

\begin{lemma}[pre-shift]
\label{pre-shift}
Assuming $\alpha \geq 1$ and $\beta \geq 2$, the following property
holds for the tower functions with $x,\vc{x}$ ranging over numbers
greater or equal than $2$: $$\beta \cdot t_\alpha(x,\vc{x}) \leq
t_\alpha(\beta \cdot x,\vc{x})$$
\end{lemma}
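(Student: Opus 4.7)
The plan is to argue directly from the recursive definition of $t_\alpha$, with essentially no induction required. First I unfold both sides of the claimed inequality using the recurrence of Definition~\ref{tower-def}:
\begin{align*}
t_\alpha(x,\vc{x}) &= (\alpha \cdot x)^{2^{t_\alpha(\vc{x})}}, \\
t_\alpha(\beta \cdot x,\vc{x}) &= (\alpha \cdot \beta \cdot x)^{2^{t_\alpha(\vc{x})}}.
\end{align*}
The crucial observation is that the exponent $2^{t_\alpha(\vc{x})}$ is identical on the two sides---only the base changes, by a factor $\beta$---so the recursive structure of $\vc{x}$ plays no real role in the argument.

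Next, abbreviating $z = 2^{t_\alpha(\vc{x})}$, the target inequality becomes
\[
\beta \cdot (\alpha \cdot x)^{z} \;\leq\; (\alpha \cdot \beta \cdot x)^{z} \;=\; \beta^{z} \cdot (\alpha \cdot x)^{z},
\]
and after cancelling the common positive factor $(\alpha \cdot x)^{z}$ I am reduced to the elementary inequality $\beta \leq \beta^{z}$.

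Finally, I check that $z \geq 1$: since $t_\alpha(\vc{x})$ is a nonnegative integer (immediate from the base case $t_\alpha() = 0$ and the fact that the recursive expression is a positive integer whenever $\vc{x}$ is nonempty), one has $z = 2^{t_\alpha(\vc{x})} \geq 2^{0} = 1$. Combined with $\beta \geq 2$, this yields $\beta \leq \beta^{z}$, closing the argument. The degenerate case in which $\vc{x}$ is absent (so $n = 1$ and $t_\alpha(x) = \alpha \cdot x$) is consistent with the inequality, since both sides then equal $\alpha \cdot \beta \cdot x$. There is no real obstacle here---the content of the lemma is just that enlarging the innermost base of the tower by a factor $\beta$ is already enough to absorb an outer multiplication by $\beta$, thanks to the exponential amplification built into the definition of $t_\alpha$; if desired, one could also phrase the final step as an instance of Lemma~\ref{inequalities-lemma}(4).
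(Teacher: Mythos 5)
Your proof is correct and follows the same route as the paper: the paper's proof consists precisely of the observation that the claim reduces to $\beta \leq \beta^{2^{t_\alpha(\vc{x})}}$, which is exactly the inequality you isolate and verify. Your version merely spells out the unfolding of the recurrence and the check that the exponent is at least $1$, which the paper leaves implicit.
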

\begin{proof}
This follows from:
\[
\beta \leq \beta^{2^{t_\alpha(\vc{x})}}
\]  
\end{proof}

Then we can derive the proof of the shift lemma as follows.

Let $k=t_\alpha(x',\vc{x}) \geq 2$. Then 
\[
\begin{array}{lll}
t_{\alpha}(\beta \cdot x, x', \vc{x}) = \beta\cdot (\alpha \cdot x)^{2^{k}} 
&\leq (\alpha \cdot x)^{\beta \cdot 2^{k}}   &\mbox{(by lemma \ref{inequalities-lemma}(3))} \\
&\leq (\alpha \cdot x)^{(\beta\cdot 2)^k} \\
&\leq (\alpha \cdot x)^{2^{(\beta \cdot k)}} &\mbox{(by lemma \ref{inequalities-lemma}(3))}
\end{array}
\]
and by lemma \ref{pre-shift}
$\beta \cdot t_\alpha(x',\vc{x}) \leq t_\alpha(\beta \cdot x',\vc{x})$.

Hence 
$$
(\alpha \cdot x)^{2^{(\beta\cdot k)}} \leq
(\alpha \cdot x)^{2^{t_\alpha(\beta \cdot x',\vc{x})}} =
t_\alpha(x,\beta \cdot x',\vc{x})
$$

\subsection{Proof of theorem \ref{elementary-bound-effect}}
Suppose $\mu_\alpha(P)=(x_0,\ldots,x_\alpha)$ so that $x_i$ corresponds
to the occurrences at depth $(\alpha-i)$ for $0\leq i \leq \alpha$.
Also assume the reduction is at depth $(\alpha - i)$. By looking at
equations~\eqref{lambda-decrease} and~\eqref{letbang-decrease} in the
proof of termination (Proposition~\ref{prop-termination}), we see that the components
$i+1,\ldots,\alpha$ of $\mu_\alpha(P)$ and $\mu_\alpha(P')$ coincide. 
Hence, let $k=2^{t_{\alpha}(x_{i+1},\ldots,x_{\alpha})}$.
By definition of the tower function, $k\geq 1$. 

We proceed by case analysis on the reduction rules.
\begin{itemize}
\item $P \equiv \letb{x}{\oc V}{M} \arrow P' \equiv \sub{M}{V/x}$\\
By inequality~\eqref{letbang-decrease} we know that: 
\[
\begin{array}{r c l}
t_\alpha(\mu_\alpha(P'))
&\leq& t_\alpha(x_0 \cdot x_{i-1},\ldots,x_{i-1}\cdot x_{i-1},x_{i}-2,x_{i+1},\ldots,x_{\alpha}) \\
&=& t_\alpha(x_0\cdot x_{i-1},\ldots, x_{i-1}\cdot x_{i-1},x_{i}-2)^{k}
\end{array}
\]
By iterating lemma~\ref{shift-lemma}, we derive:
\[
\begin{array}{r c l}
&&t_\alpha(x_0\cdot x_{i-1},x_{1}\cdot x_{i-1},\ldots, x_{i-1}\cdot x_{i-1},x_{i}-2) \\
&\leq& t_\alpha(x_0,x_{1} \cdot x_{i-1}^{2}, \ldots, x_{i-1} \cdot x_{i-1},x_{i}-2) \\
&\leq& \ldots \\
&\leq& t_{\alpha}(x_0,x_1,\ldots, x_{i-1}^{i},x_{i}-2)
\end{array}
\]
Renaming $x_{i-1}$ with $x$ and $x_{i}$ with $y$, we are left to show that:
\[
(\alpha x^{i})^{2^{(\alpha \cdot (y-2))^{k}}} 
<
(\alpha x)^{2^{(\alpha \cdot y)^{k}}} 
\]
Since $i\leq \alpha$ the first quantity is bounded by:
\[
(\alpha x)^{\alpha \cdot 2^{(\alpha \cdot (y-2))^{k}}}
\]

We notice:
\[
\begin{array}{rcl}
&&\alpha \cdot 2^{(\alpha \cdot (y-2))^{k}}  \\
&=& \alpha \cdot 2^{(\alpha\cdot y - \alpha \cdot 2)^{k}} \\
&\leq& \alpha \cdot 2^{(\alpha \cdot y)^{k} - (\alpha \cdot 2)^{k}}
\qquad \mbox{(by lemma \ref{inequalities-lemma}(5))}
\end{array}
\]
So we are left to show that:
$$\alpha 2^{(\alpha\cdot y)^k - (\alpha \cdot 2)^k)} \leq  2^{(\alpha \cdot y)^{k}}$$
Dividing by  $2^{(\alpha \cdot y)^{k}}$ and  recalling that $k\geq 1$, it remains to check:
\[
\alpha \cdot 2^{-(\alpha \cdot 2)^{k}} \leq  \alpha \cdot 2^{-(\alpha \cdot 2)} < 1
\]
which is obviously true for $\alpha \geq 1$.

\item $P \equiv (\lambda x.M)V \arrow P' \equiv  \sub{M}{V/x}$\\
By equation~\eqref{lambda-decrease}, we have that:
\[
t_\alpha(\mu_\alpha(P')) 
\leq t_\alpha(x_0 ,\ldots,x_{i-1},x_{i}-2,x_{i+1},\ldots,x_{\alpha}) 
\]
and one can check that this quantity is strictly less than:
\[
t_\alpha(\mu_\alpha(P)) =  t_\alpha(x_0 ,\ldots,x_{i-1},x_{i},x_{i+1},\ldots,x_{\alpha}) 
\]

\item $P \equiv \letb{x}{\get{r}}{M} \mid \store{r}{\oc V} \arrow
  P' \equiv \sub{M}{V/x} \para \store{r}{\oc V} $\\
 Let $k=2^{t_{\alpha}(x_{i+1},\ldots,x_{\alpha})}$.
By definition of the tower function, $k\geq 1$. 
By equation~\eqref{letget-decrease} we have
\[
\begin{array}{ll}
t_\alpha(\mu_\alpha(P')) 
&\leq t_\alpha(x_0 \cdot x_{i-1},\ldots,x_{i-1}\cdot x_{i-1},x_{i}-2,x_{i+1},\ldots,x_{\alpha}) \\
&= t_\alpha(x_0\cdot x_{i-1},\ldots, x_{i-1}\cdot x_{i-1},x_{i}-2)^{k}
\end{array}
\]
And we end up in the case of the rule for $\letbang$.

\item For the read that consume a value from the store, by looking at
  the proof of termination, we see that exactly one
  element of the vector $\mu_{\alpha}(P)$ is strictly decreasing during
  the reduction, hence one
  can check that $t_\alpha(\mu_\alpha(P)) > t_\alpha(\mu_\alpha(P'))$.

\item The case for the write is similar to the read.
\end{itemize}

We conclude with the following remark that shows that the size of a
program is proportional to its number of occurrences.
\begin{remark}
  The size of a program $\size{P}$ of depth $d$ is at most twice
  the sum of its occurrences: $\size{P} \leq 2 \cdot \sum_{0 \leq i
    \leq d} \occ_i(P)$.
\end{remark}
Hence the size of a program $P$ is bounded by $t_d(\mu_d(P))$.
\subsection{Proof of proposition \ref{type-substitution}}
 By induction on $A$.
  \begin{itemize}
  \item $A \equiv t'$\\
    We have
    $$
    \inference
    {R \vdash t' & t \notin R}
    {R \vdash \forall t.t'}
    $$
    If $t \neq t'$ we have $\sub{t'}{B/t} \equiv t'$ hence $R \vdash
    [B/t]t'$. If $t \equiv t'$ then we have $\sub{t'}{B/t} \equiv B$ hence
    $R \vdash \sub{t'}{B/t}$. 
  \item $A \equiv \tertype$\\
    We have 
    $$
    \inference
    {R \vdash \tertype & t \notin R}
    {R \vdash \forall t.\tertype}
    $$
    from which we deduce $R \vdash \sub{\tertype}{B/t}$.
  \item $A \equiv (C \multimap D)$\\
    By induction hypothesis we have $R \vdash \sub{C}{B/t}$ and $R \vdash
    \sub{D}{B/t}$. We then derive
    $$
    \inference
    {R \vdash \sub{C}{B/t} & R \vdash \sub{D}{B/t}}
    {R \vdash \sub{(C \multimap D)}{B/t}}
    $$
  \item $A \equiv \oc C$\\
    By induction hypothesis we have $R \vdash \sub{C}{B/t}$, from which we
    deduce
    $$
    \inference
    {R \vdash \sub{C}{B/t}}
    {R \vdash \sub{\oc C}{B/t}}
    $$
  \item $A \equiv \rgtype{r}{C}$\\
    We have 
    $$
    \inference
    {\inference
      {R \vdash & r:C \in R}
      {R \vdash \rgtype{r}{C}}
      & t \notin R}
    {R \vdash \forall t.\rgtype{r}{C}}
    $$
    As $t \notin R$ and $r:(\delta,C) \in R$, we have $r:(\delta,\sub{C}{B/t}) \in R$, from
    which we deduce
    $$
    \inference
    {R\vdash & r:(\delta,\sub{C}{B/t} \in R}
    {R \vdash \sub{\rgtype{r}{C}}{B/t}}
    $$
  \item $A \equiv \forall t'.C$\\
    If $t \neq t'$: From $R \vdash \forall t.(\forall t'.C)$ we have $t' \notin R$ and
    by induction hypothesis we have $R \vdash \sub{C}{B/t}$, from which we
    deduce
    $$
    \inference
    {R \vdash \sub{C}{B/t} & t' \notin R}
    {R \vdash \sub{(\forall t'.C)}{B/t}}
    $$
    If $t \equiv t'$ we have $\sub{(\forall t'.C)}{B/t} \equiv \forall t'.C$.
    Since we have 
    $$
    \inference
    {R \vdash \forall t'.C & t \notin R}
    {R \vdash \forall t.(\forall t'.C)}
    $$
    we conclude $R \vdash \sub{(\forall t'.C)}{B/t}$.
  \end{itemize}

\subsection{Proof of theorem \ref{progress-thm}}
Properties 1 and 2 are easily checked. 

\subsubsection{Substitution}
If $x$ is not free in $M$, we just have to check that
any proof of $\Gamma,x:(\delta',A) \vdash^\delta M$ can be transformed 
into a proof of $\Gamma \vdash^\delta M$. 

So let us assume $x$ is free in $M$.   
Next, we proceed by induction on the derivation of $\Gamma,x:\delta'\vdash^\delta M$.

\begin{itemize}

\item $\Gamma,x:(\delta,A) \vdash^\delta x:A$. Then $\delta=\delta'$, $\sub{x}{V/x}=V$,
and by hypothesis $\Gamma \vdash^{\delta} V : A$.

\item $\Gamma,x:(\delta',A) \vdash^{\delta} \lambda y.M: B \multimap C$ 
is derived from $\Gamma,x:(\delta',A),y:(\delta,B) \vdash^\delta M:C$,  with $x\neq y$ 
and $y$ not occurring in $V$.
By (2), $\Gamma,y:(\delta,B) \vdash^{\delta'} V:A$. 
By inductive hypothesis, $\Gamma,(y:\delta,B) \vdash^\delta
\sub{M}{V/x}: C$, 
and then we conclude $\Gamma \vdash^\delta \sub{(\lambda y.M)}{V/x}: B
\multimap C$.

\item $\Gamma,x:(\delta',A)\vdash^\delta (M_1M_2): C$ is derived from
  $\Gamma,x:(\delta',A)\vdash^\delta M_1 :B \multimap C$ and
  $\Gamma,x:(\delta',A)\vdash^\delta M_1 :B \multimap C$. By
  inductive hypothesis, $\Gamma \vdash^\delta \sub{M_1}{V/x}: B
  \multimap C$ and $\Gamma \vdash^\delta \sub{M_2}{V/x}:
   C$, and then we conclude $\Gamma \vdash^\delta
  \sub{(M_1M_2)}{V/x}: C$.

\item $\Gamma,x:(\delta',A) \vdash^\delta \bang M: \oc B$ is derived from
$\Gamma,x:(\delta',A) \vdash^{\delta+1} M:B$. 
By inductive hypothesis, $\Gamma \vdash^{\delta+1} \sub{M}{V/x} : B$, and then we conclude
$\Gamma \vdash^{\delta} \sub{\bang M}{V/x} : \oc B$.

\item $\Gamma,x:(\delta',A) \vdash^\delta \letm{y}{M_1}{M_2}:B$, with 
$x\neq y$ and $y$ not free in $V$ is derived from
$\Gamma,x:(\delta',A) \vdash^\delta M_1:C$ and $\Gamma,x:(\delta',A),y:(\delta+1,C) \vdash^\delta M_2:B$.
By inductive hypothesis, $\Gamma \vdash^\delta \sub{M_1}{V/x}:C$ 
$\Gamma,y:(\delta+1,C) \vdash^\delta \sub{M_2}{V/x}:B$, and then we conclude
$\Gamma \vdash^\delta \sub{(\letm{y}{M_1}{M_2})}{V/x}:B$.

 \item $M \equiv \get{r}$.
  We have $R,r:(\delta,B);\Gamma,x:(\delta',A) \vdash^\delta \get{r}:B$.
  Since $\sub{\get{r}}{V/x} = \get{r}$ and $x \notin \fv{\get{r}}$
  then $R,r:(\delta,B);\Gamma \vdash^\delta \sub{\get{r}}{V/x}:B$.

\item $M \equiv \st{r}{V'}$.
  We have
  $$
  \inference
  {R,r:(\delta,C);\Gamma,x:(\delta',A) \vdash^{\delta} V':C}
  {R,r:(\delta,C);\Gamma,x:(\delta',A)  \vdash^{\delta} \st{r}{V'}:\tertype}
  $$
  By induction hypothesis we get
  $$
  R,r:(\delta,C);\Gamma \vdash^{\delta} \sub{V'}{V/x}:C
  $$
  and hence we derive
  $$
  R,r:(\delta,C);\Gamma  \vdash^{\delta} \sub{(\st{r}{V'})}{V/x}:\tertype
  $$

\item $M \equiv (M_1 \mid M_2)$.
  We have 
  $$
  \inference
  {R;\Gamma,x:(\delta',A) \vdash^\delta M_i:C_i & i=1,2}
  {R;\Gamma,x:(\delta',A) \vdash^\delta (M_1\mid M_2):\behtype}
  $$
  
  By induction hypothesis we derive 
  $$
  R;\Gamma \vdash^\delta \sub{M_i}{V/x}:C_i
  $$
  and hence we derive 
  $$
  R;\Gamma \vdash^\delta \sub{(M_1\mid M_2)}{V/x}:\behtype
  $$
\end{itemize}

\subsubsection{Subject Reduction}
We first state and sketch the proof of $4$ lemmas.

\label{sub-red-proof}
\begin{lemma}[structural equivalence preserves typing] \label{sub-red-equ}
If $R;\Gamma\vdash^\delta P:\alpha$ and $P\equiv P'$ then $R;\Gamma\vdash^\delta P':\alpha$.
\end{lemma}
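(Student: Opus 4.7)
The plan is to proceed by induction on the derivation of $P\equiv P'$. The equivalence is generated by $\alpha$-renaming, commutativity and associativity of $\mid$, plus the congruence rules (reflexivity, symmetry, transitivity, and closure under static contexts $C$). The reflexivity, symmetry and transitivity cases are immediate from the inductive hypothesis. The $\alpha$-renaming case is standard: a bound variable $x$ in $\lambda x.M$ or $\letb{x}{M}{N}$ may be renamed to a fresh $y$, and the typing derivation is transported by relabeling the hypothesis $x:(\delta,A)$ to $y:(\delta,A)$ in the subderivation for $M$ (or $N$), using the substitution lemma (Theorem~\ref{progress-thm}(3)) with $V = y$.

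The two main base cases are commutativity and associativity. For commutativity $P_1 \mid P_2 \equiv P_2 \mid P_1$, I proceed by case analysis on which of the two parallel-composition typing rules was used. If the $(P_1 \mid P_2):\behtype$ rule was used with neither component being a store, the rule is symmetric in the two premises, so we immediately derive $R;\Gamma\vdash^\delta P_2 \mid P_1 : \behtype$. If the $(P \mid S):\alpha$ rule was used, one of the components is a store typed with $\behtype$ and the other has type $\alpha$; the rule is to be read as applying when one of the two components is a store, so swapping the order simply corresponds to reapplying the same rule with the premises in the opposite order.

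For associativity $(P_1 \mid P_2) \mid P_3 \equiv P_1 \mid (P_2 \mid P_3)$, I do a case analysis on which of $P_1,P_2,P_3$ are stores. From the given derivation I extract subderivations $R;\Gamma \vdash^\delta P_i : \alpha_i$ for $i=1,2,3$, where each $\alpha_i$ is either $\behtype$ (if $P_i$ is a store or a non-store parallel composition) or a value type. I then reassemble the derivation by picking the appropriate parallel-composition rule at each of the two nodes: if all three are stores the two inner parallel compositions are stores so everything gets $\behtype$; otherwise the unique non-store (if any) determines which application of the $(P \mid S)$ rule carries its type to the top. In each sub-case this gives a derivation of $R;\Gamma \vdash^\delta P_1 \mid (P_2 \mid P_3):\alpha$ with the same $\alpha$.

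Finally, the context closure case: if $P \equiv P'$ is derived from $Q \equiv Q'$ with $P = C[Q]$ and $P' = C[Q']$ for a static context $C$, I induct on $C$. The base case $C = [~]$ is the inductive hypothesis, and the inductive cases $C = (C' \mid P'')$ and $C = (P'' \mid C')$ follow by peeling off the outermost parallel-composition typing rule, applying the inductive hypothesis to the $C'[Q] \equiv C'[Q']$ premise, and reapplying the same parallel-composition rule. The main (minor) obstacle is the asymmetry between the two parallel-composition rules, which requires attention whenever a subterm's store/non-store status changes role under the equivalence; once that bookkeeping is made explicit, every case is routine.
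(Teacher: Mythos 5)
Your proposal is correct and follows essentially the same route as the paper: the paper's proof is a one-line induction on the derivation of $P\equiv P'$, observing that one merely ``reorders the pieces'' of the typing derivation, which is exactly the case analysis (commutativity, associativity, congruence under static contexts) you spell out in detail. The only point you elaborate beyond the paper is the bookkeeping around the asymmetry of the two parallel-composition rules, which you handle correctly.
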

\begin{proof}
Recall that structural equivalence is the least equivalence
relation induced by the equations stated in 
Table \ref{semantics} and closed under static contexts.
Then we proceed by induction on the proof of structural equivalence.
This is is mainly a matter of reordering the pieces of the typing
proof of $P$ so as to obtain a typing proof of $P'$.
\end{proof}

\begin{lemma}[evaluation contexts and typing]  \label{eva-sub-lem}
Suppose that in the proof of $R;\Gamma \vdash^\delta E[M]:\alpha$ 
we prove $R;\Gamma' \vdash^{\delta'} M:\alpha'$. Then replacing $M$ with a 
$M'$ such that $R;\Gamma'\vdash^{\delta'} M':\alpha'$, we can still derive
$R;\Gamma \vdash^\delta E[M']:\alpha$.
\end{lemma}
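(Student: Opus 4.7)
The plan is to induct on the structure of the evaluation context $E$, in each case peeling off the outermost constructor and appealing to the induction hypothesis on the smaller context. In the base case $E = [~]$, the derivation of $R;\Gamma \vdash^\delta E[M]:\alpha$ is literally the subderivation $R;\Gamma' \vdash^{\delta'} M:\alpha'$ highlighted by the hypothesis, so $\Gamma'=\Gamma$, $\delta'=\delta$, $\alpha'=\alpha$, and plugging in the given derivation of $R;\Gamma\vdash^\delta M':\alpha$ finishes the case.

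For the inductive step, I would inspect the last typing rule used in the derivation of $R;\Gamma\vdash^\delta E[M]:\alpha$ whose conclusion matches the outer constructor of $E$. If $E = E_1 N$, that rule is the application rule with premises $R;\Gamma\vdash^\delta E_1[M]:B\multimap\alpha$ and $R;\Gamma\vdash^\delta N:B$; the distinguished subderivation for $M$ sits inside the left premise, so by the induction hypothesis we obtain $R;\Gamma\vdash^\delta E_1[M']:B\multimap\alpha$ and reapply the application rule. The symmetric case $E = V E_1$ is analogous. For $E = \oc E_1$, the bang rule forces $\alpha = \oc A$ with premise $R;\Gamma\vdash^{\delta+1} E_1[M]:A$; apply the induction hypothesis and reintroduce the bang. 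The case $E = \letb{x}{E_1}{N}$ is handled by applying the induction hypothesis on the first premise of the $\letbang$ rule and then reapplying that rule.

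The only real obstacle is that the typing derivation may insert a finite block of $\forall$-introduction or $\forall$-elimination steps between the outer rule chosen by $E$'s constructor and the distinguished subderivation for $M$, since these rules are not directed by the term structure and so do not line up with the inductive decomposition of $E$. To cope with this, I would identify the finite chain of polymorphism inferences wrapping the subderivation of $M$, strip them off so that the induction hypothesis applies to a derivation of $E_1[M]$ whose last rule is the one determined by $E$'s constructor, and then reinsert exactly the same $\forall$-inferences above the rebuilt derivation of $E_1[M']$. With this bookkeeping, the reconstruction is otherwise a mechanical copy of the original derivation with the $M$-subderivation swapped for the $M'$-subderivation, and the resulting proof tree witnesses $R;\Gamma\vdash^\delta E[M']:\alpha$ as required.
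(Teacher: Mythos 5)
Your proof is correct and follows the same route as the paper, which disposes of this lemma with the single line ``By induction on the structure of $E$.'' Your additional bookkeeping for the non-syntax-directed $\forall$-introduction and $\forall$-elimination rules addresses a genuine subtlety that the paper's one-line proof leaves implicit.
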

\begin{proof}
By induction on the structure of $E$.
\end{proof}

\begin{lemma}[functional redexes]\label{fun-redex}
If $R;\Gamma \vdash^\delta E[\Delta] :\alpha$ where 
$\Delta$ has the shape $(\lambda x.M)V$ or $\letm{x}{\oc V}{M}$ then
$R;\Gamma \vdash^\delta E[\sub{M}{V/x}]:\alpha$.
\end{lemma}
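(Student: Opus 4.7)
My approach is to combine Lemma~\ref{eva-sub-lem} (evaluation contexts and typing) with the substitution property (part~\ref{sub-lemma} of Theorem~\ref{progress-thm}), after suitable inversions on the typing derivations of $\lambda$-abstractions, applications, bangs and $\letbang$.

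First, by Lemma~\ref{eva-sub-lem}, the given derivation of $R;\Gamma \vdash^{\delta} E[\Delta]:\alpha$ contains a subderivation of the form $R;\Gamma' \vdash^{\delta'} \Delta:\alpha'$ for some $\Gamma'$, $\delta'$, $\alpha'$. It therefore suffices to produce a derivation of $R;\Gamma' \vdash^{\delta'} \sub{M}{V/x}:\alpha'$, since the same lemma then rebuilds a typing of $E[\sub{M}{V/x}]$ with the original type $\alpha$.

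Next, I would split on the shape of $\Delta$. In the $\beta$-case $\Delta = (\lambda x.M)V$, inversion on the application rule yields some value-type $A$ with $R;\Gamma' \vdash^{\delta'} \lambda x.M : A \multimap \alpha'$ and $R;\Gamma' \vdash^{\delta'} V:A$; a further inversion on the abstraction rule gives $R;\Gamma', x:(\delta',A) \vdash^{\delta'} M:\alpha'$, and part~\ref{sub-lemma} of Theorem~\ref{progress-thm} (instantiated with $\delta' = \delta'$ in its statement) concludes. The $\letbang$ case $\Delta = \letm{x}{\oc V}{M}$ is entirely analogous: we extract $R;\Gamma' \vdash^{\delta'} \oc V : \oc A$ and $R;\Gamma', x:(\delta'+1, A) \vdash^{\delta'} M:\alpha'$, invert the bang rule to obtain $R;\Gamma' \vdash^{\delta'+1} V:A$, and apply the substitution property at depth $\delta'+1$.

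The main obstacle will be the polymorphism rules, which leave the term unchanged: the last rule of the derivation of $R;\Gamma' \vdash^{\delta'} (\lambda x.M)V:\alpha'$ need not be the application rule, but may instead be a $\forall$-introduction or $\forall$-elimination yielding a polymorphic or instantiated type. To justify the inversions above, I would first prove a generation lemma by induction on the derivation, stating that any typing of an application (respectively of an abstraction, of a bang) factors through the corresponding introduction/elimination rule, possibly preceded by a sequence of $\forall$-rules. Proposition~\ref{type-substitution} ensures that substituting types for quantified variables preserves well-formedness in the region context, so that the generation lemma can be iterated to produce the clean inversions used above. Once this routine but slightly delicate machinery is in place, both cases of $\Delta$ close by a single appeal to substitution and a final reapplication of Lemma~\ref{eva-sub-lem}.
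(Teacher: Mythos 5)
Your proposal is correct and follows essentially the same route as the paper, which likewise reduces the lemma to the substitution property (Theorem~\ref{progress-thm}.\ref{sub-lemma}) combined with Lemma~\ref{eva-sub-lem} for non-trivial evaluation contexts. The only difference is that you are more explicit about the inversion step — in particular the need for a generation lemma to push the $\forall$-introduction/elimination rules out of the way before inverting the application, abstraction and bang rules — a genuine subtlety that the paper's two-line proof silently elides.
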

\begin{proof}
We appeal to the substitution
lemma \ref{sub-lemma}. This settles the case
where the evaluation context $E$ is trivial. If it is complex
then we also need  lemma \ref{eva-sub-lem}.
\end{proof}

\begin{lemma}[side effects redexes]\label{side-eff-redex}
If $R;\Gamma \vdash^\delta \Delta:\alpha$ where
$\Delta$ is one of the programs on the left-hand
side then $R;\Gamma \vdash^\delta \Delta':\alpha$ where
$\Delta'$ is the corresponding program on the right-hand side:
\[
\begin{array}{lc|c}

(1) &E[\st{r}{V}]       &E[*]\mid \store{r}{V} \\
(2) &E[\get{r}] \mid \store{r}{V}  &E[V] \\
(3)\quad &E[\letb{x}{\get{r}}{M}] \mid \store{r}{\bang V}  \quad & \quad E[\sub{M}{V/x}] \mid \store{r}{\bang V}

\end{array}
\]
\end{lemma}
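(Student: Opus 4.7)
The plan is to proceed by case analysis on the three side-effect reduction schemes. In each case I would inspect the typing derivation of the left-hand side $\Delta$ via the evaluation contexts lemma (Lemma \ref{eva-sub-lem}) to isolate the subderivation of the actual redex, reassemble the pieces into a derivation for the right-hand side, and invoke the substitution lemma of Theorem \ref{progress-thm} in the case where the redex duplicates a value.

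For case (1), from $R;\Gamma \vdash^\delta E[\st{r}{V}]:\alpha$ I would extract a subderivation $R;\Gamma \vdash^{\delta'} \st{r}{V}:\tertype$, which can only come from the write rule and hence gives $r:(\delta',A)\in R$ and $R;\Gamma\vdash^{\delta'} V:A$. Replacing $\st{r}{V}$ by $*$ via Lemma \ref{eva-sub-lem} yields $R;\Gamma\vdash^\delta E[*]:\alpha$; independently the store rule gives $R;\Gamma\vdash^0 \store{r}{V}:\behtype$; and the two are joined by the parallel rule. Case (2) is similar: the outer parallel rule splits the judgement into $R;\Gamma\vdash^0 E[\get{r}]:\alpha$ and $R;\Gamma\vdash^0 \store{r}{V}:\behtype$; the latter unpacks to $R;\Gamma\vdash^{\delta'} V:A$ with $r:(\delta',A)\in R$, and the subderivation of $\get{r}$ inside $E$ must use this same hypothesis (since $R$ assigns a unique type/depth to $r$), so $V$ can be plugged in for $\get{r}$ via Lemma \ref{eva-sub-lem} to give $R;\Gamma\vdash^0 E[V]:\alpha$. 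Case (3) additionally uses the binding structure of $\letb{x}{\get{r}}{M}$, which forces $r:(\delta',\oc A)\in R$ and, from the stored value $\oc V$, $R;\Gamma \vdash^{\delta'+1} V:A$; the $\letbang$-subderivation provides $R;\Gamma,\hyp{x}{\delta'+1}{A}\vdash^{\delta'} M:B$, so the substitution lemma yields $R;\Gamma \vdash^{\delta'} \sub{M}{V/x}:B$, which is then plugged back into $E$ and recombined with the unchanged store via the parallel rule.

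The technical subtlety, rather than a real obstacle, is the bookkeeping on depths: the internal depth $\delta'$ at which the redex sits may be strictly larger than the outer $\delta$, because evaluation contexts admit bangs, while the store typing rule pins every floated store back to outer depth $0$. A minor observation that smooths this out is that no free variable is captured when the store moves out of $E$, since evaluation contexts never place the hole in the scope of a $\lambda$ or of the variable bound by a $\letbang$; consequently the context used to type the subredex agrees with the outer context $\Gamma$ on the free variables of $V$, and the store-typing rule can legitimately be applied at the outermost level.
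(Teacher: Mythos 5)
Your proposal is correct and follows essentially the same route as the paper's proof: a case analysis on the three reduction schemes, using the evaluation-contexts lemma (Lemma \ref{eva-sub-lem}) to isolate and replace the subredex, the uniqueness of the hypothesis $r:(\delta',A)$ in $R$ to reconcile the depths of the read/write with that of the store, and the substitution lemma for the copying read in case (3). Your closing remarks on depth bookkeeping and the absence of variable capture make explicit details the paper leaves implicit, but they do not change the argument.
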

\begin{proof}
We proceed by case analysis.

\begin{enumerate}

\item Suppose we derive $R;\Gamma \vdash^\delta E[\st{r}{V}]:\alpha$ from
$R;\Gamma' \vdash^{\delta'} \st{r}{V}:\tertype$. We can derive
$R;\Gamma'\vdash^{\delta'} *:\tertype$ and by
Lemma~\ref{eva-sub-lem} we derive $R;\Gamma \vdash^\delta
E[*]:\alpha$ and finally $R;\Gamma \vdash^\delta E[\st{r}{V}] \mid \store{r}{V}:\alpha$.

\item Suppose $R;\Gamma \vdash^\delta E[\get{r}]:\alpha$
  is derived from $R;\Gamma \vdash^{\delta'} \get{r} :
  A$,
  where $r:(\delta',A) \in R$. Hence $R;\Gamma \vdash^0
  \store{r}{V}:\behtype$ is derived from $R;\Gamma \vdash^{\delta'} V:
  A$. Finally, by
  Lemma~\ref{eva-sub-lem} we derive $R;\Gamma \vdash^\delta E[V]:\alpha$.

\item Suppose $R;\Gamma \vdash^\delta E[\letb{x}{\get{r}}{M}]:\alpha$ is derived from
$$
\inference
{R;\Gamma' \vdash^{\delta'} \get{r}:\oc A & R;\Gamma',x:(\delta'+1,A)
  \vdash^{\delta'} M : \alpha'}
{R;\Gamma' \vdash^{\delta'} \letb{x}{\get{r}}{M}:\alpha'}
$$
where $r:(\delta',\oc A) \in R$. Hence
$R;\Gamma \vdash^0 \store{r}{\oc V}:\behtype$ is derived from
$R;\Gamma \vdash^{\delta'+1}  V:  A$. By Lemma~\ref{sub-lemma} we can
derive $R;\Gamma' \vdash^{\delta'} \sub{M}{V/x}:\alpha'$. Then by
Lemma~\ref{eva-sub-lem} we derive $R;\Gamma \vdash^\delta E[\sub{M}{V/x}]:\alpha$.
\end{enumerate}
\end{proof}

We are then ready to prove subject reduction.
We recall that $P\arrow P'$ means that 
$P$ is structurally equivalent to a program
$C[\Delta]$ where $C$ is a static context, $\Delta$ 
is one of the programs on the left-hand side of the
rewriting rules specified in Table \ref{semantics},
$\Delta'$ is the respective program on the right-hand side,
and $P'$ is syntactically equal to $C[\Delta']$.

By lemma \ref{sub-red-equ}, we know that 
$R;\Gamma \vdash^\delta C[\Delta]:\alpha$.
This entails that $R';\Gamma' \vdash^{\delta'} \Delta : \alpha'$ for
suitable $R',\Gamma',\alpha',\delta'$.
By lemmas \ref{fun-redex} and \ref{side-eff-redex}, we derive that
$R';\Gamma'\vdash^{\delta'} \Delta':\alpha'$.
Then by induction on the structure of $C$ we argue 
that $R;\Gamma \vdash^\delta C[\Delta']:\alpha$. 

\subsubsection{Progress}
To derive the progress property we first determine for each closed
type $A$ where $A = A_1 \multimap A_2$ or $A = \oc A_1$
the shape of  a closed value of type $A$ with the following classification lemma.
\begin{lemma}[classification]
  \label{lem-class}
  Assume $R;- \vdash^\delta V : A$.  Then:
  \begin{itemize}
  \item if $A = A_1 \multimap A_2$ then $V = \lambda x.M$,
  \item if $A = \oc A_1$ then $V = \oc V_1$
  \end{itemize}
\end{lemma}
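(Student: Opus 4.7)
I would prove this by induction on the typing derivation of $R;- \vdash^\delta V : A$, performing a case analysis on the last rule applied. Most cases can be ruled out immediately: the variable rule is inapplicable because $V$ is closed; the rules for application, let-bang, get, set, store, and parallel composition are inapplicable because $V$ is syntactically a value. This leaves only the value-introduction rules (for $*$, $r$, $\lambda x.M$, $\oc V'$) and the two polymorphism rules.

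Each value-introduction rule produces a type with a specific head shape: $\tertype$ for $*$, $\rgtype{r}{A'}$ for a region, $B \multimap \alpha$ for a $\lambda$-abstraction, and $\oc A'$ for a bang. Under the hypothesis that $A$ is either $A_1 \multimap A_2$ or $\oc A_1$, only the $\lambda$-abstraction rule and the bang-introduction rule are compatible (the type formers are disjoint), and each directly yields the corresponding conclusion. The forall-introduction rule produces a type of the shape $\forall t.A'$, which is likewise neither an arrow nor a bang type, so this case is vacuous under the hypothesis on $A$.

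The main obstacle is the forall-elimination rule, which derives $V : \sub{A'}{B/t}$ from $V : \forall t.A'$. Here $\sub{A'}{B/t}$ can indeed equal an arrow or bang type, but $\forall t.A'$ itself is not of the form required by the induction hypothesis. To handle this cleanly I would strengthen the statement to: for any closed value $V$, every derivable type has the form $\forall t_1.\ldots.\forall t_k.\tau$ where $\tau$'s head shape is fixed by the syntactic form of $V$ (namely $\tertype$ if $V = *$, $\rgtype{r}{A'}$ if $V = r$, an arrow if $V$ is a $\lambda$, a bang if $V = \oc V'$). This strengthened invariant is easy to verify by induction on the derivation: forall-introduction simply prepends a quantifier without disturbing the core, while forall-elimination substitutes for a bound variable inside the quantifier prefix and so preserves the head shape of $\tau$.

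The only subtle sub-case is when $A' = t$, so that $\sub{A'}{B/t} = B$ could in principle be anything. Here one notices that $V : \forall t.t$ can only arise from $V : t$ via forall-introduction, and a bare type variable is not produced by any value-introduction rule; a short nested induction on the derivation of $V : t$, chasing through forall-elimination, bottoms out at one of the value-introduction rules and yields a contradiction. With the strengthened invariant established, the original classification lemma is immediate: if $A$ itself is an arrow or bang type, it has zero prefixed quantifiers, so the head shape coincides with $A$, and $V$ must be of the corresponding syntactic form.
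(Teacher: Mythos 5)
Your proof is correct, and it is in fact more careful than the one in the paper. The paper's proof is a one-line case analysis: it asserts that when $A = A_1 \multimap A_2$ (resp.\ $A = \oc A_1$) ``the only typing rule that can be applied'' is the $\lambda$-abstraction (resp.\ bang) rule, and concludes immediately. This silently ignores the two polymorphism rules, which are not syntax-directed: in particular $\forall$-elimination derives $V : \sub{A'}{B/t}$ from $V : \forall t.A'$, and the substituted type can perfectly well be an arrow or a bang type, so the paper's ``only rule'' claim is not literally true as stated. You identified exactly this obstacle and resolved it by strengthening the invariant to ``every derivable type of a closed value is $\forall t_1.\cdots\forall t_k.\tau$ with the head constructor of $\tau$ determined by the syntactic form of $V$,'' checking that $\forall$-introduction prepends a quantifier and $\forall$-elimination substitutes under the prefix without changing the head (the degenerate case $\tau = t$ never arises because no value-introduction rule produces a bare type variable). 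What your approach buys is an actually complete induction; what the paper's buys is brevity, at the cost of eliding the case that makes the lemma nontrivial. One could quibble that your separate nested induction for the $\forall t.t$ sub-case is subsumed by the strengthened invariant itself, but that is a matter of presentation, not correctness.
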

\begin{proof}
  By case analysis on the typing rules.
  \begin{itemize}
  \item if $A = A_1 \multimap A_2$, the only typing rule that can be applied is
    $$
    \inference
    {R;x:(\delta,A_1) \vdash^\delta  M : A_2}
    {R;- \vdash^\delta \lambda x.M : A_1 \multimap A_2}
    $$
    hence $V = \lambda x.M$.
  \item if $A = \oc A_1$, the only typing rule that can be applied is
    $$
    \inference
    {R;- \vdash^{\delta+1}  V_1 : A_1}
    {R;- \vdash^\delta \oc V_1 : \oc A_1}
    $$
    hence $V = \oc V_1$.
  \end{itemize}
\end{proof}
\hfill\\
Then we proceed by induction on the structure of the threads $M_i$ to
show that each one of them is either a value or a stuck get of the
form $E[\Delta$] where $\Delta$ can be $(\lambda x.M)\get{r}$ or $\letb{x}{\get{r}}{M}$.
\begin{itemize}
\item $M_i = x$\\
  the case of variables is void since they are not closed terms.
\item $M_i = *$ or $M_i = r$ or $M_i = \lambda x.M$\\
  these cases are trivial since $*$, $r$ and $\lambda x.M$ are already values.
\item $M_i = PQ$\\
  We know that $PQ$ cannot reduce, which by looking at the evaluation
  contexts means that $P$ cannot reduce. Then by induction hypothesis
  we have two cases: either $P$ is a value or $P$ is a stuck get. 
  \begin{itemize}
  \item assume $P$ is a value. We have 
  $$
  \inference
  {R;- \vdash^\delta P :A \multimap B & R;- \vdash^\delta Q: A}
  {R;- \vdash^\delta PQ : B}
  $$ 
  By Lemma~\ref{lem-class} we have $P = \lambda x.M$. 
  Since $PQ$ cannot reduce and $P = \lambda x.M$, by looking at the
  evaluation contexts we have that $Q$ cannot reduce. Moreover $Q$
  cannot be a value, otherwise $PQ$ is a redex. Hence by induction
  hypothesis $Q$ is a stuck get of the form $E_1[\Delta]$. Hence
  $PQ$ is of the form $E[\Delta]$ where $E = PE_1$.
\item assume $P$ is a stuck get of the form $E_1[\Delta]$. Then $PQ$
  is of the form $E[\Delta]$ where $E = E_1Q$.
  \end{itemize}
\item $M_i = \letb{x}{P}{Q}$\\
  We know that $\letb{x}{P}{Q}$ cannot reduce, which by looking at the
  evaluation contexts means that $P$ cannot reduce. Then by induction hypothesis
  we have two cases: either $P$ is a value or $P$ is a stuck get.
  \begin{itemize}
  \item assume $P$ is a value. We have 
  $$
  \inference
  {R;- \vdash^\delta P : \oc A & R;x:(\delta+1,A) \vdash^\delta Q: B}
  {R;- \vdash^\delta \letb{x}{P}{Q} : B}
  $$ 
  By Lemma~\ref{lem-class} we have $P = \oc V$ hence $\letb{x}{\oc V}{Q}$
  is a redex and this contradicts the hypothesis that $\letb{x}{P}{Q}$
  cannot reduce. Thus $P$ cannot be a value.
\item assume $P$ is a stuck get of the form $E_1[\Delta]$. Then $\letb{x}{P}{Q}$
  is of the form $E[\Delta]$ where $E = \letb{x}{E_1}{Q}$.
  \end{itemize}
\item $M_i = \oc P$\\
  We know that $\oc P$ cannot reduce, which by looking at the
  evaluation contexts means that $P$ cannot reduce. Then by induction hypothesis
  we have two cases: either $P$ is a value or $P$ is a stuck get.
  \begin{itemize}
  \item assume $P$ is a value. Then $\oc P$ is also a value and we are
    done.
  \item assume $P$ is of the form $E_1[\Delta]$. Then $\oc P$ is of
    the shape $E[\Delta]$ where $E = \oc E_1$.
  \end{itemize}
\item $M_i = \get{r'}$\\
  We know that $\get{r'}$ cannot reduce which means that $M_i$ is of
  the form $E[\Delta]$ where $r'=r$ and $E=[]$ and that no value is
  associated with $r$ in the store.
\item $M_i = \st{r}{V}$\\
  This case is void since $\st{r}{V}$ can reduce is any case.
\end{itemize}



\subsection{Proof of theorem~\ref{complete-thm}}
\label{asec-completeness}
Elementary functions are characterized as the smallest class of
functions containing zero, successor, projection, subtraction and
which is closed by composition and bounded summation/product.
We will need the arithmetic functions defined in
Table~\ref{church-encodings-full}.
\begin{table}[h]
\[
\begin{array}{r c l l}
\nat &=& \forall t.\bang (t\limp t) \limp \bang (t\limp t)  &\mbox{(type of numerals)}\\\\

\zero &:& \nat                   &\mbox{(zero)}\\  
\zero &=& \lambda f.\bang(\lambda x.x)  \\\\

\succe &:& \nat \limp \nat     &\mbox{(successor)}\\ 
\succe &=& \lambda n.\lambda f.\s{let} \ \bang f=f \ \s{in} \\
  && \s{let} \ \bang y = n\bang f \ \s{in}  
  \bang (\lambda x.f(yx)) \\ \\ 

\churchn{n}&:& \nat         &\mbox{(numerals)}\\
\churchn{n}&=& \lambda f.\letm{f}{f}{\bang(\lambda x.f(\cdots (fx) \cdots))}\\\\

\add &:& \nat\limp (\nat\limp \nat)  &\mbox{(addition)}\\
\add &=& \lambda n.\lambda m. \lambda f. \s{let} \ \bang f = f \ \s{in}\\
&& \s{let} \ \bang y =  n\bang f \ \s{in} \\
  && \s{let} \ \bang y' = m \bang f \ \s{in} 
   \quad \bang (\lambda x.y(y'x)) \\ \\ 

\mult &:& \nat \limp (\nat\limp \nat) &\mbox{(multiplication)}\\ 
\mult&=& \lambda n.\lambda m. \lambda f. \s{let} \ \bang f = f \ \s{in}\\
  && n(m \bang f) \\ \\ 

\iit &:& \nat \limp \forall t.\bang(t\limp t) \limp \bang t  \limp \bang t   &\mbox{(iteration)} \\
\iit&= &\lambda n.\lambda g.\lambda x. \s{let} \ \bang y = ng \ \s{in}\\
 && \s{let} \ \bang y' = x \ \s{in} \ \bang (y y')   \\ \\ 

  \git &:& \forall t.\forall t'.\oc(t\multimap
  t) \multimap (\oc(t \multimap t) \multimap t') \multimap \nat
  \multimap t' \\
  \git &=& \lambda s.\lambda e.\lambda n. e(nts)\\\\


\end{array}
\]
\caption{Representation of some arithmetic functions}\label{church-encodings-full}
\end{table}
We will abbreviate $\lambdab{x}{M}$  for $\lambda x.\letb{x}{x}{M}$.
Moreover, in order to represent some functions, we need to manipulate
pairs in the language. We define the representation of pairs in
Table~\ref{rep-pairs}. 
\begin{table}[h]
$$
\begin{array}{rcll}
  A \times B &=& \forall t. (A \multimap B \multimap t) \multimap t &
  \mbox{(type of pairs)}\\\\
  \pair{M}{N} &:& A \times B & \mbox{(pair representation)}\\
  \pair{M}{N} &=& \lambda x.xMN\\\\

  \fst &:& \forall t,t'. t \times t' \multimap t & \mbox{(left destructor)}\\
  \fst &=& \lambda p.p(\lambda x.\lambda y.x)\\\\

  \snd &:& \forall t,t'. t \times t' \multimap t'& \mbox{(right destructor)}\\
  \snd &=& \lambda p.p(\lambda x.\lambda y.y)
\end{array}
$$
\caption{Representation of pairs}
\label{rep-pairs}
\end{table}
In the
following, we show that the required functions can be represented in the
sense of Definition~\ref{def-function-rep} by adapting the proofs from Danos
and Joinet~\cite{Danos}.

\subsubsection{Successor, addition and multiplication}
We check that $\succe$ represents the \emph{successor} function $s$:
$$
\begin{array}{l}
  s : \mathbb{N} \mapsto \mathbb{N}\\
  s(x) = x + 1
\end{array}
$$
\begin{proposition}
  $\succe \Vdash s$.  
\end{proposition}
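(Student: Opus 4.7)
The plan is to reduce the verification to a direct computation on the canonical representative. Since $M \Vdash n$ means $M \treduc \churchn{n}$, contextual closure of strong reduction gives $\succe\, M \treduc \succe\, \churchn{n}$, so it suffices to exhibit a reduction $\succe\, \churchn{n} \treduc \churchn{n+1}$.

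First I would apply the outer $\beta$-step to $\succe\, \churchn{n}$, which produces
\[
\lambda f.\letb{f}{f}{\letb{y}{\churchn{n}\,\bang f}{\bang(\lambda x.f(yx))}}.
\]
Working under the two outer binders, I would next reduce $\churchn{n}\,\bang f$: unfolding $\churchn{n} = \lambda f'.\letb{f'}{f'}{\bang(\lambda x.f'(\cdots(f'x)\cdots))}$ and applying one $\beta$- and one $\letbang$-step yields $\bang(\lambda x.f(\cdots(fx)\cdots))$ with $n$ occurrences of $f$. The surrounding $\letbang$ binding $y$ then fires and substitutes $\lambda x.f(\cdots(fx)\cdots)$ for $y$, leaving the inner body
\[
\bang(\lambda x.f((\lambda x'.f(\cdots(fx')\cdots))x)).
\]
One further $\beta$-step collapses $(\lambda x'.f(\cdots(fx')\cdots))x$ to $f(\cdots(fx)\cdots)$, so the whole term becomes $\lambda f.\letb{f}{f}{\bang(\lambda x.f(f(\cdots(fx)\cdots)))}$ with $n+1$ occurrences of $f$, which is syntactically $\churchn{n+1}$.

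The only subtle point is that this final $\beta$-step takes place under $\lambda f$, under a $\letbang$, under $\bang$, and under $\lambda x$; this is legitimate precisely because Definition~\ref{def-function-rep} is formulated with respect to strong $\beta$-reduction, which permits reducing under binders (and in particular under $\bang$). Beyond this bookkeeping there is no real obstacle: the argument is a routine unfolding of the combinator, and the same pattern (apply $\beta$, push the Church-numeral through the $\letbang$ to extract an $n$-fold iterate of $f$, then substitute and simplify) will be reused for $\add$ and $\mult$ in the subsequent propositions.
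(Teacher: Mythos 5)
Your proposal is correct and follows the same route as the paper: the paper's proof simply asserts ``we can show that $\succe\,M \treduc \churchn{s(n)}$'' and concludes, while you supply the explicit reduction sequence (outer $\beta$, unfolding $\churchn{n}\,\oc f$, firing the $\letbang$ for $y$, and the final $\beta$-step under the binders, all licensed by strong $\beta$-reduction). The extra detail you give is exactly the computation the paper leaves implicit.
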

\begin{proof}
  Take $\emptyset \vdash^\delta M:\nat$ and $M \Vdash n$. We have $\emptyset
  \vdash^\delta \succe : \nat \multimap \nat$. We can show that $\succe \: M
  \treduc \churchn{s(n)}$, hence $\succe \: M \Vdash s(n)$. Thus
  $\succe \Vdash s$.
\end{proof}

We check that $\add$ represents the \emph{addition} function $a$:
$$
\begin{array}{l}
  a : \mathbb{N}^2 \mapsto \mathbb{N}\\
  a(x,y) = x + y
\end{array}
$$
\begin{proposition}
  $\add \Vdash a$.  
\end{proposition}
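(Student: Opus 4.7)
The plan is to proceed exactly as for $\succe$: first reduce the arguments to canonical Church numerals, then compute $\add\,\churchn{n_1}\,\churchn{n_2}$ symbolically under strong $\beta$-reduction and check that the normal form is (syntactically) $\churchn{n_1+n_2}$.

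First, I would observe the typing: one checks mechanically that $\emptyset \vdash^\delta \add : \nat \multimap \nat \multimap \nat$. Then given $\emptyset \vdash^\delta M_i : \nat$ with $M_i \Vdash n_i$ for $i=1,2$, by subject reduction (Theorem \ref{progress-thm}.\ref{sub-red-thm}) and the definition of representation we have $M_i \treduc \churchn{n_i}$ under strong $\beta$-reduction, so by contextual closure $\add\,M_1\,M_2 \treduc \add\,\churchn{n_1}\,\churchn{n_2}$. Hence it suffices to show $\add\,\churchn{n_1}\,\churchn{n_2} \treduc \churchn{n_1+n_2}$.

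Next I would perform the reduction. Two $\beta$-steps yield
$$\add\,\churchn{n_1}\,\churchn{n_2} \;\treduc\; \lambda f.\letbang f = f \;\mathsf{in}\;\letbang y = \churchn{n_1}\,\oc f \;\mathsf{in}\;\letbang y' = \churchn{n_2}\,\oc f \;\mathsf{in}\; \oc(\lambda x.y(y'x)).$$
The crucial auxiliary computation, which I would state and verify by a short induction on $n$, is that for a fresh variable $g$,
$$\churchn{n}\,\oc g \;\treduc\; \oc(\lambda x.\underbrace{g(\cdots(g}_{n}\,x)\cdots)),$$
obtained by one $\beta$-step followed by one $\letbang$-step. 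Using this fact under the binder $\lambda f$ (strong reduction is allowed) with $g$ instantiated by the locally bound $f$, both $\letbang$ binders for $y$ and $y'$ get filtered, substituting $(\lambda x.f^{n_1}x)$ for $y$ and $(\lambda x.f^{n_2}x)$ for $y'$. After two more $\beta$-steps inside the inner $\oc(\cdots)$ we obtain $\oc(\lambda x.f^{n_1}(f^{n_2}x)) = \oc(\lambda x.f^{n_1+n_2}x)$, so the whole term reduces to $\lambda f.\letbang f = f \;\mathsf{in}\;\oc(\lambda x.f^{n_1+n_2}x)$, which is definitionally $\churchn{n_1+n_2}$.

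The main obstacle is purely bookkeeping: ensuring that the variable renaming in the definition of $\add$ avoids capture when $\churchn{n_i}$ is substituted, and verifying that the two $\beta$-reductions of $(\lambda x.f^{n_1}x)(\lambda x.f^{n_2}x)\,x$ inside the bang concatenate the two chains of applications into exactly $n_1+n_2$ copies of $f$. No part of the argument requires anything beyond the reduction rules of Table \ref{semantics-pure} and closure under strong $\beta$; well-formedness in the depth system (and well-typedness by subject reduction) is preserved throughout, so the intermediate terms are all legal $\lambdaft$-terms. Hence $\add\,M_1\,M_2 \Vdash a(n_1,n_2)$, and therefore $\add \Vdash a$.
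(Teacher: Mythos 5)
Your proof is correct and follows essentially the same route as the paper, which simply asserts the key computation $\add\,M_1M_2 \treduc \churchn{a(n_1,n_2)}$ without spelling it out; you fill in exactly that reduction (using strong $\beta$-reduction under the $\lambda f$ binder, as the definition of representation permits). The only cosmetic remark is that $M_i \treduc \churchn{n_i}$ is immediate from the definition of $M_i \Vdash n_i$ and does not need subject reduction, and the auxiliary fact $\churchn{n}\,\oc g \treduc \oc(\lambda x.g^n x)$ is just two reduction steps, so no induction on $n$ is required.
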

\begin{proof}
  For $i=1,2$ take $\emptyset \vdash^\delta M_i:\nat$ and $M_i \Vdash
  n_i$. We have $ \emptyset \vdash^\delta \add : \nat \multimap \nat \multimap
  \nat $.  We can show that $\add \: M_1M_2 \treduc
  \churchn{a(n_1,n_2)}$, hence $\add \: M_1M_2 \Vdash a(n_1,n_2)$. Thus $A
  \Vdash a$.
\end{proof}

We check that $\mult$ represents the multiplication function $m$:
$$
\begin{array}{l}
  m : \mathbb{N}^2 \mapsto \mathbb{N}\\
  m(x,y) = x * y
\end{array}
$$

\begin{proposition}
  $\mult \Vdash m$.  
\end{proposition}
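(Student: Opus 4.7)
The plan is to unfold the definitions and verify by direct reduction that $\mult\,\churchn{n_1}\,\churchn{n_2} \treduc \churchn{n_1\cdot n_2}$ under strong $\beta$-reduction. First, using the hypothesis $M_i \Vdash n_i$, which by definition means $M_i \treduc \churchn{n_i}$, I get $\mult\,M_1\,M_2 \treduc \mult\,\churchn{n_1}\,\churchn{n_2}$, so it suffices to reduce the latter to $\churchn{n_1\cdot n_2}$. Before the computation I would quickly record that $\mult$ really has type $\nat \multimap \nat \multimap \nat$: instantiating both polymorphic numerals at a common type variable $t$, the argument $f:\oc(t\multimap t)$ is unpacked by the $\letbang$ into $f:t\multimap t$ at depth $1$, then $\oc f:\oc(t\multimap t)$, and successive applications of the two numerals (each of type $\oc(t\multimap t)\multimap \oc(t\multimap t)$ after instantiation) deliver $\oc(t\multimap t)$, which is precisely the body type of a numeral. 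The depth side-conditions are trivially satisfied: $n$ and $m$ each occur at most once at depth $0$, and $f$ occurs at depth $1$ after the $\letbang$.

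The core of the argument is a small auxiliary reduction that I would isolate as a lemma: for any term $g$, $\churchn{k}\,\oc g \treduc \oc(\lambda x.\,g(\cdots(gx)\cdots))$ with $k$ occurrences of $g$, obtained by one outer $\beta$ step followed by the $\letbang$-rule. Two outer $\beta$ steps on $\mult\,\churchn{n_1}\,\churchn{n_2}$ produce $\lambda f.\,\letb{f}{f}{\churchn{n_1}(\churchn{n_2}\,\oc f)}$. A first application of the lemma inside the $\letbang$ body yields $\churchn{n_2}\,\oc f \treduc \oc(\lambda x.\,f^{n_2}x)$, and a second application with $g := \lambda x.\,f^{n_2}x$ yields $\churchn{n_1}\,\oc(\lambda x.\,f^{n_2}x) \treduc \oc(\lambda y.\,(\lambda x.\,f^{n_2}x)^{n_1}y)$. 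Further strong $\beta$-reductions collapse the nested iteration into $f^{n_1\cdot n_2}y$, so the whole term reaches $\lambda f.\,\letb{f}{f}{\oc(\lambda y.\,f^{n_1\cdot n_2}y)}$, which is exactly $\churchn{n_1\cdot n_2}$.

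The only mildly delicate step will be the final $\beta$-collapse of $(\lambda x.\,f^{n_2}x)^{n_1}y$ into $f^{n_1\cdot n_2}y$, which is a short induction on $n_1$ but requires a sequence of strong $\beta$ steps under a bang. Everything else is an essentially mechanical unfolding that parallels the proofs just given for $\succe$ and $\add$, so I would follow the same layout: exhibit the type of $\mult$, chain the reductions above, and conclude $\mult\,M_1\,M_2 \Vdash n_1\cdot n_2 = m(n_1,n_2)$, hence $\mult \Vdash m$.
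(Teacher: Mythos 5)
Your proposal is correct and follows the same route as the paper: the paper's proof simply asserts $\emptyset \vdash^\delta \mult : \nat \multimap \nat \multimap \nat$ and that $\mult\, M_1 M_2 \treduc \churchn{m(n_1,n_2)}$, which is exactly the typing check and the strong-$\beta$ reduction chain you carry out explicitly. You merely fill in the details (the auxiliary reduction $\churchn{k}\,\oc g \treduc \oc(\lambda x.g^k x)$ and the final collapse of the nested iteration) that the paper leaves implicit.
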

\begin{proof}
  For $i=1,2$ take $\emptyset \vdash^\delta M_i:\nat$ and $M_i \Vdash
  n_i$. 
We have
$
\emptyset \vdash^\delta \mult : \nat \multimap \nat \multimap \nat
$. We can show that $\mult \: M_1M_2 \treduc
   \churchn{m(n_1,n_2)}$, hence $\mult \: M_1M_2 \Vdash m(n_1,n_2)$. Thus $\mult
  \Vdash m$.
\end{proof}

\subsubsection{Iteration schemes}
We check that $\iit$ represents the following iteration function $it$:
$$
\begin{array}{l}
  it : (\mathbb{N} \mapsto \mathbb{N}) \mapsto \mathbb{N}  \mapsto \mathbb{N} \mapsto \mathbb{N}\\
  it(f,n,x) = f^n(x)
\end{array}
$$
\begin{proposition}
\label{prop-it} $\iit \Vdash it$.
\end{proposition}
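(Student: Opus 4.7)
The plan is to take representatives $N \Vdash n$, $X \Vdash x$, and $F$ representing $f$ (i.e.\ $F\, M \Vdash f(k)$ whenever $M \Vdash k$), and to compute the reduction of $\iit\, N\, \oc F\, \oc X$ explicitly. Three $\beta$-steps unfold the outer abstractions of $\iit$ to the term $\letm{y}{N\, \oc F}{\letm{y'}{\oc X}{\oc(y\, y')}}$, so the problem reduces to understanding $N\, \oc F$.

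Next I would exploit the hypothesis $N \Vdash n$: by strong $\beta$-reduction, $N \treduc \churchn{n} = \lambda f.\letm{f}{f}{\oc(\lambda z.f(\cdots(fz)\cdots))}$ with $n$ nested occurrences of $f$. Applying this to $\oc F$ and firing the inner $\letbang$ redex gives $N\, \oc F \treduc \oc F^{(n)}$, where I write $F^{(n)}$ for $\lambda z.F(F(\cdots(F z)\cdots))$ (and $\lambda z.z$ if $n=0$). Substituting this value back into the context and firing the two outer $\letbang$ redexes in turn yields $\oc\bigl(F^{(n)}\, X\bigr)$, and one further $\beta$-step under the bang produces $\oc\bigl(F(F(\cdots(F\, X)\cdots))\bigr)$ with $n$ occurrences of $F$.

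To finish I would use the representation property of $F$: since $X \Vdash x$, a straightforward induction on $n$ shows $F(F(\cdots(F\, X)\cdots)) \Vdash f^n(x)$, that is, this term strongly $\beta$-reduces to $\churchn{f^n(x)}$. Pushing this through the outermost $\oc$ we obtain $\iit\, N\, \oc F\, \oc X \treduc \oc \churchn{f^n(x)}$, i.e.\ $\iit\, N\, \oc F\, \oc X \Vdash f^n(x)$, which is exactly $\iit \Vdash it$.

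The only delicate point is matching the type at which $F$ is used inside $\iit$ (namely $\nat \multimap \nat$, obtained by instantiating the polymorphic parameter $t := \nat$) with the type $\nat \multimap \oc^p\nat$ appearing in Definition~\ref{def-function-rep}: to iterate $F$ one has to absorb the outer bangs between successive applications. This is handled exactly as in Danos and Joinet~\cite{Danos}, by working up to strong $\beta$-reduction and observing that the representation relation on numerals is preserved under the requisite bang-stripping coercions, so the induction on $n$ in the previous paragraph goes through uniformly.
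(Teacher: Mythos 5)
Your proposal is correct and follows essentially the same route as the paper's proof: compute $\iit\,N\,(\oc F)(\oc X) \treduc \oc(F^{n}X)$ by unfolding the abstractions and the two $\letbang$ redexes, then conclude from $F \Vdash f$ and $X \Vdash x$ that $F^{n}X \treduc \churchn{f^{n}(x)}$. The paper simply takes $F$ of type $\nat \multimap \nat$ (so $p=0$) and defers all coercion issues to a separate lemma, so your final paragraph is a harmless elaboration rather than a divergence.
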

\begin{proof}
  We have $ \emptyset\vdash^\delta
\iit : \nat \multimap \forall t.\oc(t\multimap t)\multimap \oc t
\multimap \oc t $. Given $\emptyset \vdash^\delta M:\nat$ with $M \Vdash n$, $\emptyset \vdash^\delta F:\nat\multimap\nat$ with $F \Vdash f$ and
 $\emptyset \vdash^\delta X:\nat$ with $X \Vdash x$, we observe that
  $\iit \: M (\oc F) (\oc X) \treduc F^nX$.  Since $F
  \Vdash f$ and $X \Vdash x$, we get $F^nX \treduc
   \churchn{it(f,n,x)}$. Hence $\iit \Vdash it$.
\end{proof}
The function $it$ is an instance of the more general iteration scheme $git$:
$$
\begin{array}{l}
  git : (\mathbb{N} \mapsto \mathbb{N}) \mapsto ((\mathbb{N} \mapsto \mathbb{N}) \mapsto \mathbb{N})  \mapsto \mathbb{N} \mapsto \mathbb{N}\\
  git(step,exit,n) = exit(\lambda x.step^n(x))
\end{array}
$$
Indeed, we have:
$$git(f,\lambda f.fx,n) = (\lambda f.fx)(\lambda x.f^n(x)) = it(f,n,x)$$
\begin{proposition}
  $\git \Vdash git$.
\end{proposition}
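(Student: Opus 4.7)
The plan is to follow the template already used for $\iit$ in Proposition~\ref{prop-it}: reduce $\git\:S\:E\:N$ by the operational semantics, and then use the representation hypotheses on $S$, $E$, $N$ to identify the result as an encoding of $git(step,exit,n)=exit(\lambda x.step^n(x))$.

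First I would fix appropriate instantiations of the type variables $t$ and $t'$ (for example $t=\nat$ and $t'=\oc^{p}\nat$) and, since $step$ and $exit$ appear under $\oc$ in the type of $\git$, work with $S=\oc S_0$ where $S_0\Vdash step$ and with $E$ such that $E(\oc F)\Vdash exit(f)$ whenever $F\Vdash f$; these are the obvious higher-order analogues of Definition~\ref{def-function-rep}. Then three $\beta$-reductions (and a type application) reduce $\git\:S\:E\:N=(\lambda s.\lambda e.\lambda n.\,e(nts))\:S\:E\:N$ to $E(N_{t}S)$.

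Second, using $N\Vdash n$ I have $N\treduc\churchn{n}$, and firing the single $\letbang$ inside $\churchn{n}$ on the bang-ed argument $S=\oc S_0$ yields $\churchn{n}_{t}S\treduc\oc(\lambda x.S_0(\cdots(S_0 x)\cdots))$ with $n$ nested copies of $S_0$. Setting $F:=\lambda x.S_0^{\,n}(x)$, I would verify $F\Vdash\lambda x.step^n(x)$ by a short induction on $n$ using $S_0\Vdash step$: take any $X\Vdash x$, then $FX\treduc S_0^{\,n}X\treduc\churchn{step^n(x)}$ by repeated use of $S_0\Vdash step$, which is exactly the chaining argument already carried out in Proposition~\ref{prop-it}.

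Finally, combining the two reductions gives $\git\:S\:E\:N\treduc E(\oc F)$, and the assumed representation property of $E$ yields $E(\oc F)\Vdash exit(\lambda x.step^n(x))=git(step,exit,n)$, which is what we want. The only obstacle I anticipate is notational: making precise what ``representation'' means for the higher-order inputs $S$ and $E$, since Definition~\ref{def-function-rep} is spelled out only for first-order functions. Once that is fixed, the computation itself is essentially the same as for $\iit$, specialised (as remarked in the text) via the instance $git(f,\lambda f.fx,n)=it(f,n,x)$.
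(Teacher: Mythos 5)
Your proposal follows essentially the same route as the paper: reduce $\git\:S\:E\:N$ to $E$ applied to the $n$-fold iterate of the step function and then invoke the representation hypotheses on $S$ and $E$ to conclude. You are in fact somewhat more careful than the paper's own (very terse) argument, both in tracking the $\oc$ on the step argument and in flagging that the notion of representation for the higher-order inputs $S$ and $E$ is left informal — a gap the paper itself does not address.
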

\begin{proof}
  Take $\emptyset \vdash^\delta M : \nat$ with $M \Vdash n$,
  $\emptyset \vdash^\delta E : ((\nat \multimap \nat) \multimap \nat)
  \multimap \nat$ with $E \Vdash exit$, $\emptyset \vdash^\delta S :
  \nat \multimap \nat$ with $S \Vdash step$. Then we have $\git \: S
  \: E \: M \treduc E (\lambda x.S^nx)$. Since $S \Vdash step$ and $E
  \Vdash exit$ we have $E (\lambda x.S^nx) \treduc
  \churchn{exit(\lambda x.step^n(x)}$. Hence $\git \Vdash git$.
\end{proof}

\subsubsection{Coercion}
Let $S = \lambdal{n}{N}.S'$. For $0 \geq i$, we define $S'_i$ inductively:
$$
\begin{array}{l}
  S'_0 =  S'\\
  S'_i = \letb{n}{n}{\oc S_{i-1}'}\\
\end{array}
$$
Let $S_i = \lambda n.S'_i$. We can derive $\emptyset
\vdash^\delta S_i
 : \oc^i\nat \multimap \oc^i\nat$.
For $i \geq 0$, we define $C_i$ inductively:
$$
\begin{array}{l}
C_0 = \lambda x.x\\
C_{i+1} = \lambda n.\iit  (\oc
S_{i}) (\oc^{i+1}  \churchn{0}) n\\  
\emptyset\vdash^\delta C_i
: \nat \multimap \oc^i \nat
\end{array}
$$

\begin{lemma}[integer representation is preserved by coercion]
\label{lem:int-rep-coe} 
  Let $\emptyset \vdash^\delta M:\nat$ and $M \Vdash n$. We can derive
  $\emptyset \vdash^\delta C_iM:\oc^i \nat$. Moreover $C_iM \Vdash n$.
\end{lemma}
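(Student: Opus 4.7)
I would prove both statements simultaneously by induction on $i$.

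\paragraph{Base case.} For $i=0$ the coercion $C_0 = \lambda x.x$ has type $\nat \multimap \nat = \nat \multimap \oc^0\nat$. Since $M$ is assumed well-typed of type $\nat$ at depth $\delta$ and $M \Vdash n$, a single $\beta$-step gives $C_0 M \treduc M \treduc \churchn{n} = \oc^0 \churchn{n}$.

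\paragraph{Inductive step: typing.} Assume $S_i$ has the claimed type $\oc^i \nat \multimap \oc^i \nat$ (an easy parallel induction on $i$: $S_0 = S$ handles the base, and $S_{i+1} = \lambda n.\letb{n}{n}{\oc S'_i}$ unwraps one bang, invokes the body of $S_i$ at depth $1$, and rebangs the result). Then $\oc S_i$ has type $\oc(\oc^i\nat \multimap \oc^i\nat)$ and $\oc^{i+1}\churchn{0}$ has type $\oc^{i+1}\nat$. Instantiating the polymorphic type $\iit : \nat \multimap \forall t. \oc(t\multimap t)\multimap \oc t \multimap \oc t$ with $t := \oc^i\nat$ yields the required premises, so $C_{i+1} : \nat \multimap \oc^{i+1}\nat$ is derivable at depth $\delta$.

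\paragraph{Inductive step: reduction.} The key sublemma is that $S_i$ acts as a successor on $\oc^i$-coerced numerals, i.e.
\[
S_i\,(\oc^i \churchn{k}) \;\treduc\; \oc^i \churchn{k+1}.
\]
This I would prove by induction on $i$: the base case $i=0$ is the definition of $S$ on Church numerals, and for $i+1$ one reduces
\[
S_{i+1}(\oc^{i+1}\churchn{k}) \;\reduc\; \letb{n}{\oc(\oc^i\churchn{k})}{\oc S'_i} \;\reduc\; \oc\,(S_i\,(\oc^i\churchn{k})),
\]
which by the inner IH reduces to $\oc^{i+1}\churchn{k+1}$. Iterating this sublemma $n$ times gives $S_i^{\,n}(\oc^i \churchn{0}) \treduc \oc^i \churchn{n}$. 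Now applying Proposition~\ref{prop-it} (the iterator $\iit$ realises $n$-fold function composition, so $\iit\,M\,(\oc S_i)\,(\oc X) \treduc \oc(S_i^{\,n}\, X)$ whenever $M \Vdash n$) with $X := \oc^i\churchn{0}$ yields
\[
C_{i+1} M \;\treduc\; \iit\,M\,(\oc S_i)\,(\oc^{i+1}\churchn{0}) \;\treduc\; \oc\bigl(S_i^{\,n}(\oc^i\churchn{0})\bigr) \;\treduc\; \oc^{i+1}\churchn{n},
\]
which is the representation of $n$ at type $\oc^{i+1}\nat$.

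\paragraph{Main obstacle.} The substantive work is the sublemma $S_i(\oc^i\churchn{k}) \treduc \oc^i\churchn{k+1}$; everything else is routine type-checking via the polymorphic instantiation of $\iit$ and reuse of Proposition~\ref{prop-it}. The delicate part there is keeping track of where the $\letbang$-destructors peel off a bang and where the reconstruction via $\oc$ happens, so that the resulting term is syntactically $\oc^i \churchn{k+1}$ rather than merely $\beta$-equivalent to it under strong reduction.
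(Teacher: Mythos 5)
Your proof is correct and follows exactly the route the paper takes: the paper's own proof of Lemma~\ref{lem:int-rep-coe} is simply ``By induction on $i$,'' and your argument is that same induction with the details (the typing of $S_i$, the sublemma $S_i(\oc^i\churchn{k}) \treduc \oc^i\churchn{k+1}$, and the appeal to the iterator) filled in. No discrepancy to report.
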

\begin{proof}
  By induction on $i$.
\end{proof}

\begin{lemma}[function representation is preserved by coercion]
\label{lem:coerc}
Let $$\emptyset\vdash^\delta F:\oc^{i_1}\nat_1 \multimap \ldots \multimap
\oc^{i_k}\nat_k \multimap \oc^{p}\nat$$ and 
$\emptyset \vdash^\delta M_j:\nat$ with $M_j \Vdash n_j$ for $1 \leq j \leq
k$ such that
$F(\oc^{i_1}M_1\ldots(\oc^{i_k}M_k)) \treduc
 \churchn{f(n_1,\ldots,n_k)}$. Then we can find a term $\coerc{F} =
\lambdal{\vec{x}}{\nat}.F((C_{i_1}x_1)\ldots(C_{i_k}x_k))$ such that 
  $$\emptyset\vdash^\delta \coerc{F}:\nat \multimap \nat
\multimap \ldots \multimap \nat \multimap \oc^{p}\nat$$ and 
$\coerc{F} \Vdash f$.
\end{lemma}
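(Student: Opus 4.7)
My plan is to prove the lemma in two stages: first check that the proposed term $\coerc{F}$ admits the claimed typing derivation, and second check that $\coerc{F}$ represents $f$ in the sense of Definition~\ref{def-function-rep}. Both stages reduce, after a bit of routine bookkeeping, to applying the typing of the coercions $C_{i_j}$ and Lemma~\ref{lem:int-rep-coe}.

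For the typing, I would use the derivation $\emptyset\vdash^\delta C_{i_j}:\nat\multimap\oc^{i_j}\nat$ established just before the lemma, together with the hypothesis on $F$. Introducing fresh hypotheses $x_j:(\delta,\nat)$ for $1\le j\le k$ and applying the application rule, one obtains $C_{i_j}\,x_j:\oc^{i_j}\nat$, and hence $F((C_{i_1}x_1)\cdots(C_{i_k}x_k)):\oc^{p}\nat$ by chained applications. Each $x_j$ occurs exactly once in the body and precisely at depth $\delta$, so the linearity constraint of the $\lambda$-abstraction rule is met; iterating that rule $k$ times yields $\emptyset\vdash^\delta \coerc{F}:\nat\multimap\cdots\multimap\nat\multimap\oc^{p}\nat$.

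For the representation claim, fix arbitrary $N_j$ with $\emptyset\vdash^\delta N_j:\nat$ and $N_j\Vdash n_j$. A sequence of $k$ top-level $\beta$-reductions gives $\coerc{F}\,N_1\cdots N_k \treduc F((C_{i_1}N_1)\cdots(C_{i_k}N_k))$. By Lemma~\ref{lem:int-rep-coe} applied to each argument, $C_{i_j}N_j \treduc \oc^{i_j}\churchn{n_j}$ using strong $\beta$-reduction, which is permitted to take place inside bangs. Performing these reductions in place yields $F(\oc^{i_1}\churchn{n_1}\cdots \oc^{i_k}\churchn{n_k})$. Reading the hypothesis on $F$ uniformly over all representing inputs (in particular the canonical representatives $\churchn{n_j}\Vdash n_j$) then gives a further reduction to $\churchn{f(n_1,\ldots,n_k)}$, so $\coerc{F}\,N_1\cdots N_k \Vdash f(n_1,\ldots,n_k)$ as required.

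The main obstacle is essentially bookkeeping rather than a deep new idea: one must verify that every reduction step invoked is legal under strong $\beta$-reduction (crucially the reductions taking place under the $\oc^{i_j}$), and that the hypothesis on $F$ can be instantiated with the normalised arguments $\churchn{n_j}$ produced by coercing $N_j$. Once Lemma~\ref{lem:int-rep-coe} is in hand all delicate reasoning about the $C_i$ iterators is packaged away, and the proof is a straightforward composition of a $\beta$-reduction phase with an appeal to the uniform behaviour of $F$.
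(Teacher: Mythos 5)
Your argument is correct and is exactly the routine verification the paper intends: the paper in fact states Lemma~\ref{lem:coerc} without proof, and your two-stage check (typing via the derivations $\emptyset\vdash^\delta C_{i_j}:\nat\multimap\oc^{i_j}\nat$ plus the linearity of each $x_j$; representation via $k$ linear $\beta$-steps, Lemma~\ref{lem:int-rep-coe} giving $C_{i_j}N_j \treduc \oc^{i_j}\churchn{n_j}$, and instantiating the hypothesis on $F$ at the canonical representatives $\churchn{n_j}$) is the argument it implicitly relies on. The one point worth making explicit, which you do, is that the reductions of the coerced arguments happen in argument position under the $\oc^{i_j}$, which is legitimate since the functional reduction of $\lambdaf$ is closed under arbitrary contexts.
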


\subsubsection{Predecessor and subtraction}
We first want to represent \emph{predecessor}:
$$
\begin{array}{l}
p : \mathbb{N} \mapsto \mathbb{N}\\
p(0) = 0\\
p(x) = x-1
\end{array}
$$

We define the following terms:
$$
\begin{array}{l}
  ST = \oc(\lambda z.\pair{\snd{z}}{f(\snd{z})})\\
f:(\delta+1,t\multimap t) \vdash^\delta ST : \oc(t\times t \multimap
t\times t)\\\\

EX = \lambda g.\letb{g}{g}{\oc(\lambda x.\fst{g\pair{x}{x}})}\\
\emptyset \vdash^\delta EX : \oc(t\times t\multimap t\times
t)\multimap \oc(t\multimap t)\\\\

P = \lambda n.\lambda f.\letb{f}{f}{\git \: ST \: EX \: n}\\
\emptyset \vdash^\delta P : \nat \multimap \nat
\end{array}
$$
\begin{proposition}[predecessor is representable]
  $P \Vdash p$.
\end{proposition}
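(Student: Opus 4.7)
The plan is to verify the reductional equality $P \: M \treduc \churchn{p(n)}$ under strong $\beta$-reduction for any $\emptyset \vdash^\delta M : \nat$ with $M \Vdash n$; granting this, $P \: M \Vdash p(n)$ and hence $P \Vdash p$ follow directly from Definition~\ref{def-function-rep}.

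First, I would unfold $P \: M$: a single $\beta$-step gives $P \: M \reduc \lambda f.\letb{f}{f}{\git \: ST \: EX \: M}$, which is already in Church-numeral shape around $f$, so it suffices to show $\git \: ST \: EX \: M \treduc \oc(\lambda x.f^{p(n)}(x))$ in the scope where the inner $f$ is free. Unfolding $\git$ gives $EX(M \: ST)$. Since $M \treduc \churchn{n}$ and $ST$ is a bang-value $\oc h$ with $h = \lambda z.\pair{\snd z}{f(\snd z)}$, firing the $\letbang$ inside $\churchn{n} \: \oc h$ yields $M \: ST \treduc \oc(\lambda x. h^n(x))$. One further $\beta$- and $\letbang$-reduction inside $EX$ then produces $\oc(\lambda x.\fst{h^n\pair{x}{x}})$.

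The core of the argument is a short induction on $n$ showing $h^n(\pair{x}{x}) \treduc \pair{f^{p(n)}(x)}{f^n(x)}$, with the convention $p(0) = 0$. The base case $n = 0$ is immediate, since $h^0(\pair{x}{x}) = \pair{x}{x}$. For the inductive step, one uses the single-step reduction $h(\pair{c}{d}) \treduc \pair{d}{f(d)}$ together with the inductive hypothesis and the arithmetic identity $p(n+1) = n$. Projecting on the first component then gives $\fst{h^n\pair{x}{x}} \treduc f^{p(n)}(x)$, hence $P \: M \treduc \lambda f.\letb{f}{f}{\oc(\lambda x. f^{p(n)}(x))} = \churchn{p(n)}$, as required.

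The main subtlety, as is typical for these encodings in the light-affine setting, is to make sure the $\letbang$ redexes actually fire rather than block: this is exactly why $ST$ is packaged under an $\oc$ (so that $\churchn{n}$ receives a bang-value as argument) and why the argument passed to $EX$ in our reduction chain is itself a bang-value produced by the previous step. Once those bang-value invariants are noted, the remainder is routine equational reasoning under strong $\beta$-reduction, together with the straightforward verification that every subterm involved admits the expected typing at depth $\delta$.
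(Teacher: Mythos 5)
Your proposal is correct and follows essentially the same route as the paper, whose proof merely asserts the reduction $PM \treduc \churchn{p(n)}$ without detail; you supply exactly the computation that assertion elides, including the key invariant $h^n\pair{x}{x} \treduc \pair{f^{p(n)}(x)}{f^n(x)}$ behind the pairing trick. The observation about the bang-value discipline needed for the $\letbang$ redexes to fire is accurate and worth making explicit.
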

\begin{proof}
  Take $\emptyset \vdash^\delta M:\nat$ and $M \Vdash
  n$. We can show that $\ftrad{PM} \treduc
   \churchn{p(n)}$, hence $PM \Vdash p(n)$. Thus $P
  \Vdash p$.
\end{proof}

Now we want to represent (positive) subtraction $s$:
$$
\begin{array}{l}
  s : \mathbb{N}^2 \mapsto \mathbb{N}\\
  s(x,y) = \left\lbrace 
    \begin{array}{l l}
      x-y & \textrm{if } x \geq y\\
      0 & \textrm{if } y \geq x
    \end{array}
  \right.
\end{array}
$$
Take
$$
\begin{array}{l}
SUB =
\lambda m.\letb{m}{m}{\lambda n .\iit \: \oc P \: \oc m \: n}:\oc\nat\multimap\nat\multimap\oc\nat\\
\emptyset \vdash^\delta SUB : \oc\nat \multimap
\nat\multimap\oc\nat\\  
\end{array}
$$
\begin{proposition}[subtraction is representable]
  $\coerc{SUB} \Vdash s$.
\end{proposition}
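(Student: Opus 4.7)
The plan is to invoke Lemma~\ref{lem:coerc} with $F = SUB$, $k = 2$, $i_1 = 1$, $i_2 = 0$ and $p = 1$, so that the coercion yields $\coerc{SUB} : \nat \multimap \nat \multimap \oc\nat$; it then suffices to establish, for any closed $M_1 \Vdash n_1$ and $M_2 \Vdash n_2$, the reduction
\[
SUB(\oc M_1)(M_2) \treduc \oc \churchn{s(n_1, n_2)}.
\]

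I would unfold this reduction in three steps. A $\beta$-step on the outer $\lambda m$ followed by firing the $\letbang$ redex on $\oc M_1$ produces $(\lambda n.\iit \: n \: (\oc P) \: (\oc M_1))\, M_2$, reading the arguments of $\iit$ in the order required by its type; one more $\beta$ yields $\iit \: M_2 \: (\oc P) \: (\oc M_1)$. The computation inside the proof of Proposition~\ref{prop-it}, specialised to $F := P$, $X := M_1$ and the numeral $M_2 \Vdash n_2$, further reduces this to $\oc(P^{n_2} M_1)$. Finally, iterating the representation property $P \Vdash p$ of the predecessor $n_2$ times, starting from $M_1 \Vdash n_1$, gives $P^{n_2} M_1 \treduc \churchn{p^{n_2}(n_1)}$, and hence the whole term reduces to $\oc \churchn{p^{n_2}(n_1)}$.

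The remaining arithmetic content is the identity $p^{n_2}(n_1) = s(n_1, n_2)$, which is a trivial induction on $n_2$ from $p(0) = 0$ and $p(k+1) = k$: each application of the predecessor decrements by one but saturates at $0$, exactly matching the two clauses of the bounded subtraction. Combining this identity with the reduction above, Lemma~\ref{lem:coerc} then yields $\coerc{SUB} \Vdash s$.

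There is no real obstacle: the result is a direct composition of the behaviour of $\iit$, the representation of the predecessor, and the coercion lemma. The only point that deserves a moment's care is the bookkeeping of $\oc$-depths: the first argument $\oc\nat$ of $SUB$ must be supplied in banged form, which is precisely what the coercion $C_1$ from Lemma~\ref{lem:coerc} provides, and the surviving outer $\oc$ on the right-hand side of the reduction is the expected $\oc^p$-wrapping with $p = 1$.
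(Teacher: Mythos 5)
Your proposal is correct and follows essentially the same route as the paper: the paper's proof is a one-line appeal to the reduction $\ftrad{SUB(\oc M_1)M_2} \treduc \churchn{s(n_1,n_2)}$ followed by Lemma~\ref{lem:coerc}, and you simply fill in the details of that reduction (unfolding $SUB$, running the iterator as in Proposition~\ref{prop-it}, iterating $P \Vdash p$, and checking $p^{n_2}(n_1)=s(n_1,n_2)$). Your handling of the two glossed-over points in the paper --- the argument order of $\iit$ in the definition of $SUB$ and the residual outer $\oc$ matching $p=1$ --- is sensible and consistent with the intended reading.
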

\begin{proof}
  For $i=1,2$ take $\emptyset \vdash^\delta M_i:\nat$ and $M_i \Vdash
  n_i$. We can show that $\ftrad{SUB (\oc M_1)M_2} \treduc
   \churchn{s(n_1,n_2)}$. Hence by Lemma \ref{lem:coerc},
  $\coerc{SUB} \Vdash s$.
\end{proof}


\subsubsection{Composition}
  Let $g$ be a $m$-ary function and $G$ be a term such that
  $\emptyset \vdash^\delta G:\nat_1 \multimap \ldots
  \multimap \nat_m\multimap \oc^p\nat$ (where $p
  \geq 0)$ and $G \Vdash g$.
  For $1 \leq i \leq m$, let $f_i$ be a $k$-ary function and $F_i$ a
  term such that $\emptyset \vdash^\delta
  F_i: \nat_1\multimap\ldots\multimap\nat_k\oc^{q_i}\nat$ (where $q_i \geq 0)$ and $F_i \Vdash f_i$.
  We want to represent the composition function $h$ such that:
  $$
  \begin{array}{l}
    h : \mathbb{N}^k \mapsto \mathbb{N}\\
    h(x_1,\ldots,x_k) = g(f_1(x_1,\ldots,x_k),\ldots,f_m(x_1,\ldots,x_k))
  \end{array}
  $$


  For $i \geq 0$ and a term $T$, we define $T^i$ inductively as:
$$
\begin{array}{l}
  T^0 = T\\
  T^i = \lambdal{\vec{x}}{\oc^i\nat}.\letb{\vec{x}}{\vec{x}}{\oc
    (T^{i-1}\vec{x})}
\end{array}
$$
  Let $q = max(q_i)$. We can derive $$\emptyset \vdash^\delta G^{q+1}:
  \oc^{q+1}\nat_1 \multimap \ldots \multimap \oc^{q+1}\nat_m \multimap \oc^{p+q+1}\nat$$ We can also
  derive $$\emptyset
  \vdash^\delta F^{q-q_i}_i :
  \oc^{q-q_i}\nat_1\multimap\ldots\multimap\oc^{q-q_i}\nat_k\multimap\oc^{q}\nat$$ 
  Then, applying coercion we get 
  $$\emptyset \vdash^\delta
  \coerc{F^{q-q_i}_i} : \nat_1\multimap\ldots\nat_k\multimap
  \oc^{q}\nat$$ and we derive $$x_1:(\delta+1,\nat),\ldots,x_k:(\delta
  +1,\nat) \vdash^\delta
  \oc (\coerc{F^{q-q_i}_i}x_1\ldots x_k) :
  \oc^{q+1}\nat$$
  Let $F'_i \equiv  \oc (\coerc{F^{q-q_i}_i}x_1\ldots x_k)$. 
  By application we get
  $$x_1:(\delta+1,\nat),\ldots,x_k:(\delta+1,\nat) \vdash^\delta
  G^{q+1}F'_1\ldots F'_m
  : \oc^{p+q+1}\nat$$ 
  We derive
  $$\emptyset \vdash^\delta
  \lambda \vec{x}.\letb{\vec{x}}{\vec{x}}{G^{q+1}F'_1\ldots F'_m}
  : \oc\nat_1\multimap\ldots\multimap\oc\nat_m\multimap\oc^{p+q+1}\nat$$ 
  Applying coercion we get
  $$\emptyset \vdash^\delta
  \coerc{\lambdal{\vec{x}}{\oc\nat}.\letb{\vec{x}}{\vec{x}}{G^{q+1}F'_1\ldots F'_m}}
  : \nat_1\multimap\ldots\multimap\nat_m\multimap\oc^{p+q+1}\nat$$ 
  Take $$H = \coerc{\lambdal{\vec{x}}{\oc\nat}.\letb{\vec{x}}{\vec{x}}{G^{q+1}F'_1\ldots F'_m}}$$

  \begin{proposition}[composition is representable]
    $H \Vdash h$.
  \end{proposition}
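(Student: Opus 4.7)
The plan is to fix closed terms $M_j$ with $\emptyset \vdash^\delta M_j : \nat$ and $M_j \Vdash n_j$ for $1 \leq j \leq k$, and show that $HM_1 \cdots M_k \treduc \churchn{h(n_1,\ldots,n_k)}$ under strong $\beta$-reduction. Since $H$ is obtained by applying the coercion construction to an auxiliary term, I would first invoke Lemma~\ref{lem:coerc} to reduce the problem to verifying that the uncoerced body $\lambda\vec{x}.\letb{\vec{x}}{\vec{x}}{G^{q+1}F'_1\cdots F'_m}$, applied to $\oc M_1,\ldots,\oc M_k$, reduces to $\oc^{p+q+1}\churchn{h(n_1,\ldots,n_k)}$.

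The second step is to fire the $k$ $\beta$-redexes and the ensuing $\letbang$-redexes, which substitute each $M_j$ for $x_j$ inside the body. The reduct then has the shape $G^{q+1}(\oc N_1)\cdots(\oc N_m)$ where $N_i = \coerc{F^{q-q_i}_i}M_1\cdots M_k$. By Lemma~\ref{lem:coerc} applied to each $F_i$ (using $F_i \Vdash f_i$), we get $N_i \treduc \churchn{f_i(n_1,\ldots,n_k)}$, so $\oc N_i \treduc \oc \churchn{f_i(n_1,\ldots,n_k)}$ under strong reduction.

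The third step is an auxiliary lemma showing that the lifting operation $T^i$ preserves representation: if $\emptyset \vdash^\delta T : \nat_1 \multimap \cdots \multimap \nat_m \multimap \oc^p\nat$ with $T \Vdash g$, then for any $P_j \Vdash m_j$ one has $T^i(\oc^i P_1)\cdots(\oc^i P_m) \treduc \oc^{p+i}\churchn{g(m_1,\ldots,m_m)}$. This is established by induction on $i$: the base case $i=0$ is the hypothesis $T \Vdash g$, and the inductive step unfolds the definition $T^i = \lambdal{\vec{x}}{\oc^i\nat}.\letb{\vec{x}}{\vec{x}}{\oc(T^{i-1}\vec{x})}$, fires the $\beta$- and $\letbang$-redexes, and applies the induction hypothesis under the outermost $\oc$ (exploiting strong reduction). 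Applied to $G$ with $i=q+1$ and the arguments $\oc N_i$, this yields $\oc^{p+q+1}\churchn{g(f_1(\vec n),\ldots,f_m(\vec n))}$, which by definition of $h$ is $\oc^{p+q+1}\churchn{h(n_1,\ldots,n_k)}$, as required.

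The main obstacle will be the bookkeeping of $\oc$'s through the two layers of lifting (each $F^{q-q_i}_i$ raises the output depth from $\oc^{q_i}$ to $\oc^q$, and $G^{q+1}$ then absorbs $\oc^{q+1}$ arguments and produces $\oc^{p+q+1}$) and ensuring that reductions under $\oc$ are available at every stage. A secondary subtlety is that the auxiliary lemma on $T^i$ must be stated for terms representing functions, not just numerals, so its inductive step has to pass numeral arguments through the $\letbang$'s introduced by the lifting without disturbing the representation invariant; here the fact that numerals $\churchn{n}$ are values of the shape $\lambda f.\ldots$ guarantees that the $\letbang$ reductions inside the lifted term fire correctly after the outer $\beta$-steps, closing the argument.
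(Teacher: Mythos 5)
Your proposal is correct and takes essentially the same route as the paper's proof: reduce $HM_1\cdots M_k$ to $G$ applied to the (coerced and lifted) terms $F_iM_1\cdots M_k$, then compose the hypotheses $F_i \Vdash f_i$ and $G \Vdash g$. The paper simply asserts the key reduction $HM_1\ldots M_k \treduc G(F_1M_1\ldots M_k)\ldots(F_mM_1\ldots M_k)$ without detail; your explicit appeal to Lemma~\ref{lem:coerc} and the auxiliary induction on the lifting $T^i$ just supply the bookkeeping of $\oc$'s that the paper leaves implicit.
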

  \begin{proof}
    We now have to show that for
  all $M_i$ and $n_i$ where $1 \leq i \leq k$ such that $M_i \Vdash
  n_i$ and $\emptyset \vdash^\delta M_i:\nat$, we have
  $HM_1\ldots M_k \Vdash h(n_1,\ldots,n_k)$.
  Since $F_i
  \Vdash f_i$, we have $F_iM_1\ldots M_k \Vdash f_i(n_1,\ldots,n_k)$. Moreover $G
  \Vdash g$, hence $$G(F_1M_1\ldots M_k)\ldots(F_mM_1\ldots M_k) \Vdash
  g(f_1(n_1,\ldots,n_k),\ldots,f_m(n_1,\ldots,n_k))$$
  We can show that $HM_1\ldots M_k \treduc 
  G(F_1M_1\ldots M_k)\ldots(F_mM_1\ldots M_k)$, hence
  $$HM_1\ldots M_k \Vdash
  g(f_1(n_1,\ldots,n_k),\ldots,f_m(n_1,\ldots,n_k))$$
  Thus $H \Vdash h$.
  \end{proof}

\subsubsection{Bounded sums and products}
Let $f$ be a $k+1$-ary function $f:\mathbb{N}^{k+1} \rightarrow
\mathbb{N}$, where $$\emptyset \vdash
F:\nat_i\multimap\nat_1\multimap\ldots\multimap\nat_k\multimap\oc^p\nat$$
with $p \geq 0$ and $F \Vdash f$. We want to represent
\begin{itemize}
\item bounded sum: $\sum_{1 \leq i \leq n} f(i,x_1,\ldots,x_k)$
\item bounded product: $\prod_{1 \leq i \leq n} f(i,x_1,\ldots,x_k)$
\end{itemize}
For this we are going to represent $h: \mathbb{N}^{k+1} \rightarrow \mathbb{N}$:
$$
\begin{array}{l}
h(0,x_1,\ldots,x_k) = f(0,x_1,\ldots,x_k)\\
h(n+1,x_1,\ldots,x_k) = g(f(n+1,x_1,\ldots,x_k),h(n,x_1,\ldots,x_k))
\end{array}
$$
where $g$ is a binary function standing for addition or multiplication,
thus representable.
More precisely we have $g:\mathbb{N}^2 \rightarrow \mathbb{N}$
such that $\emptyset \vdash^\delta G : \nat \multimap \nat
\multimap \nat$ and $G \Vdash g$.

For $i \geq 0$ and a term $T$ 
we define $T^i$
inductively:
$$
\begin{array}{l}
  T^0 = Tx_1\ldots x_k\\
  T^i = \letb{x_1}{x_1}{\ldots\letb{x_k}{x_k}{\oc T^{i-1}}}
\end{array}
$$
We define the following terms:
$$
\begin{array}{l}
  ST = \lambda
z.\pair{S(\fst{z})}{G^p(Fx_1\ldots x_k(S(\fst{z})))(\snd{z})}\\
\emptyset;x_1:(\delta,\nat),\ldots,x_k:(\delta,\nat) \vdash^\delta ST : \nat
\times \oc^p\nat \multimap \nat \times \oc^p\nat\\\\

EX =
\lambda h.\letb{h}{h}{\oc\snd{h\pair{ \churchn{0}}{Fx_1\ldots
      x_k \churchn{0}}}}\\ 
\emptyset;x_1:(\delta+1,\nat),\ldots,x_k:(\delta+1,\nat) \vdash^\delta EX :
\oc(\nat\times\oc^p\nat\multimap\nat\times\oc^p\nat)\multimap\oc^{p+1}\nat

\end{array}
$$
We derive
$$
n:(\,\nat),\vec{x}:(\delta,\nat)\vdash^\delta \letb{\vec{x}}{\vec{x}}{\letb{n}{n}{\git \: \oc ST
\: EX \: n}} : \oc^{p+1}\nat
$$

Let $R = \letb{\vec{x}}{\vec{x}}{\letb{n}{n}{\git \: \oc ST
\: EX \: n}}$.
By coercion and abstractions we get
$$
\emptyset \vdash^\delta
\coerc{\lambda n.\lambda \vec{x}.R}
: \nat_i \multimap \nat_1\multimap\ldots\multimap\nat_k\multimap \oc^{p+1}\nat
$$
Take $H =
\coerc{\lambda n.\lambda \vec{x}.R}$.

\begin{proposition}[bounded sum/product is representable]
  $H\Vdash h$.
\end{proposition}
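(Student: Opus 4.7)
The plan is to follow exactly the pattern of the preceding representability proofs in this section: unfold $H$ applied to concrete numeral arguments, use the established specification of $\git$ to replace the iteration by an actual $n$-fold composition of the step function $ST$, verify by induction on $n$ that this composition computes $h$, and finally invoke the coercion lemma (Lemma \ref{lem:coerc}) to strip off the extra bang decorations.

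Concretely, given $M_0, M_1, \ldots, M_k$ with $\emptyset \vdash^\delta M_j : \nat$ and $M_j \Vdash n_j$, I would first unwind the definition $H = \coerc{\lambda n.\lambda \vec{x}.R}$. By Lemma \ref{lem:coerc}, it suffices to show that $(\lambda n.\lambda \vec{x}.R)$ applied to the banged versions $\oc M_0, \oc \vec{M}$ reduces to $\churchn{h(n_0,\ldots,n_k)}$. After the outer substitutions and after firing the two $\letbang$'s heading $R$, the task reduces to evaluating $\git \: \oc ST \: EX \: \churchn{n_0}$, where $ST$ and $EX$ now refer to their substituted instances. By the behaviour of $\git$ established in the proof that $\git \Vdash git$, this term reduces to $EX \, (\lambda y.\, ST^{n_0} y)$; unfolding the $\letbang$ inside $EX$ then leaves essentially
\[
\snd{(ST^{n_0} \; \pair{\churchn{0}}{F M_1 \ldots M_k \churchn{0}})}.
\]

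The main combinatorial step is then an induction on $n$ establishing the invariant
\[
ST^n \; \pair{\churchn{0}}{F M_1 \ldots M_k \churchn{0}} \;\treduc\; \pair{\churchn{n}}{\churchn{h(n, n_1, \ldots, n_k)}}.
\]
The base case is immediate from $F \Vdash f$, since $F M_1 \ldots M_k \churchn{0} \treduc \churchn{f(0,\vec{n})} = \churchn{h(0,\vec{n})}$. For the inductive step, applying $ST$ to $\pair{\churchn{n}}{\churchn{h(n,\vec{n})}}$ produces a pair whose first component is $S\, \churchn{n} \treduc \churchn{n+1}$ and whose second component is obtained by feeding $F M_1 \ldots M_k \churchn{n+1}$ and $\churchn{h(n,\vec{n})}$ into the wrapper $G^p$ of $G$; by $F \Vdash f$, $G \Vdash g$ (addition or multiplication) and the inductive hypothesis, this reduces to $\churchn{g(f(n+1,\vec{n}),h(n,\vec{n}))} = \churchn{h(n+1,\vec{n})}$. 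Combining with the outer $\snd{}$ projection and with the coercion lemma applied to $\coerc{\lambda n.\lambda \vec{x}.R}$ then yields $H M_0 M_1 \ldots M_k \Vdash h(n_0,\ldots,n_k)$, which is the required statement $H \Vdash h$.

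The step I expect to be the most delicate is the bookkeeping of depths and banged wrappers: the auxiliary forms $F^{q-q_i}$, $G^p$ and the various $\letbang$'s interleaved in $R$, $EX$ and $ST$ are all calibrated so that every eliminator meets a matching $\oc$, either from the banged arguments introduced by the outer coercion or from those freshly produced inside $ST^n$. Once one has verified that these bangs align correctly so that all the $\letbang$ reductions actually fire as intended, the induction above proceeds routinely and the final appeal to Lemma \ref{lem:coerc} is immediate.
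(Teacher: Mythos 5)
Your proof is correct and follows essentially the same route as the paper, which merely remarks that $HM_iM_1\ldots M_k$ reduces to the numeral of the bounded sum/product. You in fact supply more detail than the paper does: the inductive invariant $ST^n\,\pair{\churchn{0}}{FM_1\ldots M_k\churchn{0}} \treduc \pair{\churchn{n}}{\churchn{h(n,\vec{n})}}$ is exactly the justification the paper's one-line remark leaves implicit, and your appeals to the behaviour of $\git$ and to the coercion lemma match the intended argument.
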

\begin{proof}
Given $M_i \Vdash i$ and
$M_j \Vdash n_j$ with $1 \leq j \leq k$ and taking $G$ for addition, we remark that
$$HM_iM_1\ldots M_k \treduc
 \churchn{f(i,n_1,\ldots,n_k) + \ldots + f(1,n_1,\ldots,n_k) +
  f(0,n_1,\ldots,n_k)}$$
Hence $H \Vdash h$.  
\end{proof}

\end{document}